\documentclass[twocolumn]{revtex4-2}
\usepackage{iceberg}
\usepackage{nicefrac}
\usepackage{amsmath}
\usepackage{thmtools}

\usepackage{tikz}
\usepackage{xcolor}
\usepackage{graphicx}

\usepackage{booktabs}
\usepackage{multirow}

\usepackage[caption=false]{subfig}

\definecolor{mygreen}{RGB}{40,90,180}
\definecolor{myblue}{RGB}{225, 85, 35}
\definecolor{myaux}{RGB}{20,140,60}

\makeatletter

\renewenvironment{algorithm}[1][]{%
  \par\addvspace{\bigskipamount}\noindent
  \minipage{\linewidth}%
  \hrule height.8pt depth0pt \kern4pt%
  \refstepcounter{algorithm}%
  \def\caption##1{\par\noindent\textbf{Algorithm~\thealgorithm:}~##1\par\kern4pt\hrule\kern4pt}%
}{%
  \kern4pt\hrule height.8pt\relax
  \endminipage\par\addvspace{\bigskipamount}%
}
\makeatother
\crefname{algorithm}{algorithm}{algorithms}
\Crefname{algorithm}{Algorithm}{Algorithms}

\notestrue

\DeclareMathOperator{\cone}{\text{cone}}

\DeclareMathOperator{\Hom}{\text{Hom}}
\DeclareMathOperator{\kk}{\mathbb{K}}

\newcommand{\mm}{\mathfrak{m}}

\DeclareMathOperator{\ls}{\mathrm{LS}}

\makeatletter
\AddToHook{cmd/appendix/before}{\def\cref@section@alias{appendix}\def\cref@subsection@alias{appendix}}
\makeatother

\begin{document}
\title{Lifted surgery: Fast processing with QLDPC codes}
\author{Lucas Berent}
\affiliation{Iceberg Quantum}
\author{Lawrence Z.\ Cohen}
\affiliation{Iceberg Quantum}
\author{Armanda O.\ Quintavalle}
\affiliation{Iceberg Quantum}

\author{{\scriptsize\texttt{\{lucas,larry,armanda\}@iceberg-quantum.com}}}

\begin{abstract}
Quantum low-density parity-check (QLDPC) codes are a leading candidate for achieving low-overhead fault-tolerant quantum computing.
However, the time overhead of logical operations in QLDPC codes remains a key challenge.
Code surgery, a space-efficient technique for fault-tolerant logical measurements, incurs this overhead through repeated rounds of syndrome measurement.
We introduce lifted surgery, a method for fast and parallel surgery on Abelian group algebra codes that maintains the low physical overhead that makes QLDPC codes attractive.
Lifted surgery preserves the symmetries of the underlying code, making searches for large instances tractable and offering a natural route towards efficient hardware implementations.
We utilise block decompositions and techniques from commutative algebra to characterise lifted surgery, investigate well-behaved subfamilies, and construct explicit examples.
In particular, we present quantum radial codes with parameters $\llbracket 90, 8, 10 \rrbracket$ and $\llbracket 198, 8, 16 \rrbracket$ for which surgery is fast, parallel, and addressable, allowing arbitrary sets of independent logical operators of the same Pauli type to be measured in a single round of syndrome extraction.
We benchmark lifted surgery under circuit-level depolarising noise and, for the $\llbracket 90, 8, 10 \rrbracket$ code, find logical performance comparable to standard code surgery while requiring ten times fewer rounds of syndrome measurement.
By combining speed and parallelism, lifted surgery offers a practical route towards low-overhead fault-tolerant quantum computing.
\end{abstract}

\maketitle

\section{Introduction}
Fault-tolerant architectures will be essential for utility-scale quantum computing.
By using quantum error correction to redundantly encode logical information, they allow us to arbitrarily suppress errors and thereby unlock useful quantum algorithms.
This protection, however, comes with substantial space-time overheads.
For example, architectures based on the surface code would require around $1$ million physical qubits to break RSA-$2048$~\cite{gidney_how_2025}.
Quantum low-density parity-check (QLDPC) codes have emerged as promising candidates for reducing this overhead~\cite{breuckmann_quantum_2021,webster_explicit_2025,bravyi_highthreshold_2024,bhardwaj_highrate_2026}.
By exploiting long-range connectivity, QLDPC codes achieve high encoding rates while retaining good error correction capabilities.
On platforms that can realise the required form of connectivity, QLDPC-based architectures could therefore enable scalable, low-overhead fault-tolerant quantum computation~\cite{webster_pinnacle_2026,yoder_tour_2025,cain_shor_2026}.
For example, the Pinnacle architecture only requires $100,000$ physical qubits to break RSA-$2048$.

Storing logical information coherently is not sufficient for performing fault-tolerant algorithms; we also need methods for performing computation on the encoded information.
A common approach for QLDPC codes is to compile a quantum algorithm from a universal gate set into a sequence of logical Pauli measurements~\cite{litinski_game_2019,webster_pinnacle_2026,he_extractors_2025}.
These measurements can be performed using code surgery~\cite{cohen_lowoverhead_2022,ide_faulttolerant_2025,williamson_lowoverhead_2026,he_extractors_2025}, which deforms the error-correcting code using an auxiliary system such that syndrome measurements on the deformed code return the value of the desired logical operator.

This has a significant time overhead, since noisy syndrome measurements require repeating the procedure a number of times proportional to the distance of the code to ensure fault tolerance.
\emph{Fast surgery} removes this scaling, enabling constant-time fault-tolerant logical measurements by using an auxiliary system with additional properties that protect against syndrome measurement errors~\cite{baspin_fast_2025,cowtan_fast_2025}.
However, although straightforward constructions reduce the time cost, they incur a large qubit overhead and hence exhibit the same space-time overhead as standard code surgery schemes.
Practical constructions of auxiliary systems with the required properties remain unknown in general, apart from techniques specialised to hypergraph product codes~\cite{chang_constanttime_2026}.
In this work, we bridge this gap by giving explicit constructions of practical auxiliary systems for fast surgery on general Abelian group algebra codes, a widespread family that underpins several proposed QLDPC architectures~\cite{webster_pinnacle_2026,yoder_tour_2025,cain_shor_2026,tripier_faulttolerant_2026}.

\subsection{Overview of main results}
\begin{figure*}[th]
    \includegraphics[width=0.95\textwidth]{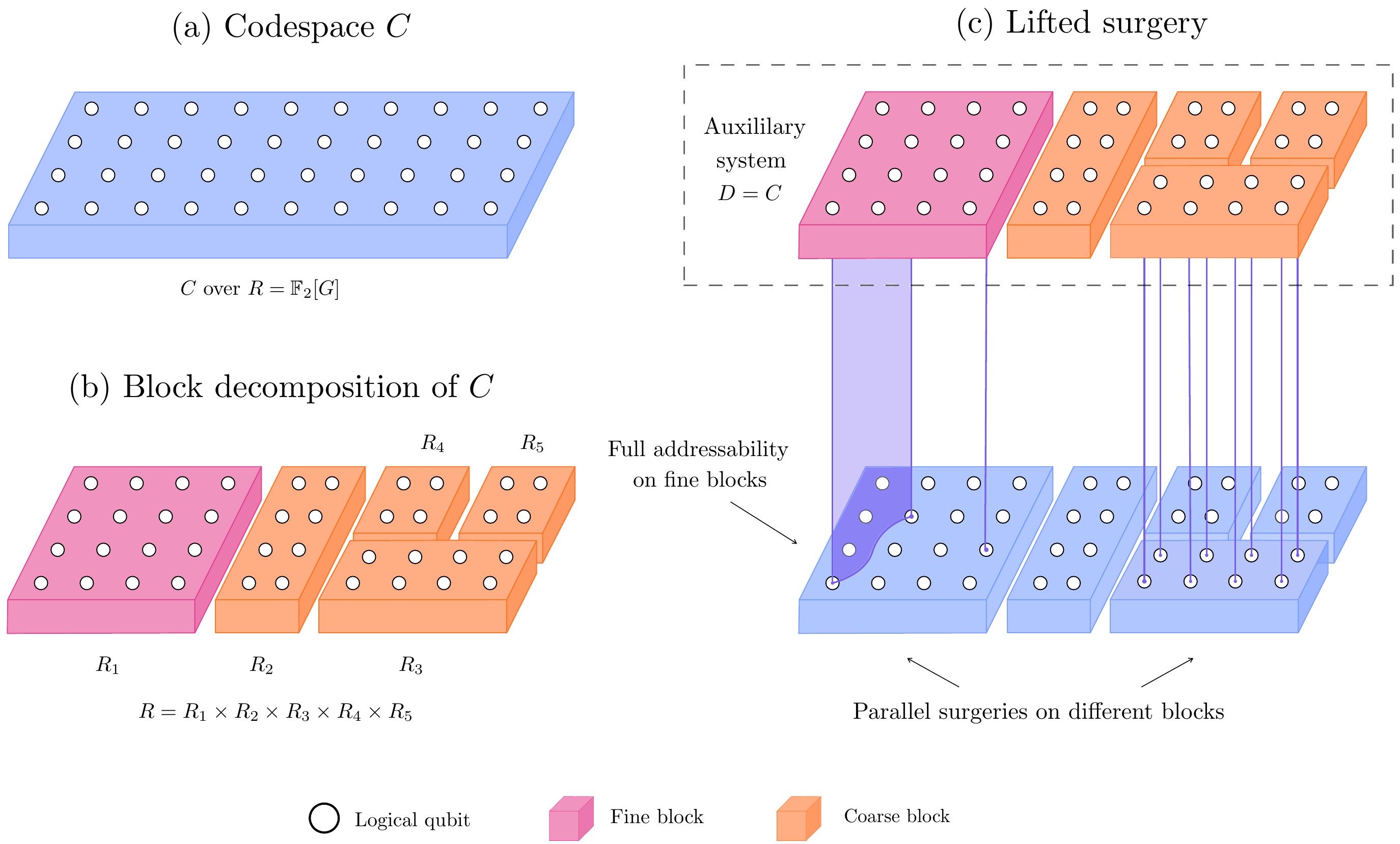}
    \caption{
        Graphical representation of lifted surgery. 
        (a) Codespace of $C_\bullet$ over the group algebra $R$ (blue).
        (b) The codespace decomposes into the same blocks as the group algebra $R$. 
        (c) The top row (magenta and orange) shows the auxiliary complex, a copy of $C_\bullet$.
        Different lifted surgeries can be combined to act in parallel on different blocks. 
        When the blocks are fields, as in the quantum radial code example, lifted surgery allows for single logical qubit addressability.
    }
    \label{fig:four-diagrams}
\end{figure*}
Our main idea is to exploit the algebraic symmetries already present in group algebra codes to construct practical fast surgery schemes.
We consider quantum codes obtained as lifts of chain complexes over $\Ft$-group algebras and begin with a complete description of their block structure.
The group algebra canonically decomposes into a product of local rings, and the codespace and every linear map over the group algebra decompose accordingly into independent blocks. 
This decomposition exposes the structure and symmetries of the logical subspace and provides the natural framework in which we define \emph{lifted surgery}. 
These central ideas are illustrated in~\Cref{fig:four-diagrams}. 

To construct fast surgery schemes given a group algebra code, we take an identical copy of the code as an auxiliary system and study the linear maps from the auxiliary system to the base code that yield ``good'' surgery schemes.
Crucially, the block structure turns this construction problem into a tractable computation.
Lifted surgery operations organise into equivalence classes according to the logical measurement they implement, allowing us to efficiently determine the possible lifted surgery operations for a given code and then search within each class for representatives with favourable QEC parameters, such as low check weight.

We then investigate structured subfamilies of lifted surgery operations.
We first introduce \emph{scalar surgery}, the simplest such family. 
For codes described by a Koszul complex, scalar surgery yields a merged code that is again described by a (higher-dimensional) Koszul complex.
This closure property is valuable in practice since the merged code inherits favourable properties, such as low-depth syndrome extraction circuits, from the base code.
Moreover, for codes that have a product structure, we introduce \emph{lifted product surgery}, which allows us to construct particularly sparse merged codes.

For quantum radial codes~\cite{scruby_highthreshold_2026}, lifted surgery achieves fast, parallel, and fully addressable logical measurements.
For these codes, we can construct lifted surgery maps for any subgroup of the logical Pauli group consisting entirely of $Z$-type (or entirely of $X$-type) operators.
For a $\llbracket 90, 8, 10 \rrbracket$ radial code, we optimise lifted surgery maps for all rank-1 and rank-2 $Z$-type measurements and in every case find distance-preserving maps with maximum check weight and qubit degree at most 11.
For a $\llbracket 198, 8, 16 \rrbracket$ radial code, we optimise all rank-1 $Z$-type measurements and find distance-preserving maps with maximum check weight and qubit degree at most 14.
We also construct lifted surgery maps for several instances of two-block group algebra codes.
These examples demonstrate that the systolic expansion conditions imposed on auxiliary systems in Refs.~\cite{baspin_fast_2025, cowtan_fast_2025} are sufficient but not necessary for ensuring distance preservation.

Furthermore, to measure logical operators across separate codes, we generalise the fast bridging procedure introduced in Ref.~\cite{cowtan_fast_2025}.
Unlike this prior construction, for addressable families of codes such as the radial codes, our bridging construction allows us to jointly measure nonequivalent logical Pauli operators across copies of the same code using lifted surgery.
For the $\llbracket 90, 8, 10 \rrbracket$ and $\llbracket 198, 8, 16 \rrbracket$ radial codes, we give explicit instances of distance-preserving gadgets that bridge any pair of logical operators of the same type.

Lastly, we construct syndrome extraction circuits for lifted surgery on several group algebra codes and use integer linear programming~\cite{vittal_flagproxy_2024} to schedule the two-qubit gates in the circuits.
By exploiting the underlying group algebra symmetry, we reduce the size of this optimisation problem and make the circuit search tractable.
We also develop a practical decoding pipeline for the resulting, potentially dense, detector matrices by combining detector sparsification with correlated decoding.
We simulated lifted surgery under circuit-level noise.
For the $\llbracket 90,8,10 \rrbracket$ code, we find that lifted surgery performs comparably to standard code surgery while requiring an order of magnitude fewer rounds of syndrome measurement.
We additionally simulated bridged cross code-block lifted surgery and multiple lifted surgery operations performed sequentially with no intervening memory padding.
We find that the logical performance per operation matches that of isolated operations.

\subsection{Organisation of the manuscript}
In~\Cref{sec:prelim}, we give an overview of code surgery and briefly review the state of the art.
Then~\Cref{sec:ls-theory} discusses the block decomposition for chain complexes over $\Ft$ group algebras and introduces lifted surgery formally.
In~\Cref{sec:ls-constructions}, we give explicit lifted surgery constructions as well as detailed examples for several code instances.
In~\Cref{sec:bridging}, we introduce bridged lifted surgery for joint measurements across codes.
Finally, we present results of numerical decoding simulations in~\Cref{sec:numerics} and conclude with a discussion and outlook in~\Cref{sec:discussion}.

\section{Fault-Tolerant logical measurements}\label{sec:prelim}

In this section we give a general overview of code surgery for QLDPC codes, including the recent developments that aim to reduce the time overhead for performing surgery.

The core primitives in code surgery are measurements of logical Pauli product operators.
Surgery thus gives a method to realise Pauli-based computation~\cite{litinski_game_2019}.
Pauli product measurements (PPMs) are sufficient for implementing the entire Clifford group.
Supplementing PPMs with magic states thus achieves universal quantum computation.
Consequently, code surgery along with a supply of magic states is sufficient to execute any quantum algorithm.
Originally, surgery was introduced for the planar surface code, where lattice surgery is used for multi-qubit operations~\cite{bombin_quantum_2009,horsman_surface_2012}.
For QLDPC codes with higher rates and non-local connectivity, Pauli measurements are also often more natural, more efficient, and easier to implement than other techniques, such as transversal operations.
Note that since we can implement magic state injection from auxiliary logical qubits as a sequence of Pauli measurements~\cite{litinski_game_2019}, Pauli-based computation is universal. 
The technique for implementing Pauli measurements on general QLDPC codes in a fault-tolerant manner is code surgery, originally introduced in Ref.~\cite{cohen_lowoverhead_2022}.

\subsection{Code surgery}
As mentioned above, the goal of code surgery is to non-destructively measure the value of a logical operator in a code block in a fault-tolerant manner.
The most naive way to achieve this would be to simply promote the logical operator to a stabiliser generator and measure this generator using a single check qubit.
The problem with this approach is that this single check qubit needs to be coupled to every data qubit in the support of the logical operator.
In fault-tolerant quantum computing, however, we want to use codes with sufficiently high distances in order to exponentially suppress logical errors.
Furthermore, it is often very useful to perform PPMs of products of many disjoint Pauli operators.
To meet these two needs, the above method would require coupling the check qubit to potentially \emph{hundreds} of physical qubits, which would inevitably result in a scheme that is not fault-tolerant.

To circumvent this problem, Ref.~\cite{cohen_lowoverhead_2022} introduced a surgery scheme for performing PPMs on arbitrary QLDPC codes while maintaining the LDPC property and the distance of the underlying code.
This scheme generalises lattice surgery~\cite{bombin_quantum_2009,horsman_surface_2012}, which is tailored to performing PPMs on local two-dimensional surface codes.
Instead of measuring the stabiliser generator corresponding to the logical operator with a single check qubit, we decompose it into multiple generators by introducing a set of auxiliary qubits that together form an auxiliary system -- an auxiliary code.
Ref.~\cite{cohen_lowoverhead_2022} achieved this by taking a mirror copy of the logical operator to be measured and then thickening it (by taking a tensor product with a repetition code) at most $d$ times in order to ensure the distance is preserved.

While the above technique gives a flexible method for performing PPMs, one downside is that the thickening procedure can introduce a significant spatial overhead.
This problem was solved in Refs.~\cite{williamson_lowoverhead_2026,ide_faulttolerant_2025}, which gave general formalisms for performing distance-preserving surgery by using an expander graph as the auxiliary system.
Intuitively, this expansion in the auxiliary system ensures that the logical operator support in the base code cannot be cleaned into the auxiliary system by multiplication with stabilisers in a weight-reducing manner.

Subsequent work has focused on improving these results in several different directions. 
Refs.~\cite{cross_improved_2025, swaroop_universal_2024} introduced bridging, which gives a systematic method for connecting separate auxiliary systems.
Several works have explored how to connect multiple auxiliary systems to a code simultaneously, allowing for the measurement of sets of commuting logical operators in parallel~\cite{zhang_timeefficient_2025, cowtan_parallel_2026}.
Similarly, several works have explored using hypergraphs, instead of standard graphs, to perform high-rate surgery~\cite{zheng_highrate_2025, zheng_logical_2026} on particular code families.
Refs.~\cite{he_extractors_2025, blue_full_2026} consider designing surgery schemes tailored to hardware platforms with fixed connectivity.

One unifying theme across much of the recent research into code surgery is the use of chain complexes and homology.
A quantum CSS code can be described by the following diagram:
\begin{equation*}
    \begin{tikzcd}
    	Z & Q & X
    	\arrow["{H_Z^T}", from=1-1, to=1-2]
    	\arrow["{H_X}", from=1-2, to=1-3]
    \end{tikzcd}
\end{equation*}
Here, $Z$, $Q$, and $X$ are based $\Ft$-vector spaces that represent the $Z$-checks, data qubits, and $X$-checks.
The maps $H_Z^T$ and $H_X$ are the parity-check matrices, and we require that $H_X H_Z^T = 0$, which corresponds to the commutativity of the $X$- and $Z$-checks.

Now suppose that we want to measure a $Z$-type logical operator in the above code (here and throughout the paper, analogous results hold for $X$ measurements by swapping labels).
The auxiliary system is similarly described by the diagram:
\begin{equation*}
    \begin{tikzcd}
    	V & E & F
    	\arrow["{\partial_2}", from=1-1, to=1-2]
    	\arrow["{\partial_1}", from=1-2, to=1-3]
    \end{tikzcd}
\end{equation*}
It is common to associate this diagram with a (hyper)graph, wherein $V$ represents the vertices, $E$ represents the (hyper)edges, and $F$ represents the graph cycles (or elements of $\ker \partial_2$ in the hypergraph case).
In the merged code, elements of $V$ will correspond to $Z$-type checks, whose products are used to infer the logical measurement outcome; 
elements of $E$ will correspond to data qubits, and elements of $F$ will correspond to $X$-type checks.

We describe the coupling between the auxiliary system and the code block by two ``port'' functions, $f_1: V \to Q$ and $f_0: E \to X$.
Here, $f_1$ describes which data qubits the auxiliary $Z$-type checks act on, and $f_0$ describes how the code block's $X$-type checks are deformed by auxiliary data qubits.
This is described by the following diagram:
\begin{equation*}
    \begin{tikzcd}
    	V & E & F \\
    	Z & Q & X
    	\arrow["{\partial_2}", from=1-1, to=1-2]
    	\arrow["{f_1}", from=1-1, to=2-2]
    	\arrow["{\partial_1}", from=1-2, to=1-3]
    	\arrow["{f_0}", from=1-2, to=2-3]
    	\arrow["{H_Z^T}"', from=2-1, to=2-2]
    	\arrow["{H_X}"', from=2-2, to=2-3]
    \end{tikzcd}
\end{equation*}
For this to constitute a valid merged code, that is, one where the checks commute, we must satisfy the condition $H_X f_1 = f_0 \partial_2$ (the diagram commutes).
Furthermore, to ensure that the merged code is sparse, we want all the port functions, as well as the maps in the auxiliary system, to be sparse.
If we want to measure a specific subspace, $L$, of $Z$-type logical operators, then we require that the image of $f_1$ is $L$.

The surgery scheme then proceeds straightforwardly:
We start by initialising the data qubits, $E$, in the $\ket{+}$ state.
We then perform $O(d)$ rounds of syndrome extraction with the new checks, $Z \cup V$ ($Z$-type checks) and $X \cup F$ ($X$-type checks), where $d$ is the distance of the base code.
Lastly, we measure out the data qubits, $E$, in the $X$ basis.
We can then infer the logical measurement outcomes by taking products of outcomes of the $V$ checks (up to Pauli corrections).

The $O(d)$ rounds of syndrome extraction are necessary to ensure fault tolerance of the procedure in the presence of noisy syndrome measurements.
If only one round of syndrome extraction is performed, then a single bit flip on one of the outcomes in $V$ can flip the outcome of the logical measurement.
For utility-scale quantum computing, we often require codes with distances $\geq 20$, and hence $O(d)$ rounds of syndrome extraction can represent a significant time overhead.
In particular, for systems with slow clock cycles, such as trapped ions and neutral atoms, this can result in runtimes of decades unless significant space-time trade-offs are made~\cite{webster_pinnacle_2026,zhou_resource_2025}.

\subsection{Fast surgery}
Fast surgery, introduced in Refs.~\cite{baspin_fast_2025,cowtan_fast_2025}, is a generalisation of the formalism discussed above, intended to address the significant time overhead of code surgery.
The auxiliary system in standard surgery can be viewed as a classical code, where $V$ represents the bits, $E$ represents the classical checks, and $\partial_2$ is the check matrix.
In fast surgery, we instead use a quantum code as an auxiliary system.
This setup can be represented by the following diagram:
\begin{equation*}
    \begin{tikzcd}
    	M & V & E & F
    	\arrow["{\partial_3}", from=1-1, to=1-2]
    	\arrow["{\partial_2}", from=1-2, to=1-3]
    	\arrow["{\partial_1}", from=1-3, to=1-4]
    \end{tikzcd}
\end{equation*}
The reason for using a quantum code is that in the merged code the elements of $M$ will become \emph{meta-checks} that capture redundancy amongst the $Z$-type checks.
This means that $\partial_3$ is a check matrix for the $Z$ syndrome measurements, and by including it in the decoding matrix we can correct errors in the $Z$ syndromes, even when only a constant number of rounds of syndrome extraction have been performed.
As in the standard surgery case, we can describe the deformed code with the following diagram:
\begin{equation}\label{eq:fast-surgery}
    \begin{tikzcd}
    	M & V & E & F \\
    	0 & Z & Q & X
    	\arrow["{\partial_3}", from=1-1, to=1-2]
    	\arrow["{f_2}"', from=1-1, to=2-2]
    	\arrow["{\partial_2}", from=1-2, to=1-3]
    	\arrow["{f_1}"', from=1-2, to=2-3]
    	\arrow["{\partial_1}", from=1-3, to=1-4]
    	\arrow["{f_0}"', from=1-3, to=2-4]
    	\arrow["0"', from=2-1, to=2-2]
    	\arrow["{H_Z^T}"', from=2-2, to=2-3]
    	\arrow["{H_X}"', from=2-3, to=2-4]
    \end{tikzcd}
\end{equation}
We now need to add a third port function, $f_2: M \to Z$, along with the commutativity condition $H_Z^T f_2 = f_1 \partial_3$.
For the remainder of the manuscript, we will drop the $F$ term, since, given choices for $\partial_2$ and $E$, $\partial_1$ and $F$ are fixed and can always be chosen such that $\ker \partial_1 = \im \partial_2$ and $\partial_1$ is surjective.

As in standard surgery, to generically ensure that the distance of the code is preserved, we also want the auxiliary system to be a good expander.
Constructing chain complexes of dimension greater than one with good expansion is a notoriously difficult problem~\cite{lubotzky_high_2018,oppenheim_new_2025,evra_bounded_2016}.
Furthermore, the distance of the auxiliary system (when viewed as a quantum code) needs to be sufficiently high, as this is the property that determines how well syndrome errors can be detected and corrected.
We also want the port maps to be sparse; however, we have some leeway here, as the maps $\partial_3$ and $f_2$ correspond to meta-checks and not to any physically performed gates.
Moreover, the $X$-checks corresponding to $\partial_1$ can be noiselessly inferred from the final readout of the qubits in $E$.

These requirements, along with the commutativity condition for the port functions, make it very difficult to construct suitable auxiliary systems for fast surgery.
For example, Ref.~\cite{baspin_fast_2025} gives a construction for measuring $k$ logical operators with a space overhead of $\tilde{O}\left( n \cdot k \cdot d \right)$.
This essentially corresponds to an exact space-time trade-off compared to standard surgery~\cite{williamson_lowoverhead_2026,ide_faulttolerant_2025}, so while we are able to achieve faster measurements, we have done so at the cost of a significant space overhead.

The procedure for performing fast surgery is similar to that for standard surgery.
We initialise the qubits in $E$ in the $\ket{+}$ state, perform syndrome extraction with the new merged code, measure out the qubits in $E$ in the $X$ basis, and then infer the logical measurement outcome from products of measurement outcomes in $V$.
Unlike in standard surgery, only a single round of syndrome extraction is necessary to achieve fault tolerance.
The formal fault tolerance of this procedure is subtle and depends on the exact analysis used.
In Ref.~\cite{baspin_fast_2025}, it is assumed that the underlying code is single-shot state preparable, which means multiple surgery operations can be chained fault-tolerantly.
Ref.~\cite{cowtan_fast_2025} instead directly analyses the scenario where multiple surgery operations are performed sequentially and shows that these operations can be fault-tolerantly decoded as a block if the compacted distance of these operations is preserved.

\section{Lifted surgery: Theory}\label{sec:ls-theory}
In this section we present lifted surgery formally, establish the possible measurements and addressability, and discuss various subfamilies where the maps have specific forms.

\subsection{Group algebra codes}
Let $G$ be a finite Abelian group and $R = \Ft[G]$ its group algebra over $\Ft$.
We consider codes described by a length-$2$ chain complex
\begin{align*}
    C_\bullet = C_2 \xrightarrow{\partial_2} C_1 \xrightarrow{\partial_1} C_0
\end{align*}
where $C_n \simeq R^{m_n}$ are finite free $R$-modules and the boundary maps $\partial_n$ are $R$-linear.
Expanding each copy of $R$ in the basis indexed by $G$ gives a chain complex of finite-dimensional $\Ft$-vector spaces, defining a binary CSS code.
We call codes obtained in this way \emph{Abelian group algebra codes}.

This framework includes two-block group algebra codes~\cite{lin_quantum_2024}, such as generalised bicycle (GB) codes~\cite{panteleev_degenerate_2021} and bivariate bicycle (BB) codes~\cite{bravyi_highthreshold_2024}, as well as Abelian multi-cycle codes~\cite{lin_abelian_2026} and lifted product codes.
These families provide QLDPC codes with strong finite-length performance~\cite{panteleev_degenerate_2021,bravyi_highthreshold_2024,gu_qgpu_2026}.
\subsection{Block decomposition of the logical space} \label{sec:algebras}
Algebraic decompositions have been used to study the logical space of bivariate bicycle and related group algebra codes~\cite{breuckmann_logical_2025,eberhardt_logical_2024,chen_anyon_2025,postema_existence_2026,lee_logical_2026,sabo_spectral_2026}.
As we consider codes defined over an Abelian group algebra $R= \Ft[G]$, we first recall some standard facts about the structure of $R$, in particular its block decomposition. Importantly, the codespace inherits this block decomposition, which governs the structure of the logical subspace and therefore constrains the possible logical operations.
We refer the reader to \Cref{app:group_algebras} and Refs.~\cite{atiyah_introduction_2018, weibel_introduction_1994} for a more detailed presentation.
\begin{theorem}[Block decomposition of a binary Abelian group algebra]
    Let $G$ be a finite Abelian group and $R = \Ft[G]$.
    Then there are pairwise orthogonal primitive idempotents  $e_1, \ldots, e_\nu \in R$ with
    \begin{align*}
        e_1 + \cdots + e_\nu = 1
    \end{align*}
    such that
    \begin{align*}
        R \simeq \prod_{i=1}^\nu R_i, \quad R_i = e_i R.
    \end{align*}
    Each block $R_i$ is a finite local Artinian ring.
    We write $(R_i, \mm_i)$ for this local ring, where $\mm_i$ is its unique maximal ideal.
    If $|G|$ is odd then each block is a finite field.
\end{theorem}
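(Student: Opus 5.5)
The argument uses three standard facts: $R$ is a finite commutative ring, finite commutative rings are Artinian, and Artinian rings split into local rings via idempotents. For the odd case we add Maschke's theorem (semisimplicity) together with Wedderburn's theorem in its commutative form.

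First, $R=\Ft[G]$ is commutative because $G$ is Abelian, and it is finite-dimensional over $\Ft$. It is therefore Artinian, since any descending chain of ideals is a chain of $\Ft$-subspaces and so stabilises. It has finitely many prime ideals, and every prime is maximal, because $R/\mathfrak p$ is a finite integral domain and hence a field. Call these maximal ideals $\mathfrak n_1,\dots,\mathfrak n_\nu$.

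Next we decompose. The Jacobson radical $J=\bigcap_i \mathfrak n_i$ equals the nilradical and is nilpotent in an Artinian ring, say $J^N=0$. Since $\prod_i \mathfrak n_i^N\subseteq J^N=0$ and the powers $\mathfrak n_i^N$ are pairwise comaximal, the Chinese remainder theorem gives
\[
R\simeq \prod_{i=1}^\nu R/\mathfrak n_i^N .
\]
Each factor has exactly one prime ideal, namely $\mathfrak n_i/\mathfrak n_i^N$, so it is a local Artinian ring, and it is finite. Let $e_i$ be the preimage of the unit vector with $1$ in slot $i$ and $0$ elsewhere. These $e_i$ are idempotent, pairwise orthogonal, and sum to $1$, and $e_iR\simeq R/\mathfrak n_i^N$. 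Each $e_i$ is primitive: a local ring has no idempotents other than $0$ and $1$, because if $e$ is idempotent then $e(1-e)=0$, and one of $e$ or $1-e$ lies outside the maximal ideal and is therefore a unit. We set $R_i=e_iR$ and let $\mm_i$ be the image of $\mathfrak n_i$.

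The step needing the most care is the odd case; everything before it is routine commutative algebra. If $|G|$ is odd, then $|G|$ is invertible in $\Ft$, so Maschke's theorem makes $R$ semisimple. Hence $J=0$, so $\mathfrak n_i^N$ may be replaced by $\mathfrak n_i$, and each block $R/\mathfrak n_i$ is a finite field.

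A more self-contained route avoids Maschke. Take $a\in J$, which is nilpotent; we want $a=0$. Write $|G|=m$ with $m$ odd, and choose $t$ with $2^t\ge$ the nilpotency index of $a$. The Frobenius map $x\mapsto x^{2^t}$ is a ring homomorphism, so $a^{2^t}=\sum_g a_g\, g^{2^t}$. The map $g\mapsto g^{2^t}$ is a bijection of $G$ because $\gcd(2,m)=1$. Therefore $a^{2^t}=0$ forces every coefficient $a_g$ to vanish, so $a=0$. Thus $R$ is reduced, $J=0$, and the blocks are fields.
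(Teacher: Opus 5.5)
Your proof is correct, and it takes a different route from the paper. The paper's appendix on group algebras exploits the structure of $G$. It writes $G = P \times H$ with $P$ the Sylow $2$-subgroup and $|H|$ odd, and uses $\Ft[G] \simeq \Ft[H] \otimes_{\Ft} \Ft[P]$. Maschke's theorem (with commutative Wedderburn) applied to the odd part gives $\Ft[H] \simeq \prod_i \kk_i$ with finite fields $\kk_i$, so $R \simeq \prod_i \kk_i[P]$. Locality of each block then rests on the fact, not proved there, that $\kk_i[P]$ is local for a $2$-group $P$. This holds because $(g+1)^{2^s} = g^{2^s}+1 = 0$ for $g \in P$, which makes the augmentation ideal nilpotent.

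You instead use only that $R$ is a finite commutative ring. The Chinese remainder theorem on the powers $\mathfrak n_i^N$ of the maximal ideals produces the local factors, and primitivity of the $e_i$ follows from locality. Your Frobenius argument is a nice elementary replacement for Maschke: the bijection $g \mapsto g^{2^t}$ on an odd-order group shows directly that $R$ is reduced.

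Your approach is more general and more self-contained, but it does not identify the blocks. The paper's route buys the explicit form $R_i \simeq \kk_i[P]$, from which three facts follow. First, all blocks are simultaneously fields or simultaneously non-fields, so the converse of the odd-order statement also holds. Second, $\dim_{\kk_i} R_i = |P|$ for every block. Third, a block is a chain ring exactly when $P$ is cyclic, which the paper relies on later in its treatment of scalar surgery.
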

The idempotents, $e_i$, act as projectors.
For any $R$-module, $M$, we have
\begin{align*}
    M \simeq \bigoplus_{i=1}^\nu e_i M
\end{align*}
and every $R$-linear map $f: M \to N$ preserves this decomposition, since
\begin{align*}
    f(e_i x) = e_i f(x).
\end{align*}
Thus an $R$-linear map cannot mix different blocks.

Applying this to the complex $C_\bullet$, we obtain the subcomplexes
\begin{align*}
    C_{\bullet, (i)} = e_i C_\bullet
\end{align*}
and a corresponding block decomposition of the homology.
\begin{corollary}[Block decomposition of homology]
    For a finite free $R$-complex $C_\bullet$,
    \begin{align*}
        H_n(C_\bullet) \simeq \bigoplus_{i = 1}^\nu H_n(C_{\bullet, (i)}).
    \end{align*}
    Moreover, every $R$-linear chain map acts independently on these homology blocks.
\end{corollary}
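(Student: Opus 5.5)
The plan is to show that the idempotent decomposition is compatible with every operation entering the definition of homology, so the splitting passes from chains to cycles, boundaries, and finally quotients. By the block decomposition theorem, $1 = e_1 + \cdots + e_\nu$ with $e_ie_j = \delta_{ij}e_i$. For any $R$-module $M$ this gives $M = \bigoplus_i e_iM$ as an internal direct sum. Indeed, every $x$ equals $\sum_i e_i x$. Moreover, if $\sum_i e_i x_i = 0$, then multiplying by $e_j$ gives $e_j x_j = 0$, so the sum is direct.

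Applying this degreewise to $C_\bullet$ gives $C_n = \bigoplus_i e_i C_n$. Since each $\partial_n$ is $R$-linear, $\partial_n(e_i x) = e_i \partial_n(x)$, so $\partial_n$ maps $e_iC_n$ into $e_iC_{n-1}$. Hence each $C_{\bullet,(i)} = e_iC_\bullet$ is a subcomplex, and $C_\bullet = \bigoplus_i C_{\bullet,(i)}$ as complexes, with $\partial_n$ block diagonal.

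The key step is to check that cycles and boundaries split blockwise:
\[
\ker \partial_n = \bigoplus_i \bigl(\ker \partial_n \cap e_iC_n\bigr), \qquad \im \partial_{n+1} = \bigoplus_i e_i\, \im \partial_{n+1}.
\]
For the first, take $x \in \ker\partial_n$. Then $\partial_n(e_ix) = e_i\partial_n(x) = 0$, so each component $e_ix$ is again a cycle. For the second, $e_i\partial_{n+1}(y) = \partial_{n+1}(e_iy)$, so $e_i\,\im\partial_{n+1} = \im(\partial_{n+1}|_{e_iC_{n+1}})$. Taking the quotient of a direct sum of pairs of submodules gives
\[
H_n(C_\bullet) = \bigoplus_i \ker(\partial_n|_{e_iC_n}) / \im(\partial_{n+1}|_{e_iC_{n+1}}) = \bigoplus_i H_n(C_{\bullet,(i)}).
\]
An equivalent and slicker route is to note that multiplication by $e_i$ is a chain endomorphism. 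It therefore induces idempotents on $H_n$ that sum to the identity, and their images are $e_iH_n(C_\bullet) \cong H_n(e_iC_\bullet)$ because homology commutes with the direct sum of complexes.

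For the second claim, let $\varphi: C_\bullet \to D_\bullet$ be an $R$-linear chain map. By $R$-linearity, $\varphi(e_iC_n) \subseteq e_iD_n$, so $\varphi$ restricts to chain maps $\varphi_{(i)}: C_{\bullet,(i)} \to D_{\bullet,(i)}$ and equals their direct sum. The induced map $H_n(\varphi)$ therefore agrees with $\bigoplus_i H_n(\varphi_{(i)})$ under the isomorphism above, so it does not mix homology blocks.

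No step is a serious obstacle. The only point needing care is to keep the decomposition internal, so that the identification $e_iH_n(C_\bullet) \cong H_n(e_iC_\bullet)$ is canonical and compatible with induced maps. Freeness of the $C_n$ is not needed; it is relevant only for identifying $e_iC_n \cong R_i^{m_n}$ as free $R_i$-modules.
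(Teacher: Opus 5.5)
Your proposal is correct and follows the same route as the paper. The paper's proof uses $R$-linearity of the differentials to make each $e_iC_\bullet$ a subcomplex, lets the orthogonal idempotents split kernels and images, and takes the chain-map claim from the block-diagonality of $R$-linear maps (its Peirce decomposition); you spell out the details it leaves implicit, and you are right that freeness is never used.
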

In the following, the physical qubits are in one-to-one correspondence with an $\Ft$-basis of $C_1$ and the logical $Z$-operators are represented by elements of the first homology group $H_1(C_\bullet)$:
\begin{align*}
    H_1(C_\bullet) \simeq \ker \partial_1 / \im \partial_2.
\end{align*}
Thus the block decomposition of $H_{1}(C_\bullet)$ induces a corresponding decomposition of the logical $Z$-operator space. We systematically exploit this decomposition throughout the remainder of this work, both to understand the structure of the logical subspace and to efficiently classify our surgery operations.

\subsection{Surgery maps}
A fast surgery operation~\cite{baspin_fast_2025, cowtan_fast_2025} is specified by a chain map $f_\bullet$ from an auxiliary complex $D_\bullet$ to the code complex $C_\bullet$.
The mapping cone of $f_\bullet$ describes the resulting merged code:
\begin{equation*}
    \begin{tikzcd}
    	{D_2} & {D_1} & {D_0} \\
{C_2} & {C_1} & {C_0}    	
\arrow["\partial^D_2", from=1-1, to=1-2]
    	\arrow["f_2", from=1-1, to=2-1]
    	\arrow["\partial^D_1" ,from=1-2, to=1-3]
    	\arrow["f_1", from=1-2, to=2-2]
    	\arrow["f_0", from=1-3, to=2-3]
    	\arrow["\partial^C_2", from=2-1, to=2-2]
    	\arrow["\partial^C_1", from=2-2, to=2-3]
   \end{tikzcd}
\end{equation*}
Explicitly, since $R$ has characteristic two, the mapping cone of $f_\bullet$ is the chain complex given by
\begin{align*}
    \cone(f)_n = D_{n-1} \oplus C_{n}, \\
    \partial_{n}^{\cone} = \begin{pmatrix}
        \partial_{n-1}^D & 0\\
        f_{n-1} & \partial_{n}^C
    \end{pmatrix}.
\end{align*}
The identity $\partial_n^{\cone} \partial_{n+1}^{\cone} = 0$ is equivalent to the chain map condition $f_{n-1}\partial^D_{n}= \partial_n^C f_n$ and so the cone is a well-defined complex.

To identify which logical operators are measured and which remain in the merged code, we use the short exact sequence of chain complexes
\begin{align}\label{eq:cone-ses}
    0 \rightarrow C_\bullet \rightarrow \cone(f)_\bullet \rightarrow D_\bullet[-1]\rightarrow 0 \, ,
\end{align}
where $(D_\bullet[-1])_n = D_{n-1}$~\cite{ide_faulttolerant_2025, weibel_introduction_1994,cowtan_fast_2025}.

Writing $f_*^n$ for the map induced by $f_\bullet$ on $H_n$, we obtain the following from the associated long exact sequence in homology:
\begin{align}
    0 \rightarrow H_1({C_\bullet})/\im f_*^1\rightarrow H_1(\cone(f)) \rightarrow \ker f_*^0 \rightarrow 0.
    \label{eq:homology_cone}
\end{align}
\Cref{eq:homology_cone} illustrates two distinct contributions to the logical  $Z$ operators of the merged code:
\begin{enumerate}[(i)]
    \item surviving logical operators of the base code, described by $H_1(C_\bullet)/\im f_*^1$, since the operators in $\im f_*^1$ become stabilisers and are measured;
    \item additional logical operators introduced by the auxiliary system, described by $\ker f_*^0$.
\end{enumerate}
Thus $\im f_*^1$ is the measured logical subspace that we classify and target in the following sections, while $\ker f_*^0$ accounts for any additional logical qubits when determining the merged-code parameters.
\begin{remark}
    As a sequence of $\Ft$-vector spaces, \cref{eq:homology_cone} splits, giving a non-canonical isomorphism
    \begin{align*}
        H_1(\cone(f)) \simeq H_1(C_\bullet)/\im f_*^1 \oplus \ker f_*^0.
    \end{align*}
    This splitting need not respect the $R$-module structure, which we retain when analysing addressability below.
    \label{remark:splitting}
\end{remark}

We can remove the additional logical operators by including the auxiliary $X$-checks omitted from the diagram above, corresponding to the final term in \cref{eq:fast-surgery}.
At the binary level, we can append a map $\partial_0^D: D_0 \to D_{-1}$ satisfying
\begin{align*}
    \ker \partial_0^D = \im \partial_1^D.
\end{align*}
The resulting auxiliary complex has $H_0(D_\bullet) = 0$, so the $\ker f_*^0$ contribution vanishes, while $H_1(D_\bullet)$ and the measured subspace $\im f_*^1$ are unchanged.
In the merged code, these checks act only on the auxiliary qubits, and we can infer them from their final $X$-basis readout.
Throughout, we assume that the $D_{-1}$ checks are included to eliminate $H_0(D_\bullet)$ and hence the additional logical operators introduced by the auxiliary system.

\subsection{Lifted surgery}
Our goal is to find explicit instances of practical \emph{fast surgery} maps~\cite{baspin_fast_2025,cowtan_fast_2025}.
The $R$-module structure of Abelian group algebra codes naturally suggests restricting our search to $R$-linear chain maps.
This gives us the following general definition:
\begin{definition}  
\label{def:lifted_surgery}
We define a \emph{lifted surgery} map to be any $R$-linear chain map $f: C_\bullet \rightarrow C_\bullet$ between two copies of the same code complex, and we define $\ls_R(C_\bullet)$ to be the set of all lifted surgery maps on the complex $C_\bullet$:
\begin{align*}
    \ls_R(C_\bullet)
    =
    \left\{
    (f_2,f_1,f_0)
    \;\middle|\;
    \begin{aligned}
    &f_n \in \operatorname{End}_R(C_n),\\
    &f_{n-1}\partial_n=\partial_n f_n
    \end{aligned}
    \right\}.\end{align*}
\end{definition}
Each lifted surgery map $f_\bullet$ specifies a surgery operation that measures the set of logical operators $\im f^1_*$.
Since $f_\bullet$ is $R$-linear, the corresponding surgery operation inherits the rich algebraic structure of the Abelian group algebra $R$ (see \Cref{sec:algebras} and \Cref{app:group_algebras}).
We now exploit this structure to decompose and classify lifted surgery maps.  
\begin{theorem} \label{thm:ls_block_decomposition}
    Let $R=\Ft[G]$ and $C_\bullet$ be a chain complex with $C_n \simeq R^{m_n}$. Let $e_1,
    \dots, e_\nu$ be its primitive idempotents so that
    \begin{align*}
        R = \prod_{i=1}^\nu R_i, \quad R_i = e_i R, \quad C_{\bullet, (i)} = e_i C_\bullet
    \end{align*}
    is the block decomposition of $R$ and $C_\bullet$. Then
    \begin{enumerate}[(i)]
        \item $\ls_R(C_\bullet)$ is a finite unital $R$-subalgebra of 
        $$\prod_n \operatorname{End}_R(C_n).$$
        \item The lifted-surgery algebra decomposes blockwise
        \begin{align*}
            \ls_R(C_\bullet) \simeq \prod_i \ls_{R_i}(C_{\bullet, (i)}).
        \end{align*}
    \end{enumerate}
\end{theorem}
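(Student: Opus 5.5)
For (i), I will verify the subalgebra axioms directly inside $A=\prod_n \operatorname{End}_R(C_n)$, where multiplication is componentwise composition. Since $R$ is commutative, each $\operatorname{End}_R(C_n)\simeq M_{m_n}(R)$ is an $R$-algebra, and hence so is $A$. The set $\ls_R(C_\bullet)$ is cut out by the conditions $f_{n-1}\partial_n=\partial_n f_n$. These conditions are $R$-linear in $f$, because $(rf)_{n-1}\partial_n = r f_{n-1}\partial_n = \partial_n (r f_n)$, using that $\partial_n$ is $R$-linear. They are also closed under composition: if $f,g$ are chain maps, then $f_{n-1}g_{n-1}\partial_n = f_{n-1}\partial_n g_n=\partial_n f_n g_n$. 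Finally, the identity chain map satisfies them. So $\ls_R(C_\bullet)$ is a unital $R$-subalgebra. It is finite because $A$ is a finitely generated module over the finite ring $R$, so $A$ has finitely many elements.

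For (ii), I will define the maps
\[
\Phi:\ls_R(C_\bullet)\to\prod_i \ls_{R_i}(C_{\bullet,(i)}),\qquad f\mapsto (f|_{e_iC_\bullet})_i ,
\]
and
\[
\Psi:(g^{(i)})_i\mapsto \textstyle\sum_i g^{(i)}\circ e_i .
\]
\begin{enumerate}[(i)]
\item $\Phi$ is well defined. By the earlier remark $f(e_ix)=e_if(x)$, so $f_n$ maps $e_iC_n$ into $e_iC_n$. The restriction is $R_i$-linear and commutes with the restricted boundary maps $e_i\partial_n$. Hence each restriction lies in $\ls_{R_i}(C_{\bullet,(i)})$.
\item $\Psi$ is well defined. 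The map $e_i:C_n\to e_iC_n$ is an $R$-linear projection, and each $g^{(i)}_n$ is $R$-linear, because $R$ acts on $e_iC_n$ through $R\to R_i$, $r\mapsto e_ir$. The sum $\Psi(g)_n$ commutes with $\partial_n$ because $\partial_n$ commutes with every $e_i$.
\item $\Phi$ and $\Psi$ are mutually inverse. This follows from $\sum_i e_i=1$ together with orthogonality $e_ie_j=\delta_{ij}e_i$, applied to the decomposition $C_n=\bigoplus_i e_iC_n$.
\item $\Phi$ is multiplicative and unital, since restriction respects composition and sends the identity to the identity. It is also $R$-linear, where $R=\prod R_i$ acts componentwise.
\end{enumerate}

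The one point that needs care is the identification between $R$-linearity on $e_iC_n$ and $R_i$-linearity. An $R$-linear map on $e_iC_n$ is automatically $R_i$-linear because $R_i=e_iR\subseteq R$ acts compatibly. Conversely, an $R_i$-linear map is $R$-linear, since $rx=(e_ir)x$ for $x\in e_iC_n$. I will also note that $e_iC_n\simeq R_i^{m_n}$ is free over $R_i$, so $C_{\bullet,(i)}$ is a complex of the same type as $C_\bullet$ and $\ls_{R_i}$ makes sense for it.
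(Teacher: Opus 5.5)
Your proposal is correct and follows essentially the same route as the paper. In (i), the paper writes $\ls_R(C_\bullet)=\ker\eta$ to get an $R$-submodule, then notes that it contains the identity and is closed under composition; in (ii), it splits each chain map as $f=\sum_i e_i f$ and, conversely, extends block chain maps by zero and sums them. Your version adds points the paper leaves implicit: you check that the bijection $\Phi$ is a unital $R$-algebra isomorphism, you justify finiteness, and you identify $R$-linearity on $e_iC_n$ with $R_i$-linearity.
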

\begin{proof}
    \textit{(i)} Consider the $R$-linear map $\eta$:
    \begin{align}
        \eta : \prod_n\operatorname{End}_R(C_n) &\longrightarrow \prod_n \Hom_R(C_n, C_{n-1}) \label{eq:eta} \\
        \prod_n f_n &\longmapsto \prod_n (f_{n-1}\partial_n - \partial_n f_n).\nonumber
    \end{align}
    By the definition of lifted surgery, $\ls_R(C_\bullet) = \ker \eta$ and so it is an $R$-submodule of $\prod_n \operatorname{End}_R(C_n)$.
    It contains the identity chain map and it is closed under composition: $f \circ g = \prod_n (f_n g_n)$. Thus, $\ls_R(C_\bullet)$ is a unital subalgebra. 
    
    \textit{(ii)} By assumption, every map $f_n$ is $R$-linear, and for every central idempotent $e_i$, $f_n(e_i x) = e_i f_n(x)$, and therefore it splits:
    \begin{align*}
        f(x) = f(1 \cdot x) &= f(e_1 x + \cdots + e_\nu x)\\
        &= f(e_1 x)+ \cdots + f(e_\nu x)\\
        &= e_1 f(x) + \cdots + e_\nu f(x),
    \end{align*}
    in short,
    \begin{align*}
        f = \sum_i e_i f
    \end{align*}
    and so any lifted surgery decomposes.
    Conversely, given a tuple of chain maps $f^{(i)}$ on the blocks, we can extend each map by zero on the other blocks:
    \begin{align*}
        \hat f^{(i)} = (0, \dots, f^{(i)}, \dots, 0)
    \end{align*}
    so that they sum to an $R$-linear chain map on $C_\bullet$.
\end{proof}
\Cref{thm:ls_block_decomposition} implies that the set of lifted surgeries is computable (see \Cref{alg:lifted_surgery}).
Expanding the entries of the $R$-linear map $\eta$ over $\Ft$, we can compute its kernel by solving a linear system of equations over $\Ft$.
It is then easy to see that $\ls_R(C_\bullet)$ has cardinality $2^{\dim_{\Ft}\ker \eta}$.
We can apply the same argument, and computation, block-wise.
If $R_i$ is a field, this is ordinary linear algebra over $R_i$. If $R_i$ is a local non-field block, Nakayama's lemma (see \Cref{lemma:nakayama}) gives a minimal generating set for the $R_i$-module.
\begin{algorithm}[h] \label{alg:lifted_surgery}
    \caption{Compute the lifted surgery algebra of $C_\bullet$}
    
    \begin{algorithmic}
        \Require Chain complex $C_\bullet$ of free $R$-modules, $R = \Ft[G]$
        \Ensure Generators of $\mathrm{LS}_R(C_\bullet)$
        
        \State Compute central primitive idempotents $e_i$ of $R$ such that
        \Statex $e_1 + \dots + e_\nu = 1$ 
        \For{each block $R_i = e_i R$}
            \State Expand the map $\eta_i = e_i \eta $ over $\Ft$ \Comment{\cref{eq:eta}}
            \State Compute the kernel, over $\Ft$, of $\eta_i$
            \State $\ls_{R_i}(C_{\bullet, (i)}) \gets \ker_{\Ft} \eta_i$
        \EndFor
        \State $\ls_R(C_\bullet)
        \gets \prod_i \mathrm{LS}_{R_i}(C_{\bullet,i})$
        \State \Return 
        $\ls_R(C_\bullet)$
    \end{algorithmic}
\end{algorithm}
\subsubsection{Addressability}

We now turn to determining which sets of logical measurements lifted surgery maps allow us to address.
\begin{definition}[Measurable maps and addressable modules] \label{def:addressable_modules}
    For a block $R_i$, consider the homology-action map:
    \begin{align*}
        \star_i: \ls_{R_i}(C_{\bullet, (i)}) &\longrightarrow \operatorname{End}_{R_i}(H_{1, (i)})\\
        f_\bullet &\longmapsto f_{*}
    \end{align*}
    The \emph{measurable maps} on the $i$th block are
    \begin{align*}
        \mathcal{M}_i   = \left\{
        g
        \;\middle|\;
        \begin{aligned}
            &g \in \operatorname{End}_{R_i}(H_{1, (i)})\\
            &g \in \im(\star_i)
        \end{aligned}
        \right\}.
    \end{align*}
    We say that an endomorphism $g$ comes from a lifted surgery if it is a measurable map, i.e., if $g \in \mathcal{M}_i$.
    
    We say that an $R_i$-submodule $W$ is an \emph{addressable submodule} if $W = \im g$ for some $g \in \mathcal{M}_i$ and denote the set of addressable submodules by 
    \begin{align*}
        \mathcal{A}_i = \{\im g \mid g \in \mathcal{M}_i\}.
    \end{align*}
\end{definition}
Combining \Cref{thm:ls_block_decomposition} and \Cref{def:addressable_modules}, we have the following.
\begin{restatable}[Addressability of lifted surgery]{theorem}{addressability} \label{thm:addressability}
    The set of logical measurements realisable via lifted surgery is the set of direct sums of blockwise addressable submodules:
    \begin{align*}
        \mathcal{A}(C_\bullet) &= \{\im f_* \mid f \in \ls_R(C_\bullet)\} \\
        &= \left\{\bigoplus_{i = 1}^\nu W_i \mid W_i \in \mathcal{A}_i\right\}.
    \end{align*}
\end{restatable}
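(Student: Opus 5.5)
The plan is to prove the two inclusions between $\{\im f_* \mid f \in \ls_R(C_\bullet)\}$ and the set of direct sums $\bigoplus_i W_i$ with $W_i \in \mathcal{A}_i$. Both directions rest on \Cref{thm:ls_block_decomposition}(ii) and the block decomposition of homology, combined with one compatibility fact: the homology functor commutes with the idempotent projections $e_i$.

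First I will establish this compatibility. For $f \in \ls_R(C_\bullet)$, write $f = \sum_i e_i f$ as in the proof of \Cref{thm:ls_block_decomposition}, and set $f^{(i)} = e_i f$ restricted to $C_{\bullet,(i)}$. I will check that the induced map $f_*$ on $H_1(C_\bullet) \simeq \bigoplus_i H_{1,(i)}$ is block diagonal with blocks $f^{(i)}_*$.

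Since $f$ is $R$-linear, it maps $e_i \ker\partial_1$ into $e_i\ker\partial_1$ and $e_i\im\partial_2$ into $e_i \im \partial_2$. Moreover $e_i\ker\partial_1 = \ker(\partial_1|_{e_iC_1})$ and $e_i\im\partial_2 = \im(\partial_2|_{e_iC_2})$, because the $\partial_n$ are $R$-linear and the $e_i$ are orthogonal idempotents summing to $1$. Hence $e_i H_1(C_\bullet) = H_1(C_{\bullet,(i)})$, and $f_*$ restricted to this summand is exactly $f^{(i)}_*$. This is also the content of the ``moreover'' clause of the corollary on block decomposition of homology, so I can largely cite it.

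For the inclusion ``$\subseteq$'': given $f$, the image of a block-diagonal map is the direct sum of the blockwise images, so
\begin{align*}
    \im f_* = \bigoplus_i \im f^{(i)}_*.
\end{align*}
Each $f^{(i)}$ lies in $\ls_{R_i}(C_{\bullet,(i)})$ by \Cref{thm:ls_block_decomposition}(ii). Therefore $f^{(i)}_* = \star_i(f^{(i)}) \in \mathcal{M}_i$, and so $\im f^{(i)}_* \in \mathcal{A}_i$.

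For the inclusion ``$\supseteq$'': given $W_i \in \mathcal{A}_i$, pick $g_i \in \mathcal{M}_i$ with $\im g_i = W_i$, and then pick $f^{(i)} \in \ls_{R_i}(C_{\bullet,(i)})$ with $\star_i(f^{(i)}) = g_i$. By the converse direction of \Cref{thm:ls_block_decomposition}(ii), the sum $f = \sum_i \hat f^{(i)}$ (each term extended by zero on the other blocks) is an $R$-linear chain map, so $f \in \ls_R(C_\bullet)$. By the first step, $f_*$ acts on block $i$ as $g_i$, hence $\im f_* = \bigoplus_i W_i$.

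The only step needing genuine care is the first: verifying that taking homology commutes with the projection $e_i$, i.e.\ that $e_i(\ker\partial_1/\im\partial_2) \cong \ker(e_i\partial_1)/\im(e_i\partial_2)$ compatibly with induced maps. I would handle it by noting that $C_\bullet = \bigoplus_i e_iC_\bullet$ is a direct sum of subcomplexes and homology is additive. The rest is bookkeeping.
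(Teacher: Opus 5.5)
Your proposal is correct and follows the same route as the paper, which likewise decomposes each lifted surgery blockwise via \Cref{thm:ls_block_decomposition} and notes that this decomposition passes to the image on homology. The paper states this in two lines; you write out both inclusions and the identification $e_i H_1(C_\bullet) = H_1(C_{\bullet,(i)})$, which the paper's short argument leaves to the block-decomposition corollary in the appendix.
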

\begin{proof}
    By \Cref{thm:ls_block_decomposition}, every lifted surgery $f$ decomposes as $\bigoplus f_{(i)}$ with $f_{(i)}$ a block-local lifted surgery in $\mathcal{\ls}_{R_i}(C_{\bullet, (i)})$ and the decomposition passes to the image on homology (see~\Cref{app:group_algebras}).
    The central idempotent decomposition passes to submodules as well, and since submodules are closed under multiplication by elements of $G$, addressable submodules contain entire orbits.
\end{proof}
Said differently, lifted surgery has two addressability constraints:
\begin{enumerate}[(i)]
    \item (No cross-block addressability). Every measured set splits as a direct sum over the $R_i$ blocks.
    \item (No intra-orbit addressability). Within a block, each measured set is $G$-invariant.
          Hence if a logical operator is measured, its entire $G$-orbit is measured as well.
\end{enumerate}
Within this addressability regime, we can fully characterise the addressable submodule of a given code, and therefore study the logical operations that lifted surgery enables. 
\begin{restatable}[Lifted surgery and addressable module]{theorem}{addressablemodule}
    Let $W \subseteq H_{1, (i)}$ be an $R_i$-submodule. The following are equivalent:
    \begin{enumerate}
        \item $W \in \mathcal{A}_i$ is a module addressable by a lifted surgery on the $i$th block.
        \item There exists $\phi \in \operatorname{End}_{R_i}(Z_i)$ such that
            \begin{align*}
              \phi(B_i) \subseteq B_i, \quad W = \pi(\phi(Z_i)).
            \end{align*}
        \item There exists $g \in \operatorname{End}_{R_i}(H_{1, (i)})$ with $\im g = W$ and there exists an $R_i$-linear lift $\phi: Z_i \rightarrow Z_i$ such that 
            \begin{align*}
                \pi \phi = g \pi.
            \end{align*}
        \item There exist a submodule $N \subseteq H_{1, (i)}$ and an $R_i$-linear map $\phi: Z_i \rightarrow Z_i$ such that
            \begin{align*}
                \phi^{-1}(B_i) = \pi^{-1}(N), \quad W =\pi(\phi(Z_i)).
            \end{align*}
    \end{enumerate}
    Equivalently, the addressable submodules are
    \begin{align*}
        \mathcal{A}_i   = \left\{
        \pi(\phi(Z_i))
        \;\middle|\;
        \begin{aligned}
            &\phi \in \operatorname{End}_{R_i}(Z_i)\\
            &\phi(B_i) \subseteq B_i
        \end{aligned}
        \right\}.
    \end{align*}
\label{thm:ls_equivalent}
\end{restatable}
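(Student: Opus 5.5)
The plan is to prove the cycle of implications $(1)\Rightarrow(2)\Leftrightarrow(3)$, $(2)\Leftrightarrow(4)$, and $(2)\Rightarrow(1)$. Throughout, $Z_i=\ker \partial_1|_{C_{1,(i)}}$, $B_i=\im \partial_2|_{C_{2,(i)}}$, and $\pi: Z_i\to H_{1,(i)}=Z_i/B_i$ is the quotient map. By \Cref{thm:ls_block_decomposition}, all the work happens inside a single block $C_{\bullet,(i)}$. Here $C_{n,(i)}=e_iC_n\simeq R_i^{m_n}$ is a free $R_i$-module.

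The easy implications are as follows.
\begin{itemize}
\item For $(1)\Rightarrow(2)$: given $f\in\ls_{R_i}(C_{\bullet,(i)})$ with $\im f_*=W$, let $\phi=f_1|_{Z_i}$. Chain-map commutation gives $\partial_1 f_1 = f_0\partial_1$, so $\phi(Z_i)\subseteq Z_i$. It also gives $f_1\partial_2=\partial_2 f_2$, so $\phi(B_i)\subseteq B_i$. By definition $f_*\pi=\pi\phi$, hence $W=\pi(\phi(Z_i))$.
\item For $(2)\Rightarrow(3)$: the condition $\phi(B_i)\subseteq B_i$ means $\phi$ descends to a well-defined $R_i$-linear map $g$ on $H_{1,(i)}$ with $g\pi=\pi\phi$. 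Then $\im g=\pi\phi(Z_i)=W$.
\item For $(3)\Rightarrow(2)$: the relation $\pi\phi=g\pi$ gives $\pi\phi(B_i)=g\pi(B_i)=0$, so $\phi(B_i)\subseteq\ker\pi=B_i$. Surjectivity of $\pi$ then gives $\pi\phi(Z_i)=g(H_{1,(i)})=W$.
\item For $(2)\Rightarrow(4)$: take $N=\ker g$. Then $z\in\phi^{-1}(B_i)$ iff $\pi\phi(z)=0$ iff $g\pi(z)=0$ iff $z\in\pi^{-1}(N)$.
\item For $(4)\Rightarrow(2)$: we have $B_i=\pi^{-1}(0)\subseteq\pi^{-1}(N)=\phi^{-1}(B_i)$, so $\phi(B_i)\subseteq B_i$.
\end{itemize}
The final displayed description of $\mathcal{A}_i$ is then just a restatement of $(1)\Leftrightarrow(2)$.

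The substantive step, and the main obstacle, is $(2)\Rightarrow(1)$. Here we must build a genuine chain map $(f_2,f_1,f_0)$ on $C_{\bullet,(i)}$ from a map $\phi$ defined only on the cycles. The difficulty is that $Z_i$ need not be a direct summand of $C_{1,(i)}$ when $R_i$ is not a field, so $\phi$ cannot naively be extended by zero on a complement. My approach is to use that $R=\Ft[G]$ is a Frobenius algebra, hence self-injective, and that each block $R_i=e_iR$ is again Frobenius. It follows that finitely generated free $R_i$-modules are injective (and of course projective). I would record this as a lemma in \Cref{app:group_algebras}. The construction then proceeds in three steps.
\begin{itemize}
\item \emph{$f_1$:} view $\phi$ as a map $Z_i\to C_{1,(i)}$. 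By injectivity of $C_{1,(i)}$ it extends to some $f_1\in\operatorname{End}_{R_i}(C_{1,(i)})$.
\item \emph{$f_0$:} define $f_0$ on the submodule $\im\partial_1\subseteq C_{0,(i)}$ by $f_0(\partial_1x)=\partial_1 f_1(x)$. This is well defined because $f_1(Z_i)=\phi(Z_i)\subseteq Z_i$. Injectivity of $C_{0,(i)}$ extends it to all of $C_{0,(i)}$, and $f_0\partial_1=\partial_1 f_1$ holds by construction.
\item \emph{$f_2$:} since $f_1\partial_2(C_{2,(i)})=\phi(B_i)\subseteq B_i=\im\partial_2$, the map $f_1\partial_2: C_{2,(i)}\to B_i$ lifts through the surjection $\partial_2:C_{2,(i)}\to B_i$. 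This uses only projectivity of the free module $C_{2,(i)}$, and gives $f_2$ with $\partial_2f_2=f_1\partial_2$.
\end{itemize}
Then $f\in\ls_{R_i}(C_{\bullet,(i)})$, and $f_*\pi=\pi f_1|_{Z_i}=\pi\phi$. Hence $\im f_*=\pi(\phi(Z_i))=W$, so $W\in\mathcal{A}_i$.

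If the self-injectivity route fails to be convenient, the fallback is to dualise. One would use the nondegenerate $G$-invariant pairing on $R$ to identify $\Hom_{R_i}(-,R_i)$ with the $\Ft$-dual, and deduce the extension property from exactness of $\Ft$-duality. Either way, the Frobenius property of $\Ft[G]$ is the key input.
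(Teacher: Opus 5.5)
Your proposal is correct and follows the paper's proof essentially step for step. The paper obtains $(1)\Leftrightarrow(2)$ from a separate compatibility proposition whose converse direction does what you do: it extends $\phi$, and then $f_0$ from $\im\partial_1$, using self-injectivity of the block $R_i$, and it lifts $f_2$ by projectivity of the free module $C_2^{(i)}$. The remaining implications $(2)\Leftrightarrow(3)$ and $(2)\Leftrightarrow(4)$, with $N=\ker g$, are argued exactly as in your plan.
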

\begin{proof}
    See \hyperref[proof:addressable_module]{proof of \Cref{thm:ls_equivalent}} in \Cref{app:main_proofs}.
\end{proof}
\Cref{thm:ls_equivalent} characterises the scope of lifted surgery and motivates the following definition.
\begin{definition}[Granularity of the homology blocks] \label{def:fine}
    We call a homology block $H_{1, (i)}$ \emph{completely fine} if every $\Ft$-subspace of $H_{1, (i)}$ is addressable; \emph{fine} if every $R_i$-submodule of $H_{1, (i)}$ is addressable; and \emph{coarse} if it is not fine.

    In particular, on a completely fine block, each (binary) logical subspace is addressable via lifted surgery.
\end{definition}
Completely fine blocks are fine.
The converse is false in general: measured sets are always $R_i$-submodules, so a fine block is completely fine if and only if every $\Ft$-linear subspace of $H_{1,(i)}$ is an $R_i$-submodule, e.g., when $R_i \simeq \Ft$.

The corollary below gives a sufficient condition for fineness: whenever the blocks of $R$ are fields, every $R_i$-linear endomorphism is measurable and every $R_i$-submodule is addressable.
\begin{restatable}[Field blocks are fine]{corollary}{surgeryfield} \label{cor:full_addressability}
    If $R_i = \kk_i$ is a field, then 
    \begin{align*}
       \mathcal{M}_i = \operatorname{End}_{\kk_i}(H_{1, (i)})
    \end{align*}
   and every $\kk_i$-linear subspace of $H_{1,(i)}$ is addressable. 
\end{restatable}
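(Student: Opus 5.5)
The approach is to show that when $R_i=\kk_i$ is a field, every $\kk_i$-linear endomorphism $g$ of $H_{1,(i)}$ has a lift to a full chain map on $C_{\bullet,(i)}$. Then $\star_i$ is surjective, so $\mathcal{M}_i=\operatorname{End}_{\kk_i}(H_{1,(i)})$. Addressability of every subspace follows at once: any $\kk_i$-subspace $W$ is the image of some linear endomorphism, for instance a projection onto $W$ along a chosen complement.

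Over a field every module is free and every short exact sequence splits, so we can pick vector-space decompositions of each degree:
\begin{align*}
    C_{1,(i)} &= B_i \oplus H \oplus K_1, \quad Z_i = B_i \oplus H, \\
    C_{2,(i)} &= Z_2 \oplus K_2, \quad C_{0,(i)} = \im\partial_1 \oplus K_0 .
\end{align*}
Here $B_i=\im\partial_2$, $Z_i=\ker\partial_1$, $H\simeq H_{1,(i)}$ is a complement of $B_i$ in $Z_i$, and $K_1$ is a complement of $Z_i$ in $C_{1,(i)}$. The restrictions $\partial_2|_{K_2}:K_2\to B_i$ and $\partial_1|_{K_1}:K_1\to\im\partial_1$ are isomorphisms. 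Given $g$, transport it to $\tilde g:H\to H$ through $\pi|_H$.

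Next, define the chain map degree by degree. Set $f_1=\tilde g$ on $H$ and $f_1=0$ on $B_i\oplus K_1$. Set $f_2=0$ and $f_0=0$. Because $f_1$ vanishes on $B_i=\im\partial_2$, we get $f_1\partial_2=0=\partial_2 f_2$. Because $\im f_1\subseteq H\subseteq\ker\partial_1$, we get $\partial_1 f_1=0=f_0\partial_1$. So $f_\bullet\in\ls_{\kk_i}(C_{\bullet,(i)})$, every component is $\kk_i$-linear, and it induces $g$ on homology. (Alternatively, take the identity on $B_i$ and $K$-parts; this is not needed.) By \Cref{thm:ls_block_decomposition}, extending by zero on the other blocks gives an $R$-linear lifted surgery, and $\star_i(f)=g$.

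The only subtlety is checking that ``$\kk_i$-linear'' in this block really coincides with $R_i$-linearity. This holds because $R_i=e_iR=\kk_i$ acts on $C_{\bullet,(i)}$ through its field structure, and every chosen complement can be taken as a $\kk_i$-subspace. No Nakayama-type argument is needed, unlike in local non-field blocks, where such complements may fail to exist because submodules of free modules need not be direct summands. One could also deduce the result from \Cref{thm:ls_equivalent}(2): take $\phi=\tilde g\circ p_H$ on $Z_i$, where $p_H$ is the projection onto $H$. This $\phi$ kills $B_i$, and $\pi(\phi(Z_i))=\im g$.
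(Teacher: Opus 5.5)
Your proposal is correct and follows essentially the same route as the paper. The paper likewise splits $C_1^{(i)} \simeq B_i \oplus H_{1,(i)} \oplus T_i$ over the field and takes $f_\bullet = (0, f_1, 0)$, with $f_1$ acting as $g$ on the homology complement and as zero elsewhere. It then obtains every subspace as the image of a projection. Your extra remarks (extension by zero to the other blocks, the alternative via $\phi = \tilde g \circ p_H$) are consistent with the paper but not needed.
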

\begin{proof}
    See \hyperref[proof:surgery_field]{proof of \Cref{cor:full_addressability}} in \Cref{app:main_proofs}.
\end{proof}
We remark that, if $R_i$ is not a field, \Cref{thm:addressability} still holds, and 
\begin{align*}
    \mathcal{A}_i = \{\pi(\phi(Z_i)) \mid \phi(B_i) \subseteq B_i\}.
\end{align*}
However, not all $R_i$-linear endomorphisms of $H_{1, (i)}$ lift to a map $\phi: Z_i \rightarrow Z_i$ preserving the boundaries.
Unlike the field case, the quotient condition $W \simeq H_{1, (i)}/N$ is not, by itself, sufficient for liftability and we must also check the compatibility condition $\phi(B_i) \subseteq B_i$.
In~\Cref{app:ext} we recast the compatibility condition in terms of the extension class of the cycle sequence in $\mathrm{Ext}_{R_i}^1$ and derive a more compact characterisation of the addressable submodules over non-field blocks.

A natural way to enlarge the class of addressable modules is to allow for an auxiliary complex $D_\bullet \neq C_\bullet$.
This would enlarge the space of admissible chain maps and could, in principle, make the compatibility condition easier to satisfy while keeping the set of surgeries computable. However, as long as one restricts to $R$-linear chain maps, the same addressability constraints remain. 
Of course, allowing $\Ft$-linear chain maps would remove this limitation~\cite{benhemou_automated_2026}.

\subsection{Structured families of lifted surgeries}
In this section we describe three sub-families of lifted surgery, ordered by the amount of additional structure they require from the chain complex $C_\bullet$.
These families are of particular interest because their structure provides a natural framework for constructing merged codes that remain LDPC. 
\subsubsection{Scalar surgery}\label{sec:scalar-surgery}
We start with scalar surgery, which is the limiting case in which a single element of the group algebra describes the chain maps.
\begin{definition}[Scalar surgery]
  A \emph{scalar surgery} map is a lifted surgery map of the form $f_\bullet = \mu_c$, where $\mu_c$ is multiplication by an element $c \in R$.
\end{definition}
\begin{lemma}[Scalar multiplication is a chain map] \label{lemma:scalar_multiplication}
  Let $R$ be a commutative ring and $C_\bullet$ a chain complex of
  $R$-modules. For every $c \in R$, multiplication by $c$ at every degree,
  \begin{align*}
      \mu_c: C_n &\longrightarrow C_{n}\\
      x &\longmapsto c x
  \end{align*}
  defines a chain map $\mu_c : C_\bullet \rightarrow C_\bullet$.
  The induced map on homology is again multiplication by $c$, and it preserves the block decomposition, acting on $H_{n,(i)}$ as multiplication by $c_i = c e_i$.
\end{lemma}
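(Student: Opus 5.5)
The plan is to verify the three claims in turn, using only commutativity of $R$ and the $R$-linearity of the boundary maps; no earlier structural result is needed beyond the idempotent decomposition of \Cref{sec:algebras}.

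First, $\mu_c$ is a chain map. For each $n$, $\mu_c: C_n \to C_n$ is $R$-linear because $R$ is commutative: $c(rx) = (cr)x = (rc)x = r(cx)$. The chain map condition $\mu_c \partial_n = \partial_n \mu_c$ reads $c\,\partial_n(x) = \partial_n(cx)$. This is exactly $R$-linearity of $\partial_n$. Hence $(\mu_c)_\bullet \in \ls_R(C_\bullet)$ in the sense of \Cref{def:lifted_surgery}.

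Second, the induced map on homology is multiplication by $c$. Since $\mu_c$ commutes with $\partial$, it maps $\ker\partial_n$ into itself and $\im\partial_{n+1}$ into itself. So it descends to $H_n$ by $[x] \mapsto [cx]$. The $R$-module structure on $H_n$ is by definition $r[x] = [rx]$, so the induced map is $[x]\mapsto c[x]$, multiplication by $c$ on $H_n$.

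Third, compatibility with blocks. The idempotents $e_i$ are central, so $\mu_c$ commutes with multiplication by each $e_i$ and maps $e_iC_\bullet$ to itself; the same holds on $H_n$. For $y \in H_{n,(i)} = e_i H_n$, write $y = e_i y$. Then $cy = c e_i y = (ce_i) y = c_i y$. So the restriction of $\mu_c$ to block $i$ is $\mu_{c_i}$, where $c_i$ is viewed as an element of $R_i = e_iR$. This is consistent with the decomposition $f = \sum_i e_i f$ from \Cref{thm:ls_block_decomposition}, and gives $\mu_c = \sum_i \mu_{c_i}$.

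I expect no real obstacle here. The only point needing care is the identification of $H_{n,(i)}$ with $e_iH_n$, i.e.\ that homology commutes with applying $e_i$. This follows from the block decomposition of homology corollary, since $e_i$ is an exact idempotent projector on modules.
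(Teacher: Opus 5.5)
Your proof is correct and takes the same route as the paper. $R$-linearity of the differentials, together with commutativity of $R$ making $\mu_c$ itself $R$-linear, gives the chain map condition; the map then descends to $[z]\mapsto[cz]=c[z]$ on homology; and because it commutes with the central idempotents $e_i$, it restricts to each block $e_iC_\bullet$ as $\mu_{c_i}$. Your explicit identification of $H_{n,(i)}$ with $e_iH_n$ via the block decomposition of homology is a detail the paper leaves implicit.
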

\begin{proof}
  The differentials are $R$-linear and $R$ is commutative; hence they commute with multiplication by $c$.
  So $\mu_c$ satisfies the chain map condition and $[z] \longmapsto [cz]$ is well defined on homology.
  Last, $\mu_c$ commutes with each central idempotent $e_i$, so it restricts to the block subcomplex $e_iC_\bullet$, acting there as $\mu_{c_i}$.
\end{proof}

The motivation for introducing scalar surgery is that, for certain codes such as two-block group algebra codes, it preserves desirable structure in the code.
More specifically, for codes that can be described by a Koszul complex, scalar surgery is closed.
\Cref{def:koszul} shows that the mapping cone of $\mu_c$ on $K_\bullet(a,b)$ \emph{is} the length-$3$ Koszul complex $K_\bullet(a,b,c)$.
Thus, scalar surgery preserves the Koszul form and extends the defining sequence by the surgery element $c$.
We develop this description in \Cref{sec:scalar-surgery-koszul}.
This is desirable since codes that can be described by Koszul complexes often have very efficient hardware implementations, specifically on systems with reconfigurable qubits~\cite{viszlai_matching_2025,menon_magic_2026}.

We study the action of scalar surgery block-wise.
For $c \in R$, let $c_i = e_i c \in R_i$.
On the $i$th block $(R_i, \mm_i)$ (see \Cref{thm:block_algebra}), the scalar surgery $\mu_c$ measures the following:
\begin{enumerate}[(i)]
    \item The whole block $H_{1, (i)}$ if $c_i$ is invertible in $R_i$;
    \item The submodule $c_i H_{1,(i)} \subseteq \mm_i H_{1, (i)}$ if $c_i \neq 0$ is nilpotent; the measured set is non-zero if and only if $c_i \not \in (0:_{R_i} H_{1, (i)})$, 
    where
    \begin{align*}
        (0:_R M) = \{x \in R \mid xm = 0 \text{ for all } m \in M\}
    \end{align*}
    is the \emph{annihilator} of the module $M$ in $R$.
    In particular, letting $\rho(c_i) = \max \{t \mid c_i \in \mm_i^t\}$, the scalar surgery $\mu_c$ measures the submodule
    \begin{align*}
        c_i H_{1, (i)} \subseteq \mm_i^{\rho(c_i)} H_{1, (i)}.
    \end{align*}
    See \Cref{app:group_algebras}.
    \item Nothing if $c_i$ is zero in $R_i$.
\end{enumerate}
The three cases above cover all possibilities for scalar surgery because in a local Artinian ring, $(R_i, \mm_i)$, every element is either a unit or nilpotent \cite{atiyah_introduction_2018}.

Lastly, we show that if the homology group $H_{1, (i)}$ is cyclic (i.e., the group is generated by one element), then scalar surgery covers all possible measurable maps.
\begin{restatable}[Structure of scalar surgery for cyclic homology]{proposition}{surgerycyclic} \label{prop:cyclic_scalar}
    Fix a block $(R_i, \mm_i)$ and assume that $H_{1, (i)}$ is a cyclic $R_i$-module.
    Then the following hold:
    \begin{enumerate}[(i)]
        \item All measurable maps $\mathcal{M}_i$ are scalar surgeries.
        \item If we further assume that $R_i$ is a chain ring (i.e., all its ideals are totally ordered by inclusion), every $R_i$-submodule of $H_{1, (i)}$ is addressable and can be measured via scalar surgery.
        Moreover, annihilator shift and unit rescaling yield all possible measurement-equivalent surgeries.
    \end{enumerate}
\end{restatable}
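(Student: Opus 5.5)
Write $H = H_{1,(i)} = R_i h$ for a generator $h$, so $H \simeq R_i/I$ with $I = (0:_{R_i} H)$ the annihilator. The plan rests on the standard fact that for a cyclic module over a commutative ring, every endomorphism is multiplication by a scalar. Indeed, if $g \in \operatorname{End}_{R_i}(H)$, write $g(h) = r h$ for some $r \in R_i$. Then $g(sh) = s g(h) = s r h = r(sh)$, so $g = \mu_r$ on $H$. Hence $\operatorname{End}_{R_i}(H) \simeq R_i/I$.

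For part (i), take $g \in \mathcal{M}_i$. By definition it is $f_*$ for some block lifted surgery, and by the above $g = \mu_r$ for some $r \in R_i$. I then extend $r$ to $c \in R$ by putting $c = r$ on block $i$ and $c = 0$ on the other blocks, i.e.\ $c = e_i r$. By \Cref{lemma:scalar_multiplication}, $\mu_c$ is a chain map, hence a lifted surgery, and it acts on $H_{1,(i)}$ as multiplication by $c_i = r$. It therefore induces $g$, so $g$ is realised by a scalar surgery. Interpreting "are scalar surgeries" as "are induced by scalar surgeries", this gives (i). I also note that every $\mu_r$ is measurable, so $\mathcal{M}_i = \operatorname{End}_{R_i}(H) \simeq R_i/I$.

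For part (ii), the submodules of $H \simeq R_i/I$ correspond to ideals $J \supseteq I$ via $J/I \leftrightarrow J h$. I must show each such $Jh$ has the form $rH = (r)h$. This is where the chain-ring hypothesis is used: ideals of a chain ring are principal. Let $J$ be an ideal. Since $R_i$ is Noetherian (it is finite), $J$ is finitely generated, say by $a_1,\dots,a_k$. Total ordering of ideals makes the principal ideals $(a_j)$ a chain, so $J$ equals the largest of them. Alternatively, Nakayama's lemma shows $\mm_i$ is principal and every ideal is $\mm_i^t$. So $J = (r)$, and $Jh = rH$ is measured by $\mu_{e_i r}$. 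Thus every submodule is addressable via scalar surgery.

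For the "moreover" clause, I characterise when $\mu_r$ and $\mu_s$ measure the same submodule, i.e.\ when $(r)+I = (s)+I$ as ideals containing $I$. I would establish two facts.
\begin{enumerate}[(a)]
\item \emph{Annihilator shift.} Adding any $a \in I$ leaves the induced map unchanged, since $\mu_{r+a} = \mu_r$ on $H$.
\item \emph{Unit rescaling.} Multiplying by a unit $u$ does not change the image, since $urH = rH$.
\end{enumerate}
Conversely, suppose $rH = sH$. Then $(r)+I = (s)+I$. In a chain ring this ideal equals $(r)$ or $I$, because one of the two ideals contains the other.
\begin{itemize}
\item If it equals $I$, then both $r$ and $s$ lie in $I$, and they differ by an annihilator shift; both give the zero measurement.
\item Otherwise, $(r) = (s)$ modulo $I$. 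In a local ring, two generators of the same principal ideal differ by a unit. The ideal $(\bar r)$ in $R_i/I$ is cyclic, and associates of a generator in a local ring differ by a unit: if $s = ur + a$ and $r = vs + b$ with $a,b \in I$, then $(1-uv)r \in I$, and either $uv$ is a unit or $r \in I$. Lifting units from $R_i/I$ to $R_i$ is possible because the ring is local. So $s = ur + a$ with $u$ a unit and $a \in I$.
\end{itemize}

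The main obstacle is this last step: showing that units and annihilator shifts account for \emph{all} measurement-equivalent surgeries. It requires care with how the unit lifts modulo $I$ and with the case $uv \notin R_i^\times$, where one must conclude $r \in I$.
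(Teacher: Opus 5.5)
Your proof is correct, and for (i) and the addressability half of (ii) it follows the paper's argument. A cyclic module has only scalar endomorphisms, scalars are chain maps by \Cref{lemma:scalar_multiplication}, and in a chain ring every ideal is principal, so every submodule has the form $rH$.

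The real difference is in the ``moreover'' clause.
\begin{itemize}
\item \textbf{The paper's route.} It fixes a uniformiser $\mm_i=(m)$ and writes each non-zero element as $u\,m^{\rho(c)}$ with $u$ a unit. It then uses strictness of the radical filtration $\mm_i^tH\supsetneq\mm_i^{t+1}H$ (Nakayama) to force $\rho(c)=\rho(c')$ whenever $cH=c'H\neq 0$. The conclusion is that $c'=u'u^{-1}c$ is a pure unit rescaling.
\item \textbf{Your route.} You use the associates argument $(1-uv)r\in I$, which gives $s=ur+a$ with $u\in R_i^\times$ and $a\in I$.
\end{itemize}

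Your route is more general than you present it. The case split via total ordering is unnecessary. If $uv\in\mm_i$, then $r\in I$, hence $s\in(s)+I=(r)+I=I$, and this is an annihilator shift. So your computation shows that the ``moreover'' clause holds for cyclic $H_{1,(i)}$ on every block, whether or not $R_i$ is a chain ring.

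The paper's route genuinely uses the chain-ring structure. It buys a sharper dichotomy: either both elements annihilate $H$, or they differ by a unit alone, with no shift. You can recover this as well. In your second case, $(r)\not\subseteq I$ and $(s)\not\subseteq I$ together with total ordering force $(r)=(r)+I=(s)+I=(s)$ in $R_i$, so $s=ur$ directly.

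Two minor points:
\begin{itemize}
\item The remark about lifting units from $R_i/I$ is superfluous, since your $u$ already lies in $R_i$.
\item The aside that Nakayama shows $\mm_i$ is principal would need the extra observation $\dim(\mm_i/\mm_i^2)\le 1$. Your finite-generation argument for principality already suffices.
\end{itemize}
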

\begin{proof}
    See \hyperref[proof:cyclic_scalar]{proof of \Cref{prop:cyclic_scalar}} in \Cref{app:main_proofs}.
\end{proof}
In other words, a cyclic homology block over a chain ring is fine (cf.~\Cref{def:fine}): scalar surgeries alone realise every addressable submodule.
By contrast, when $H_{1,(i)}$ is not cyclic, scalar surgeries need not exhaust the lifted surgeries: there can be measurable maps that are not scalar.

\subsubsection{Diagonal surgery}
We now introduce diagonal surgery as the most natural generalisation of scalar surgery.
While scalar surgery considers chain maps given by the same scalar multiple of the identity at every level, diagonal surgery allows for chain maps described by arbitrary diagonal maps.
\begin{definition}[Diagonal surgery]
    A lifted surgery $f_\bullet \in \ls_R(C_\bullet)$ is \emph{diagonal} if every $f_n$ is a diagonal matrix over $R$:
    \begin{align*}
        f_n = \mathrm{diag}(c_{n,1}, \dots, c_{n,m}), \quad c_{n, j} \in R.
    \end{align*}
\end{definition}
\begin{proposition}[Existence of diagonal surgery] \label{prop:diagonal_surgery}
    Given a length-$n$ complex $C_\bullet$, the diagonal maps
    \begin{align*}
        f_i = \mathrm{diag}(c_{i,1}, \dots, c_{i,m})
    \end{align*}
    form a chain map $f_\bullet = (f_0, \dots, f_n)$ if and only if, for every $n$ and any pair of indices $(i, j)$,
    \begin{align*}
        c_{n-1, i} + c_{n, j} \in (0:_R d_{i,j}^{(n)})
    \end{align*}
    where $d_{i, j}^{(n)}$ is the $(i, j)$-entry of $\partial_n$.
\end{proposition}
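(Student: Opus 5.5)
The plan is to compute the chain map condition $f_{n-1}\partial_n = \partial_n f_n$ entrywise, using that diagonal matrices act on the left and right by scaling rows and columns, and that $R$ is commutative of characteristic two.

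\textbf{Step 1: compute both sides entrywise.} Fix a degree $n$ and write $\partial_n = (d^{(n)}_{i,j})$, with $i$ indexing the basis of $C_{n-1}$ and $j$ indexing the basis of $C_n$. Left multiplication by $f_{n-1} = \mathrm{diag}(c_{n-1,1},\dots)$ scales row $i$ by $c_{n-1,i}$, so
\begin{align*}
(f_{n-1}\partial_n)_{i,j} = c_{n-1,i}\, d^{(n)}_{i,j}.
\end{align*}
Right multiplication by $f_n$ scales column $j$ by $c_{n,j}$. Using commutativity of $R$,
\begin{align*}
(\partial_n f_n)_{i,j} = d^{(n)}_{i,j}\, c_{n,j} = c_{n,j}\, d^{(n)}_{i,j}.
\end{align*}

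\textbf{Step 2: rewrite the condition at each entry.} Two matrices over $R$ are equal if and only if all their entries agree. So the chain map condition at degree $n$ is equivalent to
\begin{align*}
(c_{n-1,i} - c_{n,j})\, d^{(n)}_{i,j} = 0 \quad \text{for all } (i,j).
\end{align*}
Since $R = \Ft[G]$ has characteristic two, subtraction is addition. The condition therefore reads $(c_{n-1,i}+c_{n,j})\, d^{(n)}_{i,j} = 0$.

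\textbf{Step 3: translate into annihilators.} By the definition of the annihilator $(0:_R d)$ of the cyclic module $Rd$ given earlier, an element $x$ satisfies $xd = 0$ exactly when it kills every element of $Rd$. So the condition in Step 2 is precisely
\begin{align*}
c_{n-1,i}+c_{n,j} \in (0:_R d^{(n)}_{i,j}).
\end{align*}
Quantifying over all degrees $n$ and all index pairs gives both directions of the equivalence at once, because every step above is an if-and-only-if.

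\textbf{Remarks on possible obstacles.} There is no real obstacle here. The only points needing care are bookkeeping of which index labels rows versus columns, and noting that for $d^{(n)}_{i,j}=0$ the annihilator is all of $R$, so such pairs impose no constraint. I would also note the mild notational clash in the statement, where $n$ is used both for the length of the complex and for the degree; I would reinterpret it as ranging over all degrees.
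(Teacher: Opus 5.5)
Your proposal is correct and follows the paper's own argument: compare the $(i,j)$ entries of $f_{n-1}\partial_n$ and $\partial_n f_n$ to get $c_{n-1,i}\,d^{(n)}_{i,j} = c_{n,j}\,d^{(n)}_{i,j}$, then use commutativity and characteristic two to rewrite this as membership in $(0:_R d^{(n)}_{i,j})$. Your extra remarks on the row/column bookkeeping, the zero-entry case, and the clash between the two uses of $n$ are accurate but not needed for the proof.
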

\begin{proof}
    By definition of a chain map $f_\bullet$, $f_{n-1}\partial_n = \partial_n f_n$.
    Entrywise, for a diagonal map, this holds if and only if
    \begin{align} \label{eq:chain_diago}
        c_{n-1, i} d_{i, j}^{(n)} = c_{n, j} d_{i, j}^{(n)}.
    \end{align}
  Over a commutative $\Ft$-algebra $R$, \cref{eq:chain_diago} holds if and only if $c_{n-1, i} + c_{n, j} \in (0:_R d_{i,j}^{(n)})$, as desired.
\end{proof}
\Cref{prop:diagonal_surgery} says that diagonal surgery is of interest when the entries of the differentials are zero divisors.
When this is not the case, and $(0 :_{R} d_{i, j}^n) =0$, we recover scalar surgery.
In fact, equal coefficients satisfy
\begin{align*}
    c_{n-1, i} + c_{n, j} = 0 \quad \text{in } R
\end{align*}
as $R$ has characteristic $2$.

\subsubsection{Lifted product surgery}\label{sec:lifted-prod-surgery}
We now specialise to tensor products of $R$-complexes.
Take the free modules
\begin{align*}
   A_0 = R^{n_A}, \quad A_1 = R^{m_A} 
\end{align*}
and the $R$-linear map $\partial_A: A_1 \rightarrow A_0$, represented by an $n_A \times m_A$ matrix over $R$ that we again denote by $\partial_A$.
Similarly for $B_0, B_1$ and $\partial_B$. 
We describe a lifted product code using the following chain complex:
\begin{equation*} 
    \begin{tikzcd}[column sep=13, row sep=10]
        & {A_0 \otimes B_1}    & &  \\
        {A_1 \otimes B_1}      & & {A_0\otimes B_0} \\
        & {A_1 \otimes B_0}
        \arrow[from=2-1, to=1-2, "\partial_A \otimes \idty_{m_B}"]
        \arrow[from=1-2, to=2-3, "\idty_{n_A} \otimes \partial_B"]
        \arrow[from=2-1,to=3-2,"\idty_{m_A} \otimes \partial_B "' ]
        \arrow[from=3-2, to=2-3, "\partial_A \otimes \idty_{n_B}"']
    \end{tikzcd}
\end{equation*}
where
\begin{align*}
    H_Z &= \left( \begin{array}{c|c} \partial_A^* \otimes \idty_{m_B}& \idty_{m_A} \otimes \partial_B^*\end{array} \right) \\
    H_X &= \left( \begin{array}{c|c} \idty_{n_A} \otimes \partial_B & \partial_A \otimes \idty_{n_B}\end{array} \right) \, .
\end{align*}
Here we take the tensor product $\otimes$ over $R$, i.e., $\otimes = \otimes_R$, and for the element $r = \sum_{g \in G} r_g g \in R$, $r_g \in \Ft$, we define its conjugate $r^*=\sum_{g\in G} r_g g^{-1}$.
For a matrix $M$ with coefficients in $R$, $(M^*)_{ij} = (M_{ji})^*$.

We take an auxiliary copy of the base code and we consider the following diagram:
\begin{equation} \label{eq:lifted_product_surgery}
    \begin{tikzcd}[column sep=13, row sep=10, ampersand replacement=\&]
        \& {A_0 \otimes B_1} \& \\
        {A_1 \otimes B_1} \&  \& {A_0 \otimes B_0} \\
        \& {A_1 \otimes B_0} \& \\[8em]
        \& {A_0 \otimes B_1} \& \\
        {A_1 \otimes B_1} \&  \& {A_0 \otimes B_0} \\
        \& {A_1 \otimes B_0}
        \arrow[from=2-1, to=1-2]
        \arrow[from=2-1, to=3-2]
        \arrow[from=1-2, to=2-3]
        \arrow[from=3-2, to=2-3]
        \arrow[from=5-1, to=4-2]
        \arrow[from=5-1, to=6-2]
        \arrow[from=4-2, to=5-3]
        \arrow[from=6-2, to=5-3]
        \arrow[
            dashed,
            from=2-1,
            to=5-1,
            "\alpha_1 \otimes \beta_1" description,
            shorten <=20pt,
            shorten >=20pt
        ]
        \arrow[
            dashed,
            from=3-2,
            to=4-2,
            "{\begin{pmatrix}
                \alpha_0 \otimes \beta_1 & 0 \\
                0 & \alpha_1 \otimes \beta_0
            \end{pmatrix}}" description,
            shorten <=4pt,
            shorten >=4pt
        ]
        \arrow[
            dashed,
            from=2-3,
            to=5-3,
            "\alpha_0 \otimes \beta_0" description,
            shorten <=20pt,
            shorten >=20pt
        ]
    \end{tikzcd}
\end{equation}
The diagram is valid when both squares commute, or equivalently, whenever the vertical arrows define a chain map between the two copies of the lifted product complex.
Crucially, in the diagram in \cref{eq:lifted_product_surgery}, the middle component preserves the decomposition of the middle space $\big(A_0 \otimes B_1\big) \oplus \big(A_1 \otimes B_0\big)$.

Expanding the first square gives
\begin{align*}
\partial_A\alpha_1 \otimes \beta_1 &= \alpha_0\partial_A \otimes \beta_1,\\
\alpha_1 \otimes\partial_B \beta_1 &=\alpha_1 \otimes \beta_0 \partial_B,
\end{align*}
and expanding the second square we have
\begin{align*}
\alpha_0 \otimes \partial_B\beta_1 &= \alpha_0 \otimes \beta_0 \partial_B,\\
\partial_A \alpha_1 \otimes \beta_0 &=\alpha_0\partial_A \otimes \beta_0.
\end{align*}
Thus a simple set of sufficient conditions is
\begin{align*}
    \partial_A \alpha_1 &= \alpha_0 \partial_A, \nonumber\\
    \partial_B \beta_1 &= \beta_0 \partial_B.
\end{align*}
In other words, it is sufficient for $\alpha=(\alpha_1,\alpha_0)$ and $\beta=(\beta_1,\beta_0)$ themselves to be chain maps on the two constituent complexes:
\begin{equation*}
    \begin{tikzcd}
        A_1 & A_0\\
        A_1 & A_0
        \arrow[from=1-1, to=1-2, "\partial_A"]
        \arrow[from=2-1, to=2-2]
        \arrow[dashed, from=1-1, to=2-1, "\alpha_1" description]
        \arrow[dashed, from=1-2, to=2-2, "\alpha_0" description]
    \end{tikzcd},
    \quad\quad
    \begin{tikzcd}
        B_1 & B_0\\
        B_1 & B_0
        \arrow[from=1-1, to=1-2, "\partial_B"]
        \arrow[from=2-1, to=2-2]
        \arrow[dashed, from=1-1, to=2-1, "\beta_1" description]
        \arrow[dashed, from=1-2, to=2-2, "\beta_0" description]
    \end{tikzcd}
\end{equation*}
where the maps $\alpha_i$ and $\beta_i$ are $R$-linear maps. 

Formally, we have the following two definitions.
\begin{definition}[Chain pair] \label{def:chain_pairs}
    Let $A_\bullet : A_1 \xrightarrow[]{\partial_A} A_0$ be an $R$-complex.
    A \emph{chain pair} on $A_\bullet$ is a pair of $R$-linear maps $\alpha = (\alpha_1, \alpha_0), \alpha_n \in \operatorname{End}_R(A_n)$ such that $\partial_A \alpha_1 = \alpha_0 \partial_A$.
    We write $\mathcal{E}_A$ for the set of all chain pairs of the complex $A_\bullet$.
\end{definition}
\begin{definition}[Lifted product surgery] \label{def:lifted_prodcut_surgery}
    Given two chain pairs $\alpha$ and $\beta$, we define a \emph{lifted product surgery map} as the lifted surgery map defined as in \cref{eq:lifted_product_surgery}.
    For a lifted product surgery derived from chain pairs $\alpha$ and $\beta$, we write
    \begin{align*}
        \alpha \otimes \beta = \begin{pmatrix}
            \alpha_1 \otimes \beta_1, &\begin{pmatrix}
             \alpha_0 \otimes \beta_1 & 0 \\
             0 & \alpha_1 \otimes \beta_0
            \end{pmatrix},
            & \alpha_0 \otimes \beta_0
        \end{pmatrix}.
    \end{align*}
\end{definition}
Combining \Cref{def:chain_pairs} and \Cref{thm:ls_block_decomposition}, it is immediate to see that $\mathcal{E}_A$ is a unital $R$-subalgebra of $\operatorname{End}_R(R^{n_A}) \times \operatorname{End}_R(R^{m_A})$, and in particular block-wise computable.
Moreover, lifted product surgery is closed under multiplication:
\begin{align*}
   ( \alpha \otimes \beta ) \circ (\alpha' \otimes \beta') = (\alpha \circ \alpha') \otimes (\beta \circ \beta').
\end{align*}
As an example, let $\lambda \in R$ and $v_i, w_i \in A_i$ such that
\begin{align*}
    \partial_A v_1 = \lambda v_0,\\
    w_0^T \partial_A = \lambda w_1^T.
\end{align*}
Then $(v_1w_1^T,v_0 w_0^T)$ is a chain pair for $\partial_A$.
In fact,
\begin{align*}
    \partial_A(v_1w_1^T) = (\lambda v_0) w_1^T = v_0 (w_0^T \partial_A) = (v_0 w_0^T)\partial_A.
\end{align*}
Combining this chain pair with the identity on $B_\bullet$ gives a lifted product surgery.
\begin{remark}
    Hypergraph product (HGP) codes~\cite{tillich_quantum_2014} are lifted product codes over the group algebra $R=\Ft$.
    The block decomposition of $R = \Ft$ consists of a single field block so, by \Cref{cor:full_addressability}, the homology is \emph{completely fine} and every binary subspace of the logical space is addressable.
    The relevant question for HGP codes is therefore not whether a logical measurement can be realised but whether the merged code has good properties.
    
    Lifted product surgery on HGP codes can measure any tensor-product space of the form
    \begin{align*}
        W = \bigl( \im \alpha_0 \otimes \im \beta_1\bigr) \oplus \bigl( \im \alpha_1 \otimes \im \beta_0 \bigr)
    \end{align*}
    for a total of 
    \begin{align*}
    \rank(\alpha_0) \rank(\beta_1) + \rank(\alpha_1) \rank(\beta_0)    
    \end{align*}
     measured logical operators. 
     Here, with a slight misuse of notation, we write, e.g., $\alpha_0$ for the induced map on homology $(\alpha_0)_*$.
     In particular, single logical qubits can be addressed by rank-one chain pairs.
     Generic subspaces, such as a logical that spans the two sectors, can instead be measured via general lifted surgery (\Cref{def:lifted_surgery} and \Cref{cor:full_addressability}).
     
     We can interpret the method proposed in Ref.~\cite{chang_constanttime_2026} as a form of lifted product surgery in which the auxiliary code is not a copy of the base code, but rather a subcode.
     In this formulation, one of the two chain-map pairs is $(\idty,\idty)$, while the other has the form
     \begin{align*}
         (\rank 1, 0) \quad\text{ or }\quad (0, \rank 1).
     \end{align*}
    Their scheme measures equivalent logical operators across rows or columns in parallel, with rigorous LDPC, distance preservation, and fault-tolerance guarantees in the sense of Ref.~\cite{cowtan_fast_2025}.
    Our lifted surgery framework offers a way to individually target logical qubits, and hence offers a potential solution to an open problem in that work, albeit without the rigorous guarantees on sparsity and distance that their work provides.
\end{remark}

\subsection{Surgery equivalence} \label{sec:surgery-equivalence}
Different lifted surgery maps in $\ls_R(C_\bullet)$ may implement the same logical measurement on $C_\bullet$ but exhibit very different properties when we look at their associated cones over $\Ft$.
Since the sparsity and distance of the merged code are critical to the fault-tolerance of the surgery procedure, understanding how to find equivalent lifted surgeries that yield good code properties when we lift the merged chain complex to $\Ft$ is essential.
\begin{definition}[Homotopic equivalence]
    Two lifted surgeries $f_\bullet, f'_\bullet$ are \emph{homotopic}, $f_\bullet\simeq f'_\bullet$, if there exist $R$-linear maps $h_n :C_{n} \longrightarrow C_{n+1}$ such that
    \begin{align*}
        f_n - f_n' = \partial_{n+1}h_{n}-h_{n-1}\partial_n.
    \end{align*}
    In short, $f_\bullet\simeq f'_\bullet$ if $ f - f' = \partial h - h \partial$.
\end{definition}
\begin{restatable}{proposition}{homotopicequivalence} \label{prop:cone_homotopy}
    Let $f, f' : D_\bullet \rightarrow C_\bullet$ be chain maps. If $f$ and $f'$ are homotopic then $f_* = f_*'$ on every homology group and the chain complexes $\cone(f)$ and $\cone(f')$ are isomorphic.
\end{restatable}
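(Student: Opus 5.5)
The statement has two parts: homotopic maps induce the same map on homology, and their cones are isomorphic. Throughout I work over $R$ of characteristic two, so signs can be dropped, and I write $h$ for the homotopy with $f_n - f'_n = \partial^C_{n+1} h_n - h_{n-1}\partial^D_n$. For a chain map $D_\bullet \to C_\bullet$ we have $h_n : D_n \to C_{n+1}$.

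\emph{Step 1 (equality on homology).} Take a cycle $z \in D_n$, so $\partial^D_n z = 0$. Then
\[
f_n(z) - f'_n(z) = \partial^C_{n+1} h_n(z) - h_{n-1}\partial^D_n z = \partial^C_{n+1} h_n(z) \in \im \partial^C_{n+1}.
\]
Hence $[f_n z] = [f'_n z]$ in $H_n(C_\bullet)$, and so $f_*^n = f'^n_*$.

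\emph{Step 2 (explicit cone isomorphism).} Recall that $\cone(f)_n = D_{n-1}\oplus C_n$. I define
\[
\Phi_n = \begin{pmatrix} \idty & 0 \\ h_{n-1} & \idty \end{pmatrix} : D_{n-1}\oplus C_n \to D_{n-1}\oplus C_n .
\]
This map is unitriangular, and in characteristic two it is an involution. Its inverse has the same shape with $-h_{n-1}$ in place of $h_{n-1}$, so $\Phi_n$ is an isomorphism of $R$-modules, and in particular of based $\Ft$-spaces.

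It remains to check the chain condition $\partial^{\cone(f')}_n \Phi_n = \Phi_{n-1}\partial^{\cone(f)}_n$. I would compare the two sides block by block.
\begin{itemize}
\item Top-left entry: both sides give $\partial^D_{n-1}$.
\item Top-right entry: both sides give $0$.
\item Bottom-right entry: both sides give $\partial^C_n$.
\item Bottom-left entry: the left side gives $f'_{n-1} + \partial^C_n h_{n-1}$, and the right side gives $h_{n-2}\partial^D_{n-1} + f_{n-1}$.
\end{itemize}
The bottom-left entries agree exactly when $f_{n-1} - f'_{n-1} = \partial^C_n h_{n-1} - h_{n-2}\partial^D_{n-1}$, which is the homotopy identity in degree $n-1$. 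So $\Phi$ is a chain isomorphism $\cone(f)\to\cone(f')$.

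\emph{Where care is needed.} The only delicate point is the index bookkeeping in the bottom-left entry, namely matching the degree shift in $\cone(f)_n = D_{n-1}\oplus C_n$ with the correct degree of $h$. Because the complexes are finite, $h$ vanishes outside the range where it is defined, so the boundary degrees need no separate treatment.

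I would also note that the isomorphism $\Phi$ is not a relabelling of qubits. It changes the basis by adding auxiliary components to the base-code components, so the cones are isomorphic as complexes but their check weights and distances over $\Ft$ can differ.
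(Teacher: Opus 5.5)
Your proof is correct and matches the paper's argument: you show that $f-f'$ sends cycles to boundaries, and you use the same unitriangular map $(x,y)\mapsto(x,h_{n-1}(x)+y)$ on $D_{n-1}\oplus C_n$, checked blockwise and self-inverse in characteristic two. One small wording point: $\Phi_n$ is an $\Ft$-linear isomorphism but does not preserve the chosen bases, so ``isomorphism of based $\Ft$-spaces'' is misleading, though your closing remark already makes this clear.
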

\begin{proof}
    See~\Cref{app:main_proofs}.
\end{proof}
The merged codes built from homotopic lifted surgery maps are isomorphic over $R$, and therefore have the same physical and logical qubit counts $[[n_\textrm{merged}, k_\textrm{merged}]]$.
Their binary boundary maps, however, may have different row and column weights.
Indeed, the weights of binary codes are invariant only under qubit permutations, which over $R$ correspond to monomial matrices with entries in $G$.
The isomorphism of the cone code induced by a homotopy is not, in general, of this form and lifts over $\Ft$ to a general invertible matrix.
Invertible matrices correspond to CNOT circuits which may change the weights of the code.

Homotopy is therefore a powerful feature of lifted surgery.
Given a representative $f$ of a lifted surgery, we can search within its homotopy class for $f' \simeq f$ such that the merged code has improved parameters in terms of code distance and qubit/check degree.
See \Cref{alg:optimise}.
\begin{definition}[Measurement equivalence of scalar surgery] \label{def:meas_equiv}
    Two scalar surgeries $\mu_c$ and $\mu_{c'}$ are \emph{measurement-equivalent} if they define the same addressable submodule:
     \begin{align*}
        c H_1(C_\bullet) = c' H_1(C_\bullet).
    \end{align*}
\end{definition}
The lemma below exhibits two simple sufficient conditions for measurement equivalence of scalar surgery.
\begin{lemma}[Equivalence of scalar surgery] \label{lemma:meas_equivalence}
    Given two scalar surgeries $\mu_c$ and $\mu_{c'}$, each of the following implies measurement equivalence:
    \begin{enumerate}[(i)]
        \item Annihilator shift: $c + c' \in (0:_R H_1(C_\bullet))$; in this case $(\mu_c)_*^1 = (\mu_{c'})_*^1$ as maps on $H_1(C_\bullet)$.
        \item Unit rescaling: $c' = uc$ with $u \in R$ invertible; in this case the induced maps on $H_1$ differ by the automorphism $(\mu_u)_*$.
    \end{enumerate}
\end{lemma}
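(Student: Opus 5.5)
The plan is to use \Cref{lemma:scalar_multiplication}, which says the map induced by $\mu_c$ on $H_1(C_\bullet)$ is multiplication by $c$, i.e.\ $(\mu_c)_*^1[z] = [cz] = c[z]$. Both parts then reduce to elementary manipulations in the $R$-module $H_1(C_\bullet)$. By \Cref{def:meas_equiv}, it suffices in each case to show $cH_1(C_\bullet) = c'H_1(C_\bullet)$.

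For the annihilator shift (i), fix a class $x \in H_1(C_\bullet)$. Since $R$ has characteristic two, $c' = c + (c + c')$, so
\begin{align*}
    c'x = cx + (c+c')x = cx,
\end{align*}
using that $c + c'$ annihilates every element of $H_1(C_\bullet)$. Hence $(\mu_{c'})_*^1 = (\mu_c)_*^1$ as endomorphisms of $H_1(C_\bullet)$, and so their images $c'H_1(C_\bullet)$ and $cH_1(C_\bullet)$ coincide. The one point to state explicitly is that $(c+c')x$ is computed in homology: a cycle representative is only sent into $\im\partial_2$, not to zero. This is harmless, because the annihilator is taken in $H_1$ and the homology map is well defined.

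For unit rescaling (ii), write $c' = uc$. By $R$-linearity and commutativity, $(\mu_{c'})_*^1 = (\mu_u)_*^1 \circ (\mu_c)_*^1$ (this also follows from functoriality of homology applied to $\mu_{uc} = \mu_u \circ \mu_c$). Since $\mu_{u^{-1}}$ is a chain map inverse to $\mu_u$, the induced map $(\mu_u)_*$ is an automorphism of $H_1(C_\bullet)$. We then get
\begin{align*}
    c'H_1 = u\,(cH_1) \subseteq cH_1,
\end{align*}
because $cH_1$ is an $R$-submodule. Applying the same argument to $c = u^{-1}c'$ gives the reverse inclusion, so the two images are equal.

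There is no real obstacle here. The only care needed is to work at the level of homology classes rather than chains, and to note that submodules are stable under multiplication by $R$, which is what gives the inclusion in (ii).
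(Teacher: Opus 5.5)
Your proof is correct and follows the same route as the paper, which just calls the lemma immediate from the definition of measurement equivalence. You have written out the two one-line checks that the paper leaves implicit: characteristic two plus the annihilator condition for (i), and the mutual inclusions $c'H_1 \subseteq cH_1$ and $cH_1 \subseteq c'H_1$, using $u$ and $u^{-1}$, for (ii).
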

\begin{proof}
    Immediate from \Cref{def:meas_equiv}.
\end{proof}
Measurement equivalence is, in general, weaker than homotopy, which forces equality of the induced maps at all homology levels $H_{n}(C_\bullet)$.

In \Cref{alg:optimise} below, the term \emph{beats} refers to a chosen figure of merit of the binary merged code, usually check and qubit degree or distance. 
\begin{algorithm}[t] \label{alg:optimise}
    \caption{Lifted surgery optimisation}
    \begin{algorithmic}[1]
        \Require Sample budget $N$
        \Statex A lifted surgery $f_\bullet \in \mathrm{LS}_R(C_\bullet)$ \Comment{ \Cref{alg:lifted_surgery}}
        \Ensure A surgery $f'_\bullet$ equivalent to $f_\bullet$ 
        \Statex with weight-optimised merged code $\cone(f'_\bullet)$
        \State $f'_\bullet \gets f_\bullet$
        \If{$f_\bullet = \mu_c$ } \Comment{scalar surgery}
            \For{$s = 1, \dots, N$}
                \For{each block $R_i$}
                    \State sample $u_i \in R_i^\times$ and $a_i \in (0:_{R_i} H_{1,(i)})$
                    \State $c_i' \gets u_i c_i + a_i$
                \EndFor
                \State $c' \gets \sum_i c_i'$; \,
                \If{$\cone(\mu_{c'})$ beats $\cone(f'_\bullet)$}
                \State set $f'_\bullet \gets \mu_{c'}$
                \EndIf
            \EndFor
        \Else
         \Comment{generic surgery}
            \For{$s = 1, \dots, N$}
                \State sample $h_0 \in \Hom_R(C_0, C_1)$,
                \Statex sample $h_1 \in \Hom_R(C_1, C_2)$
                \State $f_0'' \gets f_0' + \partial_1 h_0$
                \State $f_1'' \gets f_1' + \partial_2 h_1 + h_0 \partial_1$
                \State $f_2'' \gets f_2' + h_1 \partial_2$
                \If {$\cone(f''_\bullet)$ beats $\cone(f'_\bullet)$}
                \State set $f'_\bullet \gets f''_\bullet$
                \EndIf
            \EndFor
        \EndIf \\
        \Return $f'_\bullet$
    \end{algorithmic}
\end{algorithm}

\section{Lifted Surgery: Constructions}\label{sec:ls-constructions} \label{sec:constructions}
In the previous sections, we introduced lifted surgery formally, showed that the set of lifted surgery maps is a computable space of chain maps and described concrete methods for finding good representatives via homotopy and measurement equivalence.
\Cref{alg:lifted_surgery} and~\Cref{alg:optimise} summarise the resulting methods.
In this section, we take a constructive perspective and show how to implement lifted surgery explicitly in concrete code instances.

\Cref{tab:surgery_summary} summarises the results.
We determine the overheads by noting that for a code consisting of $n$ data qubits, $m_X$ $X$-type checks, and $m_Z$ $Z$-type checks, we obtain a merged code with $n + m_X$ data qubits, $m_Z + n$ $Z$-type checks, and $m_X$ $X$-type checks (discounting any $X$-type checks inferred during readout), when performing a $Z$-type logical measurement.

\begin{remark}\label{remark:lifted-surgery-families}
The notion of a code \emph{family} commonly used in the literature -- e.g., the family of bivariate bicycle codes -- is in general too restrictive for the subsequent analyses.
For a fixed \emph{binary} code $C$ represented over a group algebra, $R = \Ft[G]$, the measurable chain maps (see \Cref{def:addressable_modules}) determine the set of lifted surgery operations and are $R$-linear by definition.
If the same binary code $C$ admits an alternative representation over a different group algebra, $R' = \Ft[G']$, the corresponding set of $R'$-linear maps can be different.
As discussed below the proof of \Cref{thm:ls_block_decomposition}, the size of the orbits of the action of the group $G$ controls the addressability of a lifted surgery map.
Different groups can therefore lead to different orbit structures. 
Hence, the same binary code may support different sets of lifted surgery operations depending on its group algebra representation.
\end{remark}

\subsection{Algebraic tools for block decomposition} \label{sec:2BGACs-block-decomposition}
In \Cref{sec:algebras} we described the block decomposition of $R$ abstractly and showed that it induces a corresponding block decomposition of the logical space.
The Koszul framework extends beyond two-block codes to constructions such as trivariate codes~\cite{jacob_singleshot_2026}, and allows local blocks that are not fields.
Within this framework, \Cref{thm:block_koszul_homology} identifies the blocks supporting logical operators directly from local algebraic data.
We now describe how to concretely compute the blocks of a Koszul complex from a polynomial representation.
For
\begin{align*}
    G = \mathbb{Z}_{\ell_1} \times \ldots \times \mathbb{Z}_{\ell_t}
\end{align*}
we write its group algebra as $R = B/I$, where
\begin{align*}
    B = \Ft[x_1, \dots, x_t] \quad \text{and}\quad I = (x_1^{\ell_1}-1, \dots, x_t^{\ell_t}-1).
\end{align*}
We write the primary decomposition of the ideal $I$ as
\begin{align*}
    I = T_1 \cap \cdots \cap T_\nu
\end{align*}
where $T_i$ are its primary components.
This gives a concrete realisation of the blocks from \Cref{sec:algebras} as
\begin{align*}
    R_i \simeq B / T_i.
\end{align*}
For each primary component we define the radical as
\begin{align*}
    P_i = \sqrt{T_i}.
\end{align*}
Because $T_i$ is primary, $P_i$ is a prime ideal.
Since $I$ is a zero-dimensional ideal, $R_i$ is local and $\mm_i = P_i/T_i$ is its unique maximal ideal.
This is useful because a polynomial $f \in B$ becomes a non-unit on the $i$th block when $f \in P_i$.

For a Koszul complex, $K_\bullet(a, b)$, define the ideal
\begin{align*}
    J = (a, b).
\end{align*}
The homology is supported on the $i$th block exactly when both $a$ and $b$ are non-units in $R_i$, or equivalently, when
\begin{align*}
    J \subseteq P_i.
\end{align*}
We define the surviving part of the block as
\begin{align*}
    Q_i = B/(T_i + J).
\end{align*}
Then
\begin{align*}
    J \subseteq P_i \iff Q_i \neq 0,
\end{align*}
as in \Cref{thm:block_koszul_homology}.
Each Koszul code example in the subsequent section follows the same procedure: compute the primary components $T_i$, use $P_i$ to identify the supported blocks, inspect the corresponding quotients $Q_i$, and then find sparse representatives for each class of lifted surgery map.
We compute the decompositions using specialised algorithms based on Gr\"obner bases~\cite{gianni_grobner_1988, decker1999primary, cox_ideals_2007}.
\begin{table*}[tb]
    \centering
    \renewcommand{\arraystretch}{1.25}
    \setlength{\tabcolsep}{2pt}
    \begin{tabular*}{\textwidth}{@{\extracolsep{\fill}}lllllll@{}}
        \toprule
        base code & type & description & rank & merged code & $\omega_c$ & $\omega_q$ \\
        \midrule
        GB $\llbracket 210, 10, 10\rrbracket$
        & scalar
        & $1 + x + x^{29}$
        & $6$
        & $\llbracket 315, 4, \leq 10\rrbracket$
        & $9$
        & $9$ \\
        & scalar
        & $1 + x^3 + x^{29}$
        & $4$
        & $\llbracket 315, 6, \leq 10\rrbracket$
        & $9$
        & $9$ \\
        \midrule
        BB $\llbracket 144, 12, 12\rrbracket$
        & scalar
        & $1 + xy + x^3 y^2 + x^5 y^4$
        & $4$
        & $\llbracket 216, 8, \leq 10\rrbracket$
        & $10$
        & $11$ \\
        & scalar
        & $1 + xy + x^2 y^3 + x^6 y$
        & $8$
        & $\llbracket 216, 4, \leq 11\rrbracket$
        & $10$
        & $11$ \\
        & lifted
        & any rank-$1$ subspace of $W$ in~\cref{eq:gross-subspace}
        & $1$
        & $\llbracket 216, 11, \leq 12\rrbracket$
        & $14$
        & $14$ \\
        \midrule
        GHP $\llbracket 882, 48, 16\rrbracket$
        & scalar
        & $(1 + x) + (x^9 + x^{38})y + (1 + x)y^3$
        & $12$
        & $\llbracket 1323, 36, \leq 16\rrbracket$
        & $14$
        & $15$ \\
        & scalar
        & $(1 + x^{10}) + (x^7 + x^{43})y + (1 + x^{10})y^5$
        & $36$
        & $\llbracket 1323, 12, \leq 16\rrbracket$
        & $14$
        & $15$ \\
        \midrule
        QRC $\llbracket 90, 8, 10\rrbracket$
        & lifted
        & any rank-$1$ subspace ($15$ orbits)
        & $1$
        & $\llbracket 135, 7, \leq 10\rrbracket$
        & $11$
        & $11$ \\
        & lifted
        & any rank-$2$ subspace ($380$ orbits)
        & $2$
        & $\llbracket 135, 6, \leq 10\rrbracket$
        & $12$
        & $12$ \\
        \midrule
        QRC $\llbracket 198, 8, 16\rrbracket$
        & lifted
        & any rank-$1$ subspace ($45$ orbits)
        & $1$
        & $\llbracket 297, 7, \leq 16\rrbracket$
        & $14$
        & $14$ \\
        \bottomrule
    \end{tabular*}
    \caption{Summary of optimised surgery constructions.
    For scalar constructions, the description gives the corresponding algebra element.
    For more general lifted surgery constructions, which we obtain by numerical search, it describes the accessible logical subspaces.
    Here rank denotes the number of independent logical operators measured, and $\omega_c$ and $\omega_q$ denote the maximum check weight and qubit degree of the merged code.    
    }
    \label{tab:surgery_summary}
\end{table*}

\subsection{$\llbracket 90,8,10 \rrbracket$ code}\label{sec:ls-constructions-qrc}
Consider the $\llbracket 90,8,10 \rrbracket$ quantum radial code defined as a lifted product code over $R = \Ft[x]/(x^5 - 1)$.
Its protographs are
\begin{align*}
    a = \begin{pmatrix}
        x^2 & x^3 & x^4 \\
        x & x^4 & x \\
        x^4 & x^3 & x^2
    \end{pmatrix},
    \quad
    b = \begin{pmatrix}
        x^2 & x^2 & 1 \\
        x^4 & 1 & x^4 \\
        x & x^3 & 1
    \end{pmatrix}.
\end{align*}
As discussed in~\Cref{remark:lifted-surgery-families}, the available surgeries depend on the group presentation.
The same code has a bivariate bicycle presentation over $R' = \Ft[G'] \simeq \Ft[x, y]/I$, where $I = (x^{15} - 1, y^3 - 1)$, with $a = x^9 + y + y^2$ and $b = 1 + x^2 + x^7$~\cite{bravyi_highthreshold_2024}.

\subsubsection{Scalar surgery}
We first analyse the presentation over $R'$.
Since $\ell = 15$ and $m = 3$ are odd, its primary components coincide with their associated primes, $T_i = P_i$, and every local block is a field.
Let $J = (a, b)$.
In lexicographic order $x > y$, a Gr\"obner basis of $I + J$ is
\begin{align*}
    \cG = \{x^2 + x + 1, y^2 + y + 1\},
\end{align*}
with standard monomial basis $\mathfrak{b}_J = \{1, x, y, xy\}$, giving $k = 2 |\mathfrak{b}_J| = 8$.

By \Cref{thm:block_koszul_homology}, only two of the $14$ blocks support homology.
The factorisation
\begin{align*}
    b = (x^2 + x + 1)(x^5 + x^4 + x^2 + x + 1)
\end{align*}
gives the corresponding primary components
\begin{align*}
    T_1 &= (x + y, x^2 + x + 1), \\
    T_2 &= (x + y + 1, x^2 + x + 1),
\end{align*}
and both local factors satisfy $Q_i \simeq \mathbb{F}_{2^2}$.
Each homology block therefore contains four logical qubits.

Scalar surgery is all-or-nothing on these field factors.
Up to measurement equivalence and composition, two non-trivial representatives are
\begin{align*}
    c_1 &= x^4y^2 + x^5y^2 + x^7y,\\
    c_2 &= xy + x^3y^2 + x^{11}.
\end{align*}
They measure the blocks associated with $Q_2$ and $Q_1$, respectively, since for $\{i, j\} = \{1, 2\}$,
\begin{align*}
    c_i \equiv 0 \pmod{T_i + J}, \qquad c_i \equiv u_j \pmod{T_j + J},
\end{align*}
for some unit $u_j$.
Each surgery measures four logical operators across both sectors of the corresponding homology block.

These representatives are measurement-equivalent to the lower-weight elements
\begin{align*}
    c_1' &= x + y,\\
    c_2' &= x + y + 1.
\end{align*}
Both $c_1$ and $c_2$ yield merged codes with
\begin{align*}
    \llbracket 135, 4, \leq 8 \rrbracket, \qquad (\omega_c, \omega_q) = (9, 9),
\end{align*}
whereas $c_1'$ and $c_2'$ give merged-code distances at most $6$.

\subsubsection{Completely fine measurements}\label{sec:ls-constructions-r35-rank1}

We now return to the presentation over $R = \Ft[x]/(x^5 - 1)$, whose block decomposition is
\begin{align*}
    R_1 &= \Ft[x]/(x - 1) \cong \Ft,\\
    R_2 &= \Ft[x]/(x^4 + x^3 + x^2 + x + 1) \cong \F_{16}.
\end{align*}
All logical operators are supported on $R_1$, so $H_1 = H_{1,(1)}$ is an eight-dimensional $\Ft$-vector space on which $x$ acts trivially.
Because $R_1 \cong \Ft$, its submodules are precisely the binary subspaces of $H_1$; hence this block is \emph{completely fine} by~\Cref{def:fine}.
Thus every logical subspace is addressable, and the remaining task is to find sparse chain-map representatives.
We describe this search in \Cref{sec:ls-constructions-complete-add}.

\subsubsection{Chain pair measurements}\label{sec:ls-constructions-radial35-chainpair}
As a lifted product code, it also admits \emph{lifted product surgery}; see \Cref{sec:lifted-prod-surgery}. Let $\partial_A  = a$ be the $3 \times 3$ protograph defined at the beginning of this section. Take a chain pair $\alpha \in \mathcal{E}_A$ on the first factor and the identity $(\idty, \idty) \in \mathcal{E}_B$ on the second. 
The resulting chain map has components
\begin{align*}
    f_2 = \alpha_1 \otimes \idty, \quad
    f_1 = (\alpha_0 \otimes \idty,\ \alpha_1 \otimes \idty), \quad
    f_0 = \alpha_0 \otimes \idty,
\end{align*}
and both surgery conditions reduce to $\partial_A \alpha_1 = \alpha_0 \partial_A$.

For the rank-one construction of \Cref{sec:lifted-prod-surgery}, take $v = w = (1, 0, 1)^{\mathsf T}$.
Then
\begin{align*}
    \partial_A v = \lambda v, \qquad w^{\mathsf T} \partial_A = \lambda w^{\mathsf T},
    \qquad \lambda = x^2 + x^4,
\end{align*}
so $\alpha_0 = \alpha_1 = v w^{\mathsf T}$ is the chain pair
\begin{align*}
    \alpha_i = \begin{pmatrix} 1 & 0 & 1\\ 0 & 0 & 0\\ 1 & 0 & 1 \end{pmatrix}.
\end{align*}
The homology is supported on $R_1 \simeq \Ft$, where both protographs reduce to the all-ones matrix $J_3$.
Since $v \in \ker J_3$ and $w^{\mathsf T}$ is nonzero on $\ker J_3$, $\alpha_i$ has rank one on each factor.
The induced map therefore has rank $1 \cdot 2 + 1 \cdot 2 = 4$ and measures
\begin{align*}
    W = \operatorname{span}(Z_1 Z_7,\, Z_3,\, Z_4,\, Z_8), \qquad \dim_{\Ft} W = 4 \, .
\end{align*}
The merged code is $\llbracket 135, 4, \leq 10 \rrbracket$ with $(\omega_c, \omega_q) = (8, 8)$ and maximum $Z$-check weight $6$.

\subsection{$\llbracket 144,12,12 \rrbracket$ ``Gross Code''}\label{sec:ls-constructions-bb}
The Gross code~\cite{bravyi_highthreshold_2024} is a bivariate bicycle code over $R = B/I$, where $B = \Ft[x, y]$ and $I = (x^{12} - 1, y^6 - 1)$.
Its defining polynomials are $a = x^3 + y + y^2$ and $b = y^3 + x + x^2$.
Equivalently, its code complex is the Koszul complex $K_\bullet(a, b)$ over $R$.

\subsubsection{Scalar surgery}
Let $J = (a, b)$.
In lexicographic order $x > y$, a Gr\"obner basis of $I + J$ is
\begin{equation*}
    \cG = \{x^2 + x + y^3, y^4 + y^2 + 1, xy^2 + xy + x + y^3 + 1\},
\end{equation*}
with standard monomial basis
\begin{align*}
    \mathfrak{b}_J = \{1, x, xy, y, y^2, y^3\}.
\end{align*}
Thus $\dim_{\Ft} R/J = 6$, and \Cref{prop:dim_koszul_two} gives $k = 12$; see also~\cite{postema_existence_2026}.

By \Cref{thm:block_koszul_homology}, a local block $(R_i, \mm_i) = (B/T_i, P_i/T_i)$ supports homology if and only if $J \subseteq P_i$.
Exactly two blocks satisfy this criterion:
\begin{align*}
    T_1 &= \bigl((y^2 + y + 1)^2, (x + y^2 + 1)^4\bigr),
    & Q_1 &\simeq \mathbb{F}_{2^2},\\
    T_2 &= \bigl((y^2 + y + 1)^2, (x + y^2)^4\bigr),
    & Q_2 &\simeq \mathbb{F}_{2^2}[t]/(t^2).
\end{align*}
The homology blocks $H_{1,(1)}$ and $H_{1,(2)}$ have dimensions $4$ and $8$, respectively.
Up to measurement equivalence and composition, there are three non-zero blockwise scalar classes.

On the field factor $Q_1$, the representative
\begin{align*}
    c_1 = 1 + x^2y^2
\end{align*}
measures the four logical operators in $H_{1,(1)}$.

The two non-zero classes on the chain ring $Q_2$ follow its radical filtration
\begin{align*}
    Q_2 \supsetneq \mm_2 Q_2 \supsetneq 0, \qquad \mm_2 = (t).
\end{align*}
The representatives
\begin{align*}
    c_2 &= 1 + xy^2,\\
    c_2' &= 1 + xy + x^3y^2 + x^5y^4
\end{align*}
measure $H_{1,(2)}$ and $\mm_2 H_{1,(2)}$, containing eight and four logical operators, respectively.
This illustrates how the structure of a block that is not a field enables scalar surgery to select nested logical subspaces within it.
Combining the blockwise actions gives
\begin{align*}
    c_3 = 1 + xy + x^2y^3 + x^6y,
\end{align*}
which measures the eight-dimensional subspace
\begin{align} \label{eq:gross-subspace}
    W = H_{1,(1)} \oplus \mm_2 H_{1,(2)}.
\end{align}
This subspace is of particular interest because, using the alternative group algebra presentation introduced below, our numerical search finds a lifted surgery that measures any individual logical $Z$ operator within $W$.

\subsubsection{Lifted surgery}
The Gross code also has a lifted product presentation over $R' = \Ft[t, z]/(t^4 - 1, z^2 - 1)$.
Its protographs are
\begin{align*}
    A = \begin{pmatrix}
        t & z & z\\
        1 & t & z\\
        1 & 1 & t
    \end{pmatrix}, \quad B = \begin{pmatrix}
        z & t & t\\
        1 & z & t\\
        1 & 1 & z
    \end{pmatrix}.
\end{align*}
Over $\Ft$,
\begin{align*}
    t^4 - 1 = (t - 1)^4, \qquad z^2 - 1 = (z - 1)^2.
\end{align*}
Hence $R'$ is local, with maximal ideal $\mm = (1 + t, 1 + z)$, and has a single block.
The lifted-product support criterion states that a block supports homology if and only if both determinants are non-units~\cite{panteleev_quantum_2021}.
Here
\begin{align*}
    \det A &= t^3 + tz + z + z^2,\\
    \det B &= z^3 + tz + t + t^2.
\end{align*}
Both determinants lie in $\mm$, so the unique block supports homology.

Applying \Cref{alg:lifted_surgery} to the two presentations gives
\begin{align*}
    \dim_{\Ft} \mathcal{M}_{R} = 14, \qquad \dim_{\Ft} \mathcal{M}_{R'} = 46,
\end{align*}
with $\mathcal{M}_{R} \subseteq \mathcal{M}_{R'}$.
Our numerical search over $R'$ finds a lifted surgery for every rank-$1$ $\Ft$-subspace of $W$ (\cref{eq:gross-subspace}).
Thus each logical operator in this eight-dimensional subspace is individually addressable.
The lifted product presentation also enables structured constructions such as chain-pair surgery.

\subsection{$\llbracket 210,10,10 \rrbracket$ code}\label{sec:ls-constructions-gb}
The $\llbracket 210,10,10 \rrbracket$ generalised bicycle code is defined over $R = \Ft[x]/(x^{105} - 1)$ by $a = 1 + x + x^{17}$ and $b = 1 + x^4 + x^5$.
Equivalently, its code complex is the Koszul complex $K_\bullet(a, b)$ over $R$.

For univariate codes, the primary decomposition and Gr\"obner basis computation in \Cref{alg:koszul_poly} reduce to polynomial arithmetic.
Let $\ell = 105$, $I = (x^\ell - 1)$, and $J = (a, b)$.
A Gr\"obner basis of $I + J$ is generated by
\begin{align*}
    g &= \gcd(a, b, x^\ell - 1) \\
      &= 1 + x^4 + x^5 \\
      &= (1 + x + x^2)(1 + x + x^3) = f_1 f_2.
\end{align*}
The irreducible factors define the supported primary components and local factors
\begin{align*}
    T_1 = (f_1), &\qquad Q_1 \simeq \mathbb{F}_{2^2},\\
    T_2 = (f_2), &\qquad Q_2 \simeq \mathbb{F}_{2^3}.
\end{align*}
The homology blocks $H_{1,(1)}$ and $H_{1,(2)}$ contain four and six logical operators, respectively.
Scalar surgery is all-or-nothing on these field factors.
Representatives of the two non-zero classes are
\begin{align*}
    c_1 &= 1 + x + x^{29}, \\
    c_2 &= 1 + x^3 + x^{29}.
\end{align*}
The representative $c_1$ measures $H_{1,(2)}$, while $c_2$ measures $H_{1,(1)}$.

\subsection{Computer search over fine blocks}\label{sec:ls-constructions-complete-add}

Here we describe the search used to find optimised lifted surgery maps for several codes with completely fine blocks.
Since the search space for lifted surgery maps is very large, it is helpful to find ways to reduce the number of maps we need to construct.

To do this, we will consider the automorphisms of the codes.
An automorphism splits the logical subspaces of a code into orbits.
Once we find lifted surgery maps for a logical subspace, we can translate these into maps for the entire orbit by applying the automorphism.
We can thus construct lifted surgery maps for a smaller set of \emph{seed} logical subspaces, and use these to measure a larger set of logical subspaces~\cite{webster_explicit_2025}.

For each code, we numerically determine the automorphism group.
For each code, the automorphism group splits into two parts, $G \times H$.
Since any lifted surgery map measures the entire orbit of $G$, we need only consider the second part, $H$.
Thus when performing the search, we only need to optimise one gadget per orbit under $H$.

Recall from \Cref{sec:surgery-equivalence} that homotopic lifted surgery maps have the same action, and the set of homotopy-equivalent maps forms an affine subspace of the space of lifted surgery maps.
We search through these subspaces using simulated annealing to optimise the maps.
For each subspace, we minimise the maximum degree of the merged code, given by the maximum of every check weight and every qubit degree, and we only keep maps such that the distance of the merged code is at least as large as the distance of the base code.
We upper-bound the distances of the merged codes using an information-set-decoding-based algorithm~\cite{pryadko_qdistrnd_2022}.

We also perform a targeted search for lifted product maps (cf.~\Cref{sec:lifted-prod-surgery}) for codes that can be described by non-trivial products.
We find that lifted product maps allow us to measure logical subspaces across a variety of ranks.
Furthermore, these maps are able to attain maximum degrees smaller than those of the more general lifted surgery maps.
For example, we find rank-4 measurements of maximum degree 8 for the $\llbracket 90, 8, 10 \rrbracket$ code and maximum degree 9 for the $\llbracket 198, 8, 16 \rrbracket$ code.

\section{Cross-code measurements with lifted bridges}\label{sec:bridging}
\begin{figure}[t]
    \centering
\includegraphics[width=0.95\linewidth]{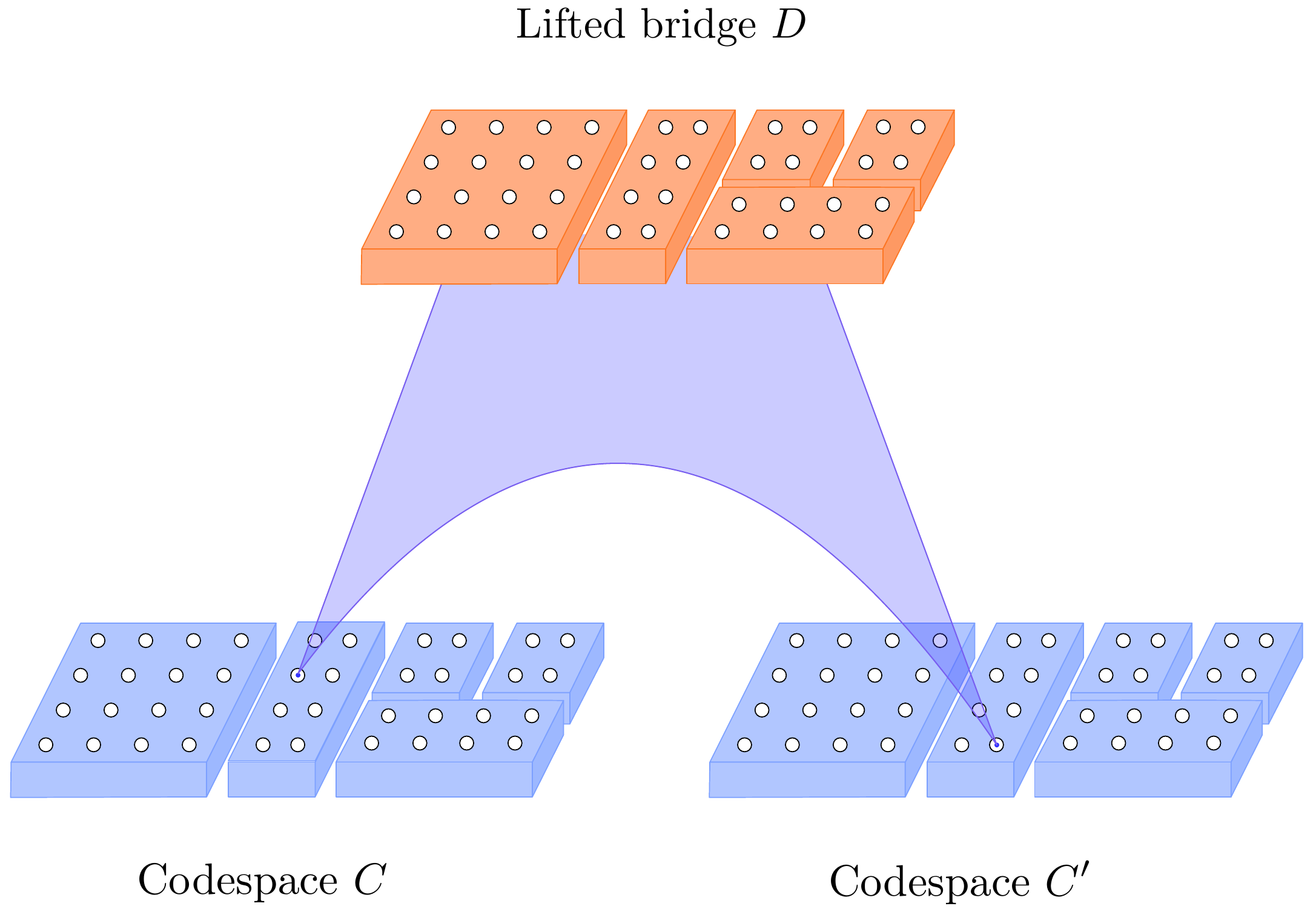}
    \caption{
        Graphical representation of bridged lifted surgery.
        On the bottom row are two copies of the same base codespace, $C$ and $C'$, over the group algebra $R$.
        On the top row is an auxiliary copy of the same codespace, $D$.
        The purple sheet connects the auxiliary code $D$ to the two codespaces $C$ and $C'$ and performs a measurement of a \emph{joint} logical operator across the two copies of the base code. 
    }
    \label{fig:bridge}
\end{figure}
In the previous sections, we focused on measuring logical operators in a single base code.
In general, it is important that we can measure operators \emph{jointly} across different base codes.
For instance, in an algorithm we might want to measure $Z_1 \otimes Z_2$, where $Z_1$ and $Z_2$ come from distinct codes.
Standard QLDPC surgery achieves this by \emph{bridging} the auxiliary systems attached to each base code~\cite{cross_improved_2025,swaroop_universal_2024}.
Ref.~\cite{cowtan_fast_2025} introduced \emph{partial block reading}, which jointly measures logical operators across base codes, $C_\bullet$ and $C'_\bullet$, by introducing an auxiliary complex, $D_\bullet$.
When $D_\bullet$ is a subcode of both base codes, this allows us to measure all logical operators of $D_\bullet$ jointly between $C_\bullet$ and $C'_\bullet$.

We introduce a generalisation of this scheme based on our formalism; see~\Cref{fig:bridge}.
We describe a joint logical measurement via the following diagram:
\begin{equation*}
\begin{tikzcd}
	&& {D_\bullet} && \\
	{C_\bullet} &&&& {C'_\bullet}
	\arrow["{f_\bullet}"', from=1-3, to=2-1]
	\arrow["{g_\bullet}", from=1-3, to=2-5]
\end{tikzcd}
\end{equation*}
where the complexes are over $R=\Ft[G]$ and the morphisms defining the chain maps $f_\bullet, g_\bullet$ are $R$-linear.
\begin{definition}[Bridged lifted surgery]\label{def:bridged_surgery}
    Let $C_\bullet, C'_\bullet$ be group algebra codes over $R$ and let $D_\bullet$ be an auxiliary complex.
    A \emph{bridged lifted surgery map} is an $R$-linear chain map
    \begin{align*}
        m_\bullet = (f_\bullet, g_\bullet)\colon D_\bullet & \to C_\bullet \oplus C'_\bullet \\
        d & \mapsto (f_\bullet(d), g_\bullet(d)) \, ,
    \end{align*}
    where $f_\bullet \colon D_\bullet \to C_\bullet, g_\bullet \colon D_\bullet \to C'_\bullet$ are chain maps.
    The merged code corresponds to the code $\cone(m_\bullet)$.
\end{definition}
Note that $m_\bullet$ is a chain map iff $f_\bullet$ and $g_\bullet$ are, that is, 
\begin{align*}
    m_\bullet \in \ls_R(D_\bullet,C_\bullet \oplus C'_\bullet)
\end{align*}
is a lifted surgery map on the direct sum complex. 
\begin{restatable}[Bridged lifted surgery addressable modules]{proposition}{propOneBridging}
\label{prop:bridging_measured_set}
    Let $m_\bullet = (f_\bullet, g_\bullet)$ be a bridged lifted surgery map.
    The set of measured logical operators is the $R$-submodule
    \begin{align*}
        \im (m_*)
        &= \bigl\{\, (f_*[d],\, g_*[d]) \;\big|\; [d] \in H_1(D_\bullet) \,\bigr\} \\
        &\quad\subseteq H_1(C_\bullet) \oplus H_1(C'_\bullet)
    \end{align*}
    where $m_*$ is the induced map in homology.
    That is, $\cone(m_\bullet)$ measures 
    \begin{align*}
        \rank_{\Ft} m_*
    \end{align*}
     independent
    logical operators. 
    Moreover, the number $\dim_{\Ft} H_1(\cone(m_\bullet))$ of logical operators of the merged code is
    \begin{align*}
        \dim_{\Ft} H_1(C_\bullet \oplus C'_\bullet) - \rank_{\Ft} m_* + \dim_{\Ft} \ker m_*^0 ,
    \end{align*}
    where $m_*^0$ denotes the map induced by $m_\bullet$ on $H_0$.
\end{restatable}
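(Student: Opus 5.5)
The plan is to treat $m_\bullet$ as an ordinary fast surgery map from $D_\bullet$ into the direct sum complex $C_\bullet \oplus C'_\bullet$ and reuse the cone machinery already set up for single-code surgery. First, the direct sum of complexes has componentwise differentials, so homology commutes with direct sums:
\begin{align*}
    H_n(C_\bullet \oplus C'_\bullet) \simeq H_n(C_\bullet) \oplus H_n(C'_\bullet),
\end{align*}
and this isomorphism is $R$-linear. Under it, the map induced by $m_\bullet = (f_\bullet, g_\bullet)$ sends a class $[d]$ to $[(f_1 d, g_1 d)] = (f_*[d], g_*[d])$; the class is well defined because $f_\bullet$ and $g_\bullet$ are both chain maps. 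This gives the stated description of $\im(m_*)$. It is an $R$-submodule because $m_*$ is $R$-linear, being induced by an $R$-linear chain map.

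Next, I apply the short exact sequence \cref{eq:cone-ses} with $C_\bullet$ replaced by $C_\bullet \oplus C'_\bullet$ and $f_\bullet$ replaced by $m_\bullet$. The associated long exact sequence produces the analogue of \cref{eq:homology_cone}:
\begin{align*}
    0 \to H_1(C_\bullet \oplus C'_\bullet)/\im m_*^1 \to H_1(\cone(m)) \to \ker m_*^0 \to 0.
\end{align*}
As in the single-code discussion, the classes in $\im m_*^1$ become stabilisers of the merged code. This is exactly the sense in which they are "measured", so the number of independent measured operators is $\rank_{\Ft} m_*$.

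The only step needing care is justifying the form of this sequence. In the long exact sequence, the connecting map $H_n(D_\bullet[-1]) = H_{n-1}(D_\bullet) \to H_{n-1}(C_\bullet\oplus C'_\bullet)$ must coincide with $m_*$, which holds for mapping cones possibly up to sign, and signs are irrelevant in characteristic two. Granting that, exactness at $H_1(\cone)$ gives the cokernel of $m_*^1$ on the left and the kernel of $m_*^0$ on the right.

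Finally, take $\Ft$-dimensions. A short exact sequence of vector spaces splits (\Cref{remark:splitting}), so the dimensions add:
\begin{align*}
    \dim_{\Ft} H_1(\cone(m)) = \dim_{\Ft} H_1(C_\bullet\oplus C'_\bullet) - \rank_{\Ft} m_* + \dim_{\Ft}\ker m_*^0,
\end{align*}
which is the stated count.
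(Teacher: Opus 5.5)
Your proposal is correct and follows essentially the same route as the paper. Both apply the mapping-cone short exact sequence to $m_\bullet\colon D_\bullet \to C_\bullet \oplus C'_\bullet$, use exactness of the long exact sequence around degree $1$ (with the connecting maps identified as $m_*^1$ and $m_*^0$), and count $\Ft$-dimensions. You also make explicit two points the paper leaves implicit: the identification $H_1(C_\bullet\oplus C'_\bullet)\simeq H_1(C_\bullet)\oplus H_1(C'_\bullet)$, which gives the description of $\im m_*$, and the fact that the connecting-map sign is irrelevant in characteristic two.
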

\begin{proof}    
    See \Cref{sec:app-proofs-bridging}.
\end{proof}

To measure \emph{joint} logical operators, as opposed to parallel measurement of two observables across different base codes, the chain maps need to fulfil a compatibility condition that the following proposition characterises.
We give the proof in~\Cref{sec:app-proofs-bridging}.
\begin{restatable}[Joint logical measurements]{proposition}{propTwoBridging}
\label{prop:bridging_joint}
    Let $m_\bullet = (f_\bullet, g_\bullet)$ be a bridged lifted surgery map.
    Every logical operator is measured \emph{jointly} if $f_\bullet$ and $g_\bullet$ have the same kernel $K = \ker f_* = \ker g_*$ on the homology.
    In this case, $\cone(m_\bullet)$ measures
    \begin{align*}
    r &= \dim_{\Ft} H_1(D_\bullet) - \dim_{\Ft} \ker f_*\\
    &= \dim_{\Ft} H_1(D_\bullet) - \dim_{\Ft} \ker g_*
    \end{align*}
    independent observables. 
    The measured space is 
    \begin{align*}
        \im m_* \simeq H_1(D_\bullet)/K \, .
    \end{align*}
    Moreover, let $d_1,\dots,d_r\in\ker\partial_1^D$ be cycles such that $[d_1]+K,\dots,[d_r]+K$ form an $\Ft$-basis of $H_1(D_\bullet)/K$. 
    Then
    \begin{align*}
        m_*[d_j] = \bigl([f_1(d_j)],[g_1(d_j)]\bigr), \qquad j=1,\dots,r,    
    \end{align*}
    form a basis of the measured space $\im m_*$. 
    The corresponding independent measured logical operators are
    \begin{align*}
        f_1(d_j) \otimes g_1(d_j), \qquad j=1,\dots,r.    
    \end{align*}
\end{restatable}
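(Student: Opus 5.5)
The plan is to compute $m_*$ directly from the definition of the cone and then apply the first isomorphism theorem. First I will check that homology is additive over the direct sum: $H_1(C_\bullet \oplus C'_\bullet) \simeq H_1(C_\bullet) \oplus H_1(C'_\bullet)$. Since $m_1(d) = (f_1(d), g_1(d))$ and $m_\bullet$ is a chain map, the induced map on a class $[d]$ with $d \in \ker \partial_1^D$ is
\begin{align*}
    m_*[d] = \bigl([f_1(d)],[g_1(d)]\bigr) = (f_*[d], g_*[d]),
\end{align*}
which is the description already used in \Cref{prop:bridging_measured_set}. Hence $\ker m_* = \ker f_* \cap \ker g_*$.

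Next I use the hypothesis $\ker f_* = \ker g_* = K$. This gives $\ker m_* = K$, and the first isomorphism theorem (over $R$, and hence over $\Ft$) yields $\im m_* \simeq H_1(D_\bullet)/K$. Counting dimensions then gives $r = \dim_{\Ft} H_1(D_\bullet) - \dim_{\Ft} K$, which can be written using either $\ker f_*$ or $\ker g_*$. By \Cref{prop:bridging_measured_set}, $r = \rank_{\Ft} m_*$ is the number of independent observables measured.

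For the basis claim, the isomorphism $H_1(D_\bullet)/K \to \im m_*$ is $[d]+K \mapsto m_*[d]$. It therefore sends the chosen basis $[d_j]+K$ to a basis $m_*[d_j] = ([f_1(d_j)],[g_1(d_j)])$ of $\im m_*$. Each such element is a $Z$-type logical on the disjoint union of the two codes, supported as $f_1(d_j)$ on $C$ and as $g_1(d_j)$ on $C'$. As a Pauli operator it is therefore $f_1(d_j) \otimes g_1(d_j)$, and these operators become stabilisers of $\cone(m_\bullet)$ by the long exact sequence argument of \cref{eq:homology_cone}.

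The main obstacle is making ``jointly'' precise. I need to show that no nonzero measured element has the form $(x,0)$ or $(0,y)$, so that no operator is measured on one code alone. Suppose $m_*[d] = (f_*[d], 0)$. Then $[d] \in \ker g_* = K = \ker f_*$, so $f_*[d] = 0$ as well, and the symmetric case is identical. Hence
\begin{align*}
    \im m_* \cap \bigl(H_1(C_\bullet)\oplus 0\bigr) = 0 = \im m_* \cap \bigl(0 \oplus H_1(C'_\bullet)\bigr).
\end{align*}
Equivalently, the projections of $\im m_*$ onto each factor are injective, so every nonzero measured observable has nontrivial support in both codes. This is the sense in which each logical operator is measured jointly, and I would state it explicitly as the interpretation of the proposition.
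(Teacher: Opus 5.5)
Your proposal is correct and follows essentially the same route as the paper. You compute $\ker m_* = \ker f_* \cap \ker g_* = K$, apply the first isomorphism theorem and rank--nullity, push the chosen basis of $H_1(D_\bullet)/K$ forward, and use $\ker f_* = \ker g_*$ to rule out one-sided measured classes $(f_*[d],0)$ or $(0,g_*[d])$; the paper also shows the converse (differing kernels force a measured operator on a single code), which the stated ``if'' direction does not require.
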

Note that this also means that we can measure joint logical operators that do not have to be the same, i.e., $Z_i \otimes Z_j, i \neq j$ with \emph{different} maps $f_\bullet \neq g_\bullet$ as long as $\ker f_* = \ker g_*$.

As in~\Cref{sec:ls-constructions-complete-add}, we can perform a search for optimised lifted bridges.
We summarise some results for the $\llbracket 90, 8, 10\rrbracket$ and $\llbracket 198, 8, 16 \rrbracket$ codes in~\Cref{tab:bridge-results}.
For both of these codes, every $8\times 8$ matrix over $\Ft$ is the induced action of some $R$-linear chain map.
Thus we can find chain maps $f_\bullet, g_\bullet$ to measure any pair $Z_i \otimes Z_j$ for arbitrary $i,j \in [8]$.
Moreover, this also holds for any $r$-tuples of logical operators for these codes. 
Note that this flexibility is due to the fact that all logical operators for these codes are supported on a single completely fine block.
\begin{table}[tb]
    \centering
    \small
    \renewcommand{\arraystretch}{1.25}
    \setlength{\tabcolsep}{2pt}
    \begin{tabular*}{\columnwidth}{@{\extracolsep{\fill}}cccccc@{}}
        \toprule
        base code & rank & orbit size & merged code & $\omega_c$ & $\omega_q$ \\
        \midrule
        $\llbracket 90, 8, 10\rrbracket$
        & $1$
        & $9$
        & $\llbracket 225, 15, \leq 10\rrbracket$
        & $13$
        & $13$ \\
        & $4$
        & $3$
        & $\llbracket 225, 12, \leq 10\rrbracket$
        & $8$
        & $9$ \\
        \midrule
        $\llbracket 198, 8, 16\rrbracket$
        & $4$
        & $1$
        & $\llbracket 495, 12, \leq 16\rrbracket$
        & $8$
        & $9$ \\
        \bottomrule
    \end{tabular*}
    \caption{
        Lifted bridge construction data for radial codes.
        Each construction uses two copies of the base code and a third copy acts as the bridge.
        The orbit size counts the logical subspaces of the given rank measurable with equivalent merged codes. 
        Here $\omega_c$ and $\omega_q$ denote the maximum check weight and qubit degree of the merged code.
    }
    \label{tab:bridge-results}
\end{table}

\section{Numerical simulations}\label{sec:numerics}
\begin{figure*}[t]
    \centering
    \subfloat[
     Block error rates for lifted scalar surgeries with the $\llbracket 210, 10, 10 \rrbracket$ GB code and $\llbracket 144, 12, 12 \rrbracket$ Gross code.
    ]{           
        \includegraphics[width=0.49\linewidth]{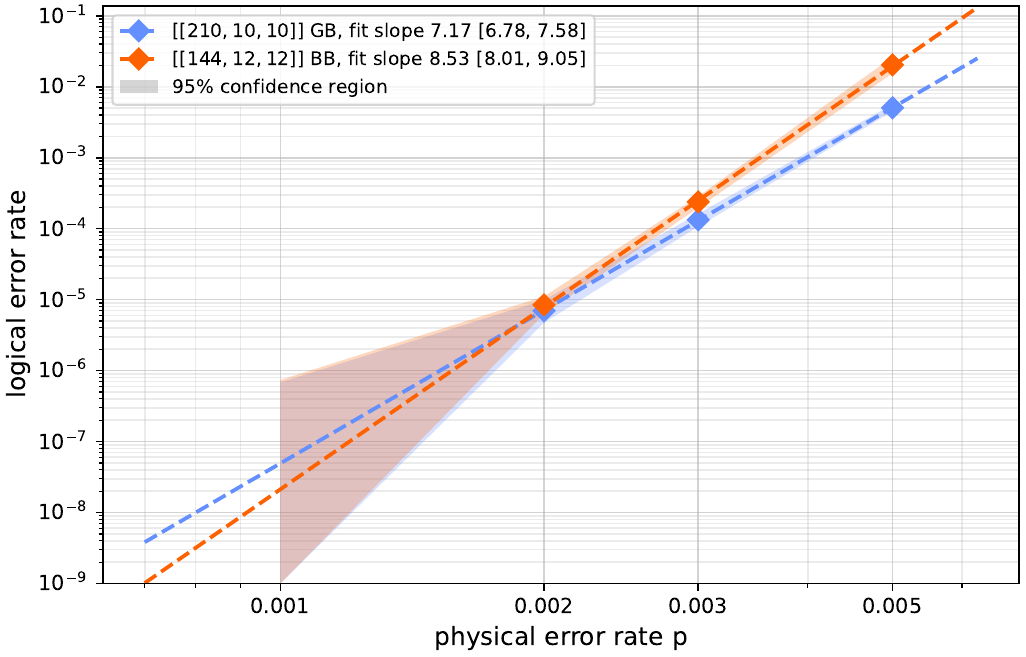}\label{fig:sim-results-a}
    }
    ~
    \subfloat[
        Logical error rates for various surgery operations at $p=10^{-3}$ for the completely fine addressable quantum radial codes.
    ]{
        \includegraphics[width=0.49\linewidth]{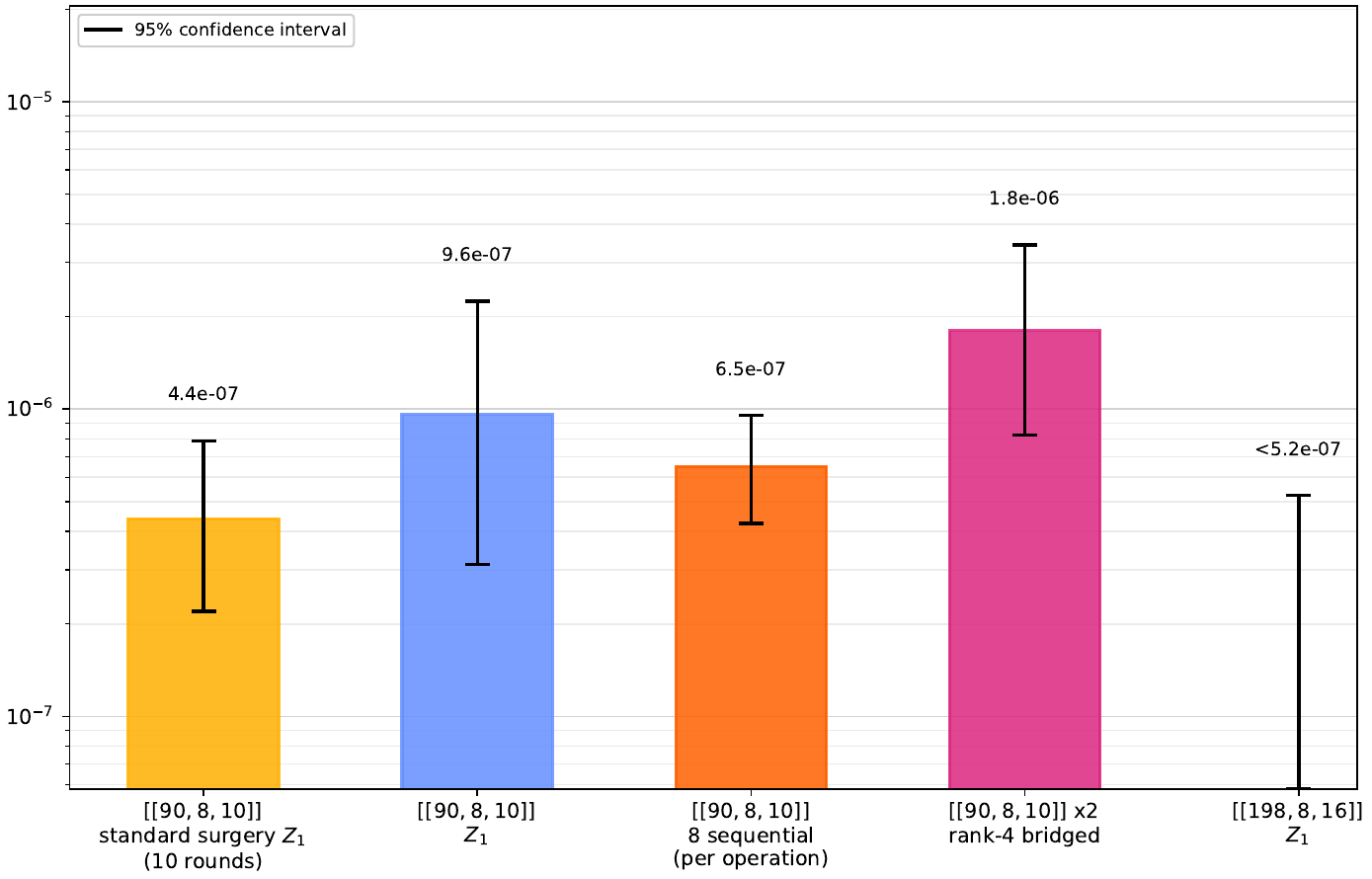}\label{fig:sim-results-b}
    }
    \caption{
        Numerical decoding simulation results.
        (a) Logical error rates for a single round of lifted (scalar) surgery with the $\llbracket 210, 10, 10 \rrbracket$ GB and the Gross code.
        Dashed lines are maximum-likelihood fits and the shaded region corresponds to a $95$\% confidence region.
        At $p=10^{-3}$ we did not find a logical error for either of the codes in at least $5\times 10^{6}$ shots.
        (b) Block error rates per logical operation at $p=10^{-3}$ for various numerical surgery experiments on the $\llbracket 90, 8, 10 \rrbracket$ and $\llbracket 198, 8, 16 \rrbracket$ radial codes.
        The first bar (yellow) shows the LER of a standard surgery experiment of $Z_1$ on the $\llbracket 90, 8, 10 \rrbracket$ code with $10$ rounds of syndrome extraction, all decoded at once;
        the second bar (blue) shows a rank-1 lifted surgery measurement of $Z_1$ on the $\llbracket 90, 8, 10 \rrbracket$ code;
        the third bar (orange) shows the LER per operation for 8 sequential rank-1 measurements on the $\llbracket 90, 8, 10 \rrbracket$ code, with no intervening rounds of syndrome extraction;
        the fourth bar (magenta) shows the LER of a rank-4 bridged measurement on two copies of the $\llbracket 90, 8, 10 \rrbracket$ code;
        the rightmost point depicts the upper bound on the LER for a $Z_1$ measurement on the $\llbracket 198, 8, 16 \rrbracket$ code. 
        We found no errors in over $7\times 10^{6}$ shots.
    }
    \label{fig:sim-results}
\end{figure*}
In this section, we present the results of numerical simulations of lifted surgery under circuit-level depolarising noise.

\subsection{Syndrome extraction circuits}
To efficiently design good syndrome extraction schedules, we exploit the group algebra structure of the merged codes.
When performing scalar surgery on a Koszul code, the resulting merged code is also a Koszul code.
For example, when performing scalar surgery on a generalised bicycle code, the merged code can be viewed as a tricycle code~\cite{jacob_singleshot_2026,menon_magic_2026}.
This allows us to use the already optimal schedules for tricycle codes.
When the merged code cannot be described as a Koszul code, we instead use integer linear programming to find a valid schedule~\cite{vittal_flagproxy_2024}.
A direct formulation over all physical qubits would be prohibitively large, so we instead formulate the problem over the group algebra and require the schedule to respect the group action.
It is therefore sufficient to optimise the circuit for a set of orbit representatives, since translating the resulting schedule according to the group action generates the full syndrome extraction circuit.
This symmetry reduction substantially decreases the number of variables and constraints, making the search feasible.

We focus on logical $Z$ measurements and refer to this as the primal basis.
We use a standard circuit-level depolarising noise model where preparation, gates, idling, and readouts are noisy and parameterised by a physical parameter $p$.

We construct the simulated circuits as follows:
\begin{enumerate}
    \item Initialise the data qubits of the base code in the primal basis, and the data qubits of the auxiliary system in the dual basis. 
    \item Measure the checks of the merged code once.
    \item Read out the data qubits of the base code in the primal basis, and the data qubits of the auxiliary system in the dual basis.
    \item Infer the logical measurement outcome from the primal checks in the auxiliary system.
\end{enumerate}
The defined detectors include the meta-checks and we infer the dual checks, indexed by elements of $F$ in~\cref{eq:fast-surgery} from the final readout of the data qubits in the auxiliary system.
We use the final data qubit readouts in the base code to reconstruct primal checks and define detectors that compare their values with those from the merged code measurement.

\subsection{Decoding algorithm}
When constructing lifted surgery gadgets, we optimise the sparsity of stabiliser checks that we physically measure, as these directly influence the depth of the syndrome extraction circuit.
We do not optimise the sparsity of the meta-checks or the stabiliser checks that we infer during readout of the auxiliary system.
This means the decoding matrix we obtain can be relatively dense.
This is of practical concern since the runtimes of many decoding algorithms are heavily dependent on the sparsity of the decoding matrix.

We implement several heuristic optimisations to speed up decoding for dense decoding matrices.
Instead of performing decoding on all the detectors at once, we first decode using detectors from the dual sector only.
We then use the correction from this first pass to update the priors for each error mechanism, before decoding the detectors from the primal sector; see, e.g., Appendix D of Ref.~\cite{roffe_biastailored_2023}.
Instead of decoding directly using the circuit's detectors, we first sparsify the decoding matrix as mentioned above.
In particular, we construct a sparse basis for the rowspace of the decoding matrix by conducting a randomised search for low-weight rowspace elements.

For each decoding pass, we use a most-likely error (MLE) decoder, which we implement using integer linear programming (ILP).
Before running the MLE decoder on each sector, we run a pass of BP+LSD~\cite{hillmann_localized_2025} (LSD-CS order 0 with unoptimised default parameters) to warm-start the ILP solver with a valid solution.
The decoder falls back to this solution if the ILP solver fails to find a lower-weight correction before timing out.
We set a $2$-hour timeout for all simulations reported here.

\subsection{Results}
The results of our numerical decoding simulations are given in~\Cref{fig:sim-results}.
We start by benchmarking the logical performance of an isolated scalar surgery operation on the $\llbracket 210,10,10 \rrbracket$ GB code, the Gross code, the $\llbracket 90, 8, 10 \rrbracket$ radial code, and the $\llbracket 198,8,16\rrbracket$ radial code (cf.~\Cref{sec:app-r311-code}).

\Cref{fig:sim-results-a} shows the logical error rate of a single round of scalar surgery, measuring $6$ logical qubits for the $\llbracket 210,10,10 \rrbracket$ GB code, and $8$ logical qubits for the Gross code.

For the radial codes, we simulated measuring a single logical operator at $p=10^{-3}$, and the bar plot in~\Cref{fig:sim-results-b} shows the achieved logical error rates.
For each of the above cases, we obtained logical failure rates below $10^{-6}$ at $p=10^{-3}$.

We also simulated a $10$-round experiment using standard code surgery~\cite{williamson_lowoverhead_2026,ide_faulttolerant_2025} on the $\llbracket 90, 8, 10\rrbracket$ code, and found that performance was only marginally better than that of lifted surgery, showing that lifted surgery can match the performance of standard code surgery while allowing for an order-of-magnitude decrease in the time overhead.

We also investigated performing several lifted surgery operations in sequence, with no rounds of memory between subsequent operations.
This is the scenario analysed in Ref.~\cite{cowtan_fast_2025}.
We perform $8$ sequential surgery measurements on the $\llbracket 90, 8, 10 \rrbracket$ code and decode the entire block of operations. 
The third bar from the left (orange) in~\Cref{fig:sim-results-b} shows the logical failure rate at $p = 10^{-3}$.
We can see that the logical failure rate per logical operation matches the logical failure rate for a single isolated operation.

Finally, we also investigated the logical error rate for a joint logical measurement across two base codes with bridged lifted surgery. 
We take two copies of the $\llbracket 90, 8, 10 \rrbracket$ code as base codes and simulate one joint measurement of a rank-4 lifted surgery operation.
As can be seen from the fourth bar from the left (magenta) in~\Cref{fig:sim-results-b}, we find that the logical failure rate is again comparable to that of a single lifted surgery operation on one code for the constructed maps.

\section{Discussion}\label{sec:discussion}
Our work gives an explicit construction for fast surgery on Abelian group algebra codes, a prominent family of QLDPC codes.
We give a rigorous characterisation of lifted surgery and construct several examples, showing that with lifted surgery we can obtain fast logical processing at utility scale.
Many recent architectures based on LDPC codes use Abelian group algebra codes~\cite{webster_pinnacle_2026, cain_shor_2026, yoder_tour_2025}, and thus our scheme has the potential to directly improve the runtime of algorithms on these architectures.

Realising these gains will require integrating lifted surgery into a fault-tolerant architecture.
Since fast surgery only allows for CSS-type measurements, we will need to develop compilation strategies that minimise the use of non-CSS measurements and integrate methods for performing non-CSS operations such as standard surgery, automorphism gates, or $Y$-state distillation.
We will also want to develop compilation strategies that make optimal use of the parallel measurements available for certain codes.

We also want to find optimised hardware layouts for performing lifted surgery operations.
Lifted surgery is particularly suitable for hardware platforms that allow for dynamic reconfiguration, such as neutral atoms, trapped ions, and spin qubits.
Lifted surgery preserves the underlying group symmetry of the code, which we can exploit to design syndrome extraction circuits with low reconfiguration depths.
Conducting in-depth numerical simulations for hardware-informed models will be important for benchmarking the practical performance of lifted surgery.

\section{Acknowledgements}
We thank all our colleagues at Iceberg Quantum for helpful discussions and feedback on the manuscript.

\bibliographystyle{unsrt}
\bibliography{references}

\begin{thebibliography}{10}

\bibitem{gidney_how_2025}
Craig Gidney.
\newblock How to factor 2048 bit rsa integers with less than a million noisy
  qubits, 2025.

\bibitem{breuckmann_quantum_2021}
Nikolas~P. Breuckmann and Jens~Niklas Eberhardt.
\newblock Quantum low-density parity-check codes.
\newblock {\em PRX Quantum}, 2:040101, October 2021.

\bibitem{webster_explicit_2025}
Paul Webster, Samuel~C. Smith, and Lawrence~Z. Cohen.
\newblock Explicit construction of low-overhead gadgets for gates on quantum
  ldpc codes, 2025.

\bibitem{bravyi_highthreshold_2024}
Sergey Bravyi, Andrew~W. Cross, Jay~M. Gambetta, Dmitri Maslov, Patrick Rall,
  and Theodore~J. Yoder.
\newblock High-threshold and low-overhead fault-tolerant quantum memory.
\newblock {\em Nature}, 627(8005):778--782, March 2024.

\bibitem{bhardwaj_highrate_2026}
Aditya Bhardwaj, Muzhou Ma, Nadine Meister, Robbie King, Dolev Bluvstein, John
  Preskill, Madelyn Cain, Qian Xu, and Hsin-Yuan Huang.
\newblock High-rate {qLDPC} processors, 2026.
\newblock arXiv:2607.28795 [quant-ph].

\bibitem{webster_pinnacle_2026}
Paul Webster, Lucas Berent, Omprakash Chandra, Evan~T. Hockings, Nouédyn
  Baspin, Felix Thomsen, Samuel~C. Smith, and Lawrence~Z. Cohen.
\newblock The pinnacle architecture: Reducing the cost of breaking rsa-2048 to
  100 000 physical qubits using quantum ldpc codes, 2026.

\bibitem{yoder_tour_2025}
Theodore~J. Yoder, Eddie Schoute, Patrick Rall, Emily Pritchett, Jay~M.
  Gambetta, Andrew~W. Cross, Malcolm Carroll, and Michael~E. Beverland.
\newblock Tour de gross: A modular quantum computer based on bivariate bicycle
  codes, 2025.

\bibitem{cain_shor_2026}
Madelyn Cain, Qian Xu, Robbie King, Lewis R.~B. Picard, Harry Levine, Manuel
  Endres, John Preskill, Hsin-Yuan Huang, and Dolev Bluvstein.
\newblock Shor's algorithm is possible with as few as 10,000 reconfigurable
  atomic qubits, 2026.

\bibitem{litinski_game_2019}
Daniel Litinski.
\newblock A {G}ame of {S}urface {C}odes: {L}arge-{S}cale {Q}uantum {C}omputing
  with {L}attice {S}urgery.
\newblock {\em {Quantum}}, 3:128, March 2019.

\bibitem{he_extractors_2025}
Zhiyang He, Alexander Cowtan, Dominic~J. Williamson, and Theodore~J. Yoder.
\newblock Extractors: Qldpc architectures for efficient pauli-based
  computation, 2025.

\bibitem{cohen_lowoverhead_2022}
Lawrence~Z. Cohen, Isaac~H. Kim, Stephen~D. Bartlett, and Benjamin~J. Brown.
\newblock Low-overhead fault-tolerant quantum computing using long-range
  connectivity.
\newblock {\em Science Advances}, 8(20):eabn1717, 2022.

\bibitem{ide_faulttolerant_2025}
Benjamin Ide, Manoj~G. Gowda, Priya~J. Nadkarni, and Guillaume Dauphinais.
\newblock Fault-tolerant logical measurements via homological measurement.
\newblock {\em Phys. Rev. X}, 15:021088, Jun 2025.

\bibitem{williamson_lowoverhead_2026}
Dominic~J. Williamson and Theodore~J. Yoder.
\newblock Low-overhead fault-tolerant quantum computation by gauging logical
  operators.
\newblock {\em Nature Physics}, 22(4):598–603, Apr 2026.

\bibitem{baspin_fast_2025}
Nouédyn Baspin, Lucas Berent, and Lawrence~Z. Cohen.
\newblock Fast surgery for quantum ldpc codes, 2025.

\bibitem{cowtan_fast_2025}
Alexander Cowtan, Zhiyang He, Dominic~J. Williamson, and Theodore~J. Yoder.
\newblock Fast and fault-tolerant logical measurements: Auxiliary hypergraphs
  and transversal surgery, 2025.

\bibitem{chang_constanttime_2026}
Kathleen Chang, Zhiyang He, Theodore~J Yoder, Guanyu Zhu, and Tomas
  Jochym-O'Connor.
\newblock Constant-time surgery on 2d hypergraph product codes with
  near-constant space overhead.
\newblock {\em arXiv preprint arXiv:2603.02157}, 2026.

\bibitem{tripier_faulttolerant_2026}
Felix Tripier, Woo~Chang Chung, Jacob Young, Safwan Alam, Bryce Bjork, Aharon
  Brodutch, Finn~Lasse Buessen, Nolan~J. Coble, Thomas Dellaert, Dmitri Maslov,
  Martin Roetteler, Edwin Tham, Mark Webster, Min Ye, John Gamble, Andrii
  Maksymov, J.~P. Marceaux, and Nicolas Delfosse.
\newblock Fault-tolerant quantum computing with trapped ions: The walking cat
  architecture, 2026.

\bibitem{scruby_highthreshold_2026}
Thomas~R. Scruby, Timo Hillmann, and Joschka Roffe.
\newblock High-threshold, low-overhead and single-shot decodable fault-tolerant
  quantum memory.
\newblock {\em PRX Quantum}, 7(2), Apr 2026.

\bibitem{vittal_flagproxy_2024}
Suhas Vittal, Ali Javadi-Abhari, Andrew~W. Cross, Lev~S. Bishop, and Moinuddin
  Qureshi.
\newblock Flag-proxy networks: Overcoming the architectural, scheduling and
  decoding obstacles of quantum ldpc codes.
\newblock In {\em 2024 57th IEEE/ACM International Symposium on
  Microarchitecture (MICRO)}, pages 718--734, 2024.

\bibitem{bombin_quantum_2009}
H.~Bombin and M.~A. Martin-Delgado.
\newblock Quantum {Measurements} and {Gates} by {Code} {Deformation}.
\newblock {\em Journal of Physics A: Mathematical and Theoretical},
  42(9):095302, March 2009.
\newblock arXiv:0704.2540 [quant-ph].

\bibitem{horsman_surface_2012}
Dominic Horsman, Austin~G Fowler, Simon Devitt, and Rodney~Van Meter.
\newblock Surface code quantum computing by lattice surgery.
\newblock {\em New Journal of Physics}, 14(12):123011, December 2012.

\bibitem{cross_improved_2025}
Andrew~W. Cross, Zhiyang He, Patrick~J. Rall, and Theodore~J. Yoder.
\newblock Improved qldpc surgery: Logical measurements and bridging codes,
  2025.

\bibitem{swaroop_universal_2024}
Esha Swaroop, Tomas Jochym-O'Connor, and Theodore~J. Yoder.
\newblock Universal adapters between quantum low-density parity check codes.
\newblock {\em PRX Quantum}, 7:010324, Feb 2026.

\bibitem{zhang_timeefficient_2025}
Guo Zhang and Ying Li.
\newblock Time-efficient logical operations on quantum low-density parity check
  codes.
\newblock {\em Physical Review Letters}, 134(7):070602, 2025.

\bibitem{cowtan_parallel_2026}
Alexander Cowtan, Zhiyang He, Dominic~J. Williamson, and Theodore~J. Yoder.
\newblock Parallel logical measurements via quantum code surgery.
\newblock {\em PRX Quantum}, 2026.

\bibitem{zheng_highrate_2025}
Guo Zheng, Liang Jiang, and Qian Xu.
\newblock High-{Rate} {Surgery}: towards constant-overhead logical operations,
  October 2025.
\newblock arXiv:2510.08523 [quant-ph].

\bibitem{zheng_logical_2026}
Han Zheng, Guo Zheng, Liang Jiang, and Qian Xu.
\newblock Logical computation with canonical lifted product codes, July 2026.
\newblock arXiv:2607.28605 [quant-ph].

\bibitem{blue_full_2026}
John Blue, Zhiyang He, Hengyun Zhou, and Isaac~L. Chuang.
\newblock Full extractors for logical processing in hypergraph product codes,
  2026.

\bibitem{zhou_resource_2025}
Hengyun Zhou, Casey Duckering, Chen Zhao, Dolev Bluvstein, Madelyn Cain,
  Aleksander Kubica, Sheng-Tao Wang, and Mikhail~D. Lukin.
\newblock Resource analysis of low-overhead transversal architectures for
  reconfigurable atom arrays.
\newblock In {\em Proceedings of the 52nd Annual International Symposium on
  Computer Architecture}, ISCA '25, page 1432–1448, New York, NY, USA, 2025.
  Association for Computing Machinery.

\bibitem{lubotzky_high_2018}
Alexander Lubotzky.
\newblock High dimensional expanders.
\newblock In {\em Proceedings of the international congress of mathematicians:
  Rio de Janeiro 2018}, pages 705--730. World Scientific, 2018.

\bibitem{oppenheim_new_2025}
Izhar Oppenheim and Inga Valentiner-Branth.
\newblock New cosystolic high-dimensional expanders from kms groups, 2025.

\bibitem{evra_bounded_2016}
Shai Evra and Tali Kaufman.
\newblock Bounded degree cosystolic expanders of every dimension.
\newblock In {\em Proceedings of the Forty-Eighth Annual ACM Symposium on
  Theory of Computing}, STOC '16, pages 36--48, New York, NY, USA, 2016.
  Association for Computing Machinery.

\bibitem{lin_quantum_2024}
Hsiang-Ku Lin and Leonid~P Pryadko.
\newblock Quantum two-block group algebra codes.
\newblock {\em Physical Review A}, 109(2):022407, 2024.

\bibitem{panteleev_degenerate_2021}
Pavel Panteleev and Gleb Kalachev.
\newblock Degenerate quantum ldpc codes with good finite length performance.
\newblock {\em Quantum}, 5:585, November 2021.

\bibitem{lin_abelian_2026}
Hsiang-Ku Lin, Pak~Kau Lim, Alexey~A. Kovalev, and Leonid~P. Pryadko.
\newblock Abelian multi-cycle codes for single-shot error correction.
\newblock {\em PRX Quantum}, 7:020326, May 2026.

\bibitem{gu_qgpu_2026}
Boren Gu, Andy~Zeyi Liu, Armanda~O Quintavalle, Qian Xu, Jens Eisert, and
  Joschka Roffe.
\newblock Qgpu: Parallel logic in quantum ldpc codes.
\newblock {\em arXiv preprint arXiv:2603.05398}, 2026.

\bibitem{breuckmann_logical_2025}
Nikolas~P. Breuckmann, Shin~Ho Choe, Jens~Niklas Eberhardt, Francisco
  Revson~Fernandes Pereira, and Vincent Steffan.
\newblock Logical operators and derived automorphisms of tile codes.
\newblock {\em Communications in Mathematical Physics}, 407(10):205, Aug 2026.

\bibitem{eberhardt_logical_2024}
Jens~Niklas Eberhardt and Vincent Steffan.
\newblock Logical operators and fold-transversal gates of bivariate bicycle
  codes.
\newblock {\em IEEE Trans. Inf. Theor.}, 71(2):1140–1152, February 2025.

\bibitem{chen_anyon_2025}
Keyang Chen, Yuanting Liu, Yiming Zhang, Zijian Liang, Yu-An Chen, Ke~Liu, and
  Hao Song.
\newblock Anyon theory and topological frustration of high-efficiency quantum
  low-density parity-check codes.
\newblock {\em Physical Review Letters}, 135(7):076603, 2025.

\bibitem{postema_existence_2026}
Jasper~J Postema and Servaas~JJMF Kokkelmans.
\newblock Existence and characterisation of bivariate bicycle codes.
\newblock {\em IEEE Transactions on Information Theory}, 2026.

\bibitem{lee_logical_2026}
Jong~Yeon Lee.
\newblock Logical spectroscopy: Lifted-product codes with addressable bases,
  2026.

\bibitem{sabo_spectral_2026}
Eric Sabo, Mahir Bilen~Can, and David Marquis.
\newblock Spectral theory of semisimple bivariate bicycle codes.
\newblock {\em arXiv preprint arxiv:2608.27565}, 2026.

\bibitem{atiyah_introduction_2018}
Michael~F Atiyah and Ian~Grant Macdonald.
\newblock {\em Introduction to commutative algebra}.
\newblock CRC press, 2018.

\bibitem{weibel_introduction_1994}
Charles~A. Weibel.
\newblock {\em An Introduction to Homological Algebra}.
\newblock Cambridge Studies in Advanced Mathematics. Cambridge University
  Press, 1994.

\bibitem{benhemou_automated_2026}
Asmae Benhemou and Noah Berthusen.
\newblock Automated logical {Clifford} gadgets for heterogeneous architectures
  via chain maps, July 2026.
\newblock arXiv:2607.02482 [quant-ph].

\bibitem{viszlai_matching_2025}
Joshua Viszlai, Willers Yang, Sophia~Fuhui Lin, Junyu Liu, Natalia Nottingham,
  Jonathan~M Baker, and Frederic~T Chong.
\newblock Matching generalized-bicycle codes to neutral atoms for low-overhead
  fault-tolerance.
\newblock In {\em 2025 IEEE International Conference on Quantum Computing and
  Engineering (QCE)}, volume~01, pages 688--699, 2025.

\bibitem{menon_magic_2026}
Varun Menon, J~Pablo Bonilla~Ataides, Rohan Mehta, Andi Gu, Daniel~Bochen Tan,
  and Mikhail~D Lukin.
\newblock Magic tricycles: Efficient magic-state generation with finite
  block-length quantum ldpc codes.
\newblock {\em Physical Review X}, 16(2):021014, 2026.

\bibitem{tillich_quantum_2014}
Jean-Pierre Tillich and Gilles Zémor.
\newblock Quantum ldpc codes with positive rate and minimum distance
  proportional to the square root of the blocklength.
\newblock {\em IEEE Transactions on Information Theory}, 60(2):1193--1202,
  2014.

\bibitem{jacob_singleshot_2026}
Abraham Jacob, Campbell McLauchlan, and Dan~E. Browne.
\newblock Single-{Shot} {Decoding} and {Fault}-tolerant {Gates} with
  {Trivariate} {Tricycle} {Codes}, January 2026.
\newblock arXiv:2508.08191 [quant-ph].

\bibitem{gianni_grobner_1988}
Patrizia Gianni, Barry Trager, and Gail Zacharias.
\newblock Gr{\"o}bner bases and primary decomposition of polynomial ideals.
\newblock {\em Journal of Symbolic Computation}, 6(2-3):149--167, 1988.

\bibitem{decker1999primary}
Wolfram Decker, Gert-Martin Greuel, and Gerhard Pfister.
\newblock Primary decomposition: algorithms and comparisons.
\newblock In {\em Algorithmic Algebra and Number Theory: Selected Papers From a
  Conference Held at the University of Heidelberg in October 1997}, pages
  187--220. Springer, 1999.

\bibitem{cox_ideals_2007}
David~A Cox, John~B Little, Donal O'shea, and John Little.
\newblock {\em Ideals, varieties, and algorithms: an introduction to
  computational algebraic geometry and commutative algebra}, volume~10.
\newblock Springer, 2007.

\bibitem{panteleev_quantum_2021}
Pavel Panteleev and Gleb Kalachev.
\newblock Quantum ldpc codes with almost linear minimum distance.
\newblock {\em IEEE Transactions on Information Theory}, 68(1):213--229, 2021.

\bibitem{pryadko_qdistrnd_2022}
Leonid~P. Pryadko, Vadim~A. Shabashov, and Valerii~K. Kozin.
\newblock Qdistrnd: A gap package for computing the distance of quantum
  error-correcting codes.
\newblock {\em Journal of Open Source Software}, 7(71):4120, Mar 2022.

\bibitem{roffe_biastailored_2023}
Joschka Roffe, Lawrence~Z. Cohen, Armanda~O. Quintavalle, Daryus Chandra, and
  Earl~T. Campbell.
\newblock Bias-tailored quantum ldpc codes.
\newblock {\em Quantum}, 7:1005, May 2023.

\bibitem{hillmann_localized_2025}
Timo Hillmann, Lucas Berent, Armanda~O Quintavalle, Jens Eisert, Robert Wille,
  and Joschka Roffe.
\newblock Localized statistics decoding for quantum low-density parity-check
  codes.
\newblock {\em Nature Communications}, 16(1):8214, 2025.

\bibitem{mian_multivariate_2026}
Feroz~Ahmed Mian, Owen Gwilliam, and Stefan Krastanov.
\newblock Multivariate multicycle codes for complete single-shot decoding.
\newblock {\em arXiv preprint arXiv:2601.18879}, 2026.

\bibitem{eisenbud_commutative_2013}
David Eisenbud.
\newblock {\em Commutative algebra: with a view toward algebraic geometry}.
\newblock Springer Science \& Business Media, 2013.

\bibitem{satherwagstaff_homological_nodate}
Keri Sather-Wagstaff.
\newblock Homological algebra book.
\newblock Accessed: 20 May 2026.

\bibitem{benson_representations_1998}
David~J Benson.
\newblock {\em Representations and cohomology: Volume 1, basic representation
  theory of finite groups and associative algebras}, volume~1.
\newblock Cambridge university press, 1998.

\end{thebibliography}

\clearpage
\onecolumngrid
\appendix

\section{Group algebras}
\label{app:group_algebras}
In this appendix we review some known facts about group algebras and their local blocks, providing a more technical account of the results stated in \Cref{sec:2BGACs-block-decomposition}.
We refer the reader to Refs.~\cite{atiyah_introduction_2018, weibel_introduction_1994} for a more detailed treatment.

\begin{theorem}[\Cref{sec:2BGACs-block-decomposition}: Block decomposition of a binary Abelian group algebra] \label{thm:block_algebra}
    Let $G$ be a finite Abelian group and let $R = \Ft[G]$.
    Write $G = P \times H$, where $P$ is the Sylow $2$-subgroup of $G$ and $|H|$ is odd.
    Then
    \begin{align*}
        R \simeq \kk_1[P] \times \cdots \times \kk_\nu[P]
    \end{align*}
    where $\kk_i \simeq \mathbb{F}_{2^{s_i}}$ is a finite field extension of $\Ft$.
    If $P = \{\mathrm{id}\}$, every block $R_i = \kk_i[P]$ is a field.
    Otherwise, every block is an Artinian local ring $(R_i, \mm_i)$, which is not a field.
\end{theorem}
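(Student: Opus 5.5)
We prove the statement in three steps: factor the group algebra over the Sylow decomposition, apply Maschke/Wedderburn and the Chinese remainder theorem to the odd part, then identify each block $\kk_i[P]$ as a local ring.

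\textbf{Step 1 (tensor splitting).} Since $G = P \times H$, the group algebra factors as
\begin{align*}
    \Ft[G] \simeq \Ft[H] \otimes_{\Ft} \Ft[P].
\end{align*}
This is immediate from the basis $\{hp\}$ indexed by $G$, with multiplication acting componentwise.

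\textbf{Step 2 (the odd part is a product of fields).} Because $|H|$ is odd, $\Ft[H]$ is semisimple: Maschke's theorem applies, since $|H|$ is invertible in $\Ft$. More concretely, write $H \simeq \prod_j \mathbb{Z}_{\ell_j}$ with all $\ell_j$ odd, so that
\begin{align*}
    \Ft[H] \simeq \Ft[x_1,\dots,x_t]/(x_j^{\ell_j}-1).
\end{align*}
Each $x^{\ell}-1$ is separable over $\Ft$, since its derivative $\ell x^{\ell-1}$ is coprime to it. The ideal $(x_j^{\ell_j}-1)_j$ is therefore radical and zero-dimensional. By the Chinese remainder theorem it is a finite intersection of pairwise comaximal maximal ideals, giving
\begin{align*}
    \Ft[H] \simeq \prod_{i=1}^\nu \kk_i,
\end{align*}
with each $\kk_i$ a finite extension $\mathbb{F}_{2^{s_i}}$. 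Equivalently, $\Ft[H]$ is a commutative semisimple Artinian ring, so Wedderburn's theorem makes it a product of fields. Tensoring with $\Ft[P]$ and using $\kk_i \otimes_{\Ft} \Ft[P] \simeq \kk_i[P]$ yields the claimed product. If $P$ is trivial, each block is just $\kk_i$, a field.

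\textbf{Step 3 (each block is local, and not a field when $P \neq 1$).} This is the only substantive step. Consider the augmentation map
\begin{align*}
    \varepsilon\colon \kk_i[P] \to \kk_i, \qquad \textstyle\sum a_g g \mapsto \sum a_g,
\end{align*}
and let $\mm_i = \ker \varepsilon$. This ideal is spanned by the elements $g-1$ for $g \in P$. Since $P$ is a $2$-group, $g^{2^r}=1$ for some $r$, and in characteristic two
\begin{align*}
    (g-1)^{2^r} = g^{2^r} - 1 = 0.
\end{align*}
So each generator $g-1$ is nilpotent. As the ring is commutative, the ideal $\mm_i$ is generated by nilpotents and is therefore itself nilpotent.

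Any element outside $\mm_i$ has the form $u = a + n$ with $a \in \kk_i^\times$ and $n \in \mm_i$. Such an element is a unit, since $a^{-1}u = 1 + a^{-1}n$ is invertible by the geometric series, which terminates because $n$ is nilpotent. Hence $\mm_i$ is the unique maximal ideal and $(\kk_i[P], \mm_i)$ is local. It is Artinian because it is finite-dimensional over $\kk_i$.

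Finally, if $P \neq 1$, pick $g \neq 1$ in $P$. Then $g-1$ is a nonzero nilpotent element, so $\mm_i \neq 0$ and the block is not a field.

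The main care needed is bookkeeping: checking that the isomorphisms in Steps 1–2 are ring isomorphisms compatible with the primitive idempotents of the earlier block decomposition theorem. The idempotents $e_i$ here come from $\Ft[H]$. They are primitive in $R$ because each factor $\kk_i[P]$ is local, and a local ring has no nontrivial idempotents.
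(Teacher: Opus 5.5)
Your proposal is correct and follows the paper's route. The paper only sketches the proof via Maschke's theorem on the odd part $\Ft[H]$ and the splitting $\Ft[G] \simeq \Ft[H]\otimes_{\Ft}\Ft[P]$; your augmentation-ideal argument fills in the claims that each block $\kk_i[P]$ is local and is not a field when $P \neq \{\mathrm{id}\}$, which the paper leaves implicit.
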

The decomposition follows from Maschke's theorem applied to the odd part $\Ft[H]$, together with
\begin{align*}
    \Ft[G] \simeq \Ft[H]\otimes_{\Ft} \Ft[P].
\end{align*}
Inside $R$, the decomposition is given by the \emph{primitive central idempotents}: elements $e_i \in R$ with $e_i^2=e_i$, $e_i e_j =0$ for $i\neq j$ and
\begin{align*}
    e_1 + \dots +e_\nu = 1.
\end{align*}
Since the set of primitive central idempotents is unique for every finite commutative algebra, the blocks of $R$ are uniquely identified as
\begin{align*}
    R_i = e_i R.
\end{align*}
\begin{theorem}[Peirce decomposition of modules and homomorphisms]
    The block decomposition of $R$ passes to modules and homomorphisms: 
    \begin{enumerate}[(i)]
        \item Every finite free $R$-module splits as 
        \begin{align*}
            M \simeq \bigoplus_i^\nu e_iM
        \end{align*}
        with $e_iM$ an $R_i$-module.
        \item Every $R$-linear map $f: M \rightarrow N$ is block diagonal: $f(e_i M) \subseteq e_i N$, and $f$ can be written as
        \begin{align*}
            f =\bigoplus_{i = 1}^{\nu} f_i, \quad f_i = f\mid_{e_i M}.
        \end{align*}
        \item For $e_i \neq e_j$, $\operatorname{Hom}_R(e_i M, e_jN) =0$ and
        \begin{align*}
            \operatorname{Hom}_{R}(M, N) \simeq \prod_{i = 1}^{\nu} \operatorname{Hom}_{R_i}(e_iM, e_i N).
        \end{align*}
    \end{enumerate}
\end{theorem}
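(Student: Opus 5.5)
The plan is to prove the three parts in order, using only that the $e_i$ are central, pairwise orthogonal idempotents summing to $1$ (as in \Cref{thm:block_algebra}) and that $R$ is commutative, so multiplication by $e_i$ is $R$-linear on any module. Freeness of $M$ is not needed for the splitting; it only tells us that each $e_iM \simeq R_i^{m}$ is free over $R_i$.

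\textbf{Part (i).} For $x \in M$, write $x = 1\cdot x = \sum_i e_i x$, so $M = \sum_i e_iM$. For directness, suppose $\sum_i e_i x_i = 0$ and multiply by $e_j$. Orthogonality and $e_j^2=e_j$ give $e_j x_j = 0$, so the sum is direct. Each $e_iM$ is an $R$-module annihilated by $1-e_i$. Hence it is a module over $R/(1-e_i)R \simeq e_iR = R_i$. If $M = R^m$, then $e_iM = (e_iR)^m = R_i^m$.

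\textbf{Part (ii).} Let $f$ be $R$-linear. Then $f(e_i x) = e_i f(x) \in e_iN$, so $f(e_iM) \subseteq e_iN$. Writing $x = \sum_i e_i x$ gives $f(x) = \sum_i f|_{e_iM}(e_i x)$, so $f = \bigoplus_i f_i$ with respect to the decompositions from (i). Each $f_i$ is $R_i$-linear, since the $R_i$-action on $e_iM$ is the restriction of the $R$-action.

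\textbf{Part (iii).} Take $g \in \operatorname{Hom}_R(e_iM, e_jN)$ with $i \neq j$, and let $y = e_i y \in e_iM$. Then
\begin{align*}
g(y) = g(e_i y) = e_i g(y) = e_i e_j g(y) = 0,
\end{align*}
where we used $g(y) \in e_jN$, so $g(y) = e_j g(y)$. For the product formula, define the map $\operatorname{Hom}_R(M,N) \to \prod_i \operatorname{Hom}_{R_i}(e_iM, e_iN)$ by $f \mapsto (f_i)_i$, which is well defined by (ii). It is injective because $f = \sum_i f_i \circ (e_i\cdot)$. For surjectivity, given $(g_i)_i$, set $f(x) = \sum_i g_i(e_i x)$. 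This $f$ is $R$-linear: for $r \in R$, $r$ acts on $e_iM$ through $e_i r \in R_i$, and each $g_i$ is $R_i$-linear. Its restrictions are $f|_{e_iM} = g_i$, since $e_j e_i = 0$ for $j \neq i$. Finally, the map $f \mapsto (f_i)_i$ is additive and commutes with the $R$-action, so it is an isomorphism of $R$-modules.

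The only step that needs care is identifying $R$-linear maps between the $e_iM$ with $R_i$-linear maps. This holds because $R \to R_i$, $r \mapsto e_i r$, is surjective and $R$ acts on $e_iM$ through this quotient, so the two linearity notions coincide. Beyond this, the proof is routine bookkeeping.
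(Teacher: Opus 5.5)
Your proof is correct and matches the paper's approach. The paper states this theorem without a separate proof; its justification in the main text is the observation $f(e_i x) = e_i f(x)$ for the central orthogonal idempotents, plus the splitting $f = \sum_i e_i f$ and extension by zero used in the proof of the blockwise decomposition of $\ls_R(C_\bullet)$. Your argument is a fully written-out version of this, and you are right that freeness of $M$ is only needed to conclude $e_iM \simeq R_i^m$.
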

Applied to a code complex $C_\bullet$ over $R$, these two theorems decompose the complex and every $R$-linear chain map into independent block components.
The same decomposition then descends to homology.
\begin{corollary}[\Cref{sec:2BGACs-block-decomposition}: Block decomposition of homology] \label{cor:decomposition_homology}
    The block decomposition of a group algebra passes to homology for finite free $R$-complexes.
\end{corollary}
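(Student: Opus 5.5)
The plan is to use exactness of the idempotent projections. Let $C_\bullet$ be a finite free $R$-complex and let $e_1,\dots,e_\nu$ be the primitive central idempotents of \Cref{thm:block_algebra}. By the Peirce decomposition, each $C_n$ splits as $\bigoplus_i e_i C_n$. Each differential $\partial_n$ is $R$-linear, so it satisfies $\partial_n(e_i x) = e_i\partial_n(x)$ and is block diagonal. Hence $C_\bullet \simeq \bigoplus_i C_{\bullet,(i)}$ as a direct sum of complexes, with $C_{\bullet,(i)} = e_i C_\bullet$ a complex of $R_i$-modules.

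Next I show that the cycles and boundaries split compatibly. For the cycles, let $z = \sum_i e_i z \in \ker\partial_n$. Then $0 = e_i\partial_n z = \partial_n(e_i z)$, so each component $e_i z$ is a cycle. Therefore $\ker \partial_n = \bigoplus_i \ker(\partial_n|_{e_iC_n})$. For the boundaries, $e_i\partial_{n+1}(y) = \partial_{n+1}(e_i y)$, which gives $\im\partial_{n+1} = \bigoplus_i \im(\partial_{n+1}|_{e_iC_{n+1}})$. Taking the quotient of a direct sum by a compatible direct sum of submodules gives
\begin{align*}
H_n(C_\bullet) \simeq \bigoplus_{i=1}^\nu H_n(C_{\bullet,(i)}),
\end{align*}
as $R$-modules. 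Moreover, $e_i$ acts as the identity on the $i$th summand and as zero on the others, so $H_n(C_{\bullet,(i)}) = e_i H_n(C_\bullet)$ canonically. This gives the identification I want.

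Finally, I check compatibility with chain maps, which \Cref{thm:addressability} needs. An $R$-linear chain map $f_\bullet$ commutes with each $e_i$. So the induced map satisfies $f_*(e_i[z]) = [f(e_iz)] = e_i f_*[z]$, which means $f_*$ is block diagonal on homology and $\im f_* = \bigoplus_i \im (f_{(i)})_*$.

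The only step needing care is that the splitting of cycles and boundaries is internal, meaning it comes from the same idempotents on both. This is immediate from centrality of the $e_i$ and $R$-linearity of the maps. I expect no real obstacle. Freeness of the modules is not even used; any complex of $R$-modules works.
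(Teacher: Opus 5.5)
Your proof is correct and follows the same route as the paper: $R$-linearity of the differentials makes each $e_i C_\bullet$ a subcomplex, the central orthogonal idempotents split cycles and boundaries compatibly, and the homology splits as the corresponding quotient. Your extra check that $R$-linear chain maps act blockwise on homology, and your remark that freeness is never used, are both correct and make explicit what the paper leaves implicit.
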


\begin{proof}
    Since every differential is $R$-linear,
    \begin{align*}
        \partial_n(e_ix) = e_i \partial_n(x),
    \end{align*}
    the subcomplex $e_i C_\bullet$ is well-defined:
    \begin{align*}
        C_{n, (i)} \xrightarrow[]{\partial_n^{(i)}} C_{n-1, (i)} \, ,
    \end{align*}
    and the orthogonal idempotents decompose the kernels and images of every differential.
    Hence
    \begin{align*}
        H_n(C_\bullet) \simeq \prod_i H_n(e_i C_\bullet).
    \end{align*}
\end{proof}
\begin{proposition} \label{prop:extend_endo_module}
    Let $R_i$ be a block of a finite group algebra over a field, let $M \subseteq R_i^n$ be a submodule and $\phi: M \rightarrow R_i^m$ be $R_i$-linear.
    Then $\phi$ extends to an $R_i$-linear map $\tilde \phi: R_i^n \rightarrow R_i^m$.
\end{proposition}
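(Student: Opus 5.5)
The plan is to show that each block $R_i$ is a self-injective ring, i.e.\ a quasi-Frobenius ring. Then every free module $R_i^m$ is injective, and the extension property is exactly Baer's criterion applied to $R_i^m$.

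\textbf{Step 1: $R_i$ is a Frobenius algebra.} By \Cref{thm:block_algebra}, $R_i \simeq \kk_i[P]$ is a group algebra of the finite $2$-group $P$ over the finite field $\kk_i$. Any finite group algebra $\kk[P]$ is Frobenius. Concretely, the linear form
\begin{align*}
    \lambda\Bigl(\sum_{g} a_g g\Bigr) = a_{\mathrm{id}}
\end{align*}
gives a nondegenerate associative bilinear form $(x,y)\mapsto\lambda(xy)$. If $\lambda(xy)=0$ for all $y$, choosing $y=g^{-1}$ shows $a_g=0$ for every $g$, so $x=0$. Hence $R_i \simeq \Hom_{\kk_i}(R_i,\kk_i)$ as $R_i$-modules.

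The cleanest route, avoiding the identification with $\kk_i[P]$, is to apply the same argument directly to $R=\Ft[G]$. This gives $R\simeq\Hom_{\Ft}(R,\Ft)$. Multiplying by the idempotent $e_i$ then yields $R_i \simeq \Hom_{\Ft}(R_i,\Ft)$, because duality commutes with the Peirce decomposition.

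\textbf{Step 2: $R_i$ is self-injective.} The module $\Hom_{\Ft}(R_i,\Ft)$ is injective over $R_i$. This is the standard fact that coinduction $\Hom_{\Ft}(R_i,-)$ sends the injective $\Ft$-module $\Ft$ to an injective $R_i$-module. The argument is adjunction: $\Hom_{R_i}(N,\Hom_{\Ft}(R_i,\Ft))\simeq \Hom_{\Ft}(N,\Ft)$, and the right-hand side is exact in $N$ because $\Ft$ is a field. Combining this with Step 1, $R_i$ is injective as a module over itself.

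Finite direct sums of injectives are injective, so $R_i^m$ is injective.

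\textbf{Step 3: conclude.} Apply the defining property of injectivity to the inclusion $M\hookrightarrow R_i^n$ and the map $\phi: M\to R_i^m$. This produces the required extension $\tilde\phi$ with $\tilde\phi|_M=\phi$.

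For an elementary version, one can instead extend coordinatewise. This reduces the statement to $m=1$, and in that case a dimension count finishes the argument. Restriction $\Hom_{R_i}(R_i^n,R_i)\to\Hom_{R_i}(M,R_i)$ is identified, via $\Hom_{R_i}(-,R_i)\simeq\Hom_{\Ft}(-,\Ft)$, with restriction of $\Ft$-linear functionals. Restriction of functionals is surjective.

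\textbf{Main obstacle.} The only real content is Step 1, the nondegeneracy of the Frobenius form and its compatibility with the block idempotents. I expect this to be straightforward, but care is needed to state the $R_i$-module isomorphism correctly. Since $R$ is commutative, left and right module structures coincide, so no issue arises there.
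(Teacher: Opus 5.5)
Your proposal is correct and follows the same route as the paper. The paper argues that a group algebra is Frobenius, hence self-injective, so $R_i^m$ is injective and $\phi$ extends across $M \hookrightarrow R_i^n$; you reach the same conclusion, only spelling out the steps the paper cites as standard (nondegeneracy of the trace form $\lambda$, the coinduction adjunction for injectivity, and passing to the block via $e_i$).
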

\begin{proof}
    A finite group algebra over a field is a symmetric Frobenius algebra and hence self-injective.
    A direct factor $R_i$ is therefore self-injective, so every finite free $R_i$-module is injective.
    The defining extension property of the injective module $R_i^m$ extends $\phi$ across the inclusion $M \hookrightarrow R_i^n$.
\end{proof}
\begin{lemma}[Nakayama] \label{lemma:nakayama}
    Let $(R, \mm)$ be an Artinian local ring with residue field $\kk = R/\mm$ and let $M$ be a finitely generated $R$-module.
    Then, elements $x_1, \dots, x_t \in M$ generate $M$ as an $R$-module if and only if their residue classes $[x_i] \in M/\mm M$ span $M/\mm M$ as a $\kk$-vector space.
    Equivalently, if $M = \mm M$, then $M = 0$.
\end{lemma}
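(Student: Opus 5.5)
We prove the Nakayama statement for an Artinian local ring $(R,\mm)$ in two steps: first the vanishing form ($M=\mm M \Rightarrow M=0$), then the generator criterion by reducing to it. The plan relies only on the fact that in an Artinian local ring the maximal ideal is nilpotent. We recall why: by the descending chain condition the chain $\mm\supseteq\mm^2\supseteq\cdots$ stabilises, say $\mm^s=\mm^{s+1}$. Since $\mm$ is the Jacobson radical, $\mm^s=0$; if one prefers to avoid circularity, this is the standard Atiyah--Macdonald argument (Artinian rings have nilpotent nilradical, and in a local Artinian ring every element is a unit or nilpotent, as already used in the paper).

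\textbf{Step 1 (vanishing form).} Assume $M=\mm M$. Iterating gives $M=\mm M=\mm^2M=\cdots=\mm^sM=0$. This direct argument needs neither the determinant trick nor finite generation. In the general Noetherian local case one would instead use the Cayley--Hamilton/determinant trick on a finite generating set; here nilpotency makes that unnecessary.

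\textbf{Step 2 (generators).} Let $x_1,\dots,x_t\in M$. If they generate $M$, then their images generate $M/\mm M$ over $R$, hence span it over $\kk=R/\mm$, since $\mm$ acts trivially on the quotient. This direction is trivial.

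Conversely, suppose the classes $[x_i]$ span $M/\mm M$. Let $N=\sum_i Rx_i$. Then $N+\mm M=M$, so the quotient $M/N$ satisfies $\mm(M/N)=M/N$. Since $M/N$ is a module over $R$ (finitely generated, though Step 1 does not even need this), Step 1 gives $M/N=0$, i.e.\ $N=M$.

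The equivalence of the two formulations is then immediate. The vanishing form is the special case $t=0$ of the generator criterion: the empty set spans $M/\mm M=0$. Conversely, the generator criterion was derived from the vanishing form in Step 2.

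The only potentially delicate point is justifying $\mm^s=0$, i.e.\ that stabilisation of the power chain forces it to be zero. I would cite \cite{atiyah_introduction_2018} (Prop.~8.4/8.6) for this, since the paper already relies on that reference for local Artinian facts. Everything else is formal.
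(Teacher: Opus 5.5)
The paper does not prove this lemma. It is quoted in \Cref{app:group_algebras} as a standard fact with a pointer to \cite{atiyah_introduction_2018}, so there is no in-paper argument to compare against. Your proof is correct, but it takes a different route from the usual textbook one (e.g.\ Atiyah--Macdonald, Prop.~2.6--2.8). That argument works for any ideal inside the Jacobson radical, but needs $M$ finitely generated, via the determinant trick or induction on the number of generators. You instead use that $\mm$ is nilpotent, so $M=\mm M$ gives $M=\mm^sM=0$ for \emph{every} module, finitely generated or not. The generator criterion then follows by applying this to $M/N$ with $N=\sum_i Rx_i$, exactly as you do. This is the natural argument in the paper's setting, where the blocks $R_i$ are finite local rings.

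The one step to tighten is the justification of $\mm^s=0$. Your first reason (``since $\mm$ is the Jacobson radical'') is Nakayama itself applied to $\mm^s=\mm\cdot\mm^s$, so it is circular. It also needs $\mm^s$ finitely generated, i.e.\ Artinian $\Rightarrow$ Noetherian, whose standard proof relies on nilpotency of the radical. Likewise, Atiyah--Macdonald Prop.~8.6 assumes Noetherian and invokes Nakayama in its proof. Drop these and keep Prop.~8.4 (the nilradical of an Artinian ring is nilpotent) together with Cor.~8.2, which identifies $\mm$ with the nilradical. Knowing that every element of $\mm$ is nilpotent does not by itself make the ideal nilpotent; that is exactly what Prop.~8.4 supplies. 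In the paper's finite setting there is an even shorter route. There, $\mm$ is generated by finitely many elements $y_1,\dots,y_r$ with $y_j^{n_j}=0$, and expanding products into monomials shows $\mm^N=0$ for $N=1+\sum_j(n_j-1)$.
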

\begin{definition} \label{def:min_number_of_gen}
    Let $(R, \mm)$ be a local Artinian ring with residue field $\kk = R/\mm$, and let $M$ be a finitely generated $R$-module.
    The \emph{minimal number of generators} of $M$ as an $R$-module is:
    \begin{align*}
        \mu_{R}(M) \coloneqq \dim_{\kk}(M/\mm M).
    \end{align*}
\end{definition}
\begin{definition}
    Let $N$ be a finitely generated module over a local Artinian ring $(R, \mm)$ with residue field $\kk_{\mm} \coloneqq R/\mm$, where $\mm$ is nilpotent.
    The \emph{radical filtration} is the descending chain
    \begin{align*}
        N \supseteq \mm N \supseteq \mm^2 N \supseteq \dots \supseteq 0.
    \end{align*}
    The quotients
    \begin{align*}
        \mm^{t-1}N/\mm^t N
    \end{align*}
    are $\mathbb{K}_{\mm}$-vector spaces, called the \emph{Loewy layers}.
    The Loewy length $\ell(N)$ is
    \begin{align*}
        \ell(N) \coloneqq \min\{t \ge 0 \mid \mm^tN =0\},
    \end{align*}
    and it is always finite as $\mm$ is nilpotent. 
    Last, the \emph{width} $\mu(N)$
    \begin{align*}
        \mu(N) = \dim_{\mathbb{K}_{\mm}}(N/\mm N)
    \end{align*}
    is, by Nakayama's lemma, the minimal number of generators of $N$ as an $R$-module.
\end{definition}
\begin{definition}
     A ring is a \emph{chain ring} if its ideals are totally ordered by inclusion.
     For a local Artinian ring $(R_i, \mm_i)$ this is equivalent to $\mm_i$ being principal.
     In such case, the ideals are the powers of $\mm_i$, up to the Loewy length of $R_i$.
     A block $\Ft[P] \otimes \kk_i = \kk_i[P]$, where $P$ is a $2$-group, is a chain ring if and only if $P$ is cyclic, in which case $R_i \simeq \kk_i[x]/(x^{2^k}-1)$.
\end{definition} 
\section{Codes as Koszul complexes} \label{app:koszul}
Some commonly studied group-algebra codes, such as generalised bicycle and bivariate bicycle codes, have associated chain complexes that are Koszul complexes, see e.g., Refs.~\cite{eberhardt_logical_2024,breuckmann_logical_2025,mian_multivariate_2026}.
This structure leads to a particularly clean description of scalar surgery, which we develop here.

The key observation is that the mapping cone of scalar surgery on a Koszul complex of length $n-1$ is itself a Koszul complex of length $n$.
Consequently, the merged code obtained from the surgery is canonically associated with a code of the same type, now of length $n$ rather than $n-1$.
In this section, we collect a few facts about Koszul complexes; for a general treatment of Koszul homology, we refer the interested reader to Ref.~\cite{eisenbud_commutative_2013}.
\begin{definition}[Koszul complex] \label{def:koszul}
    Let $R = \Ft[G]$ be a commutative finite algebra and $c_1, \dots, c_n\in R$. The \emph{Koszul complex} $K_\bullet(c_1, \dots, c_n)$ is inductively defined as the length-$n$ chain complex:
    \begin{align*}
        &n = 1  \quad K_{\bullet}(c_1) : \quad &0 \rightarrow R \xrightarrow{c_1} R \rightarrow 0 \\
        &n \ge 2  \quad K_{\bullet}(c_1, \dots, c_{n}) : \quad &\cone(\mu_{c_{n}}^{(c_1, \dots, c_{n-1})}),
    \end{align*}
    where the chain map $\mu_{c_{n}}^{(c_1, \dots, c_{n-1})}$, indicates the scalar multiplication by $c_n$ over  $K_{\bullet}(c_1, \dots, c_{n-1})$.
    
    We indicate by $H_q^{(n)}$ the $q$-th homology group of $K_\bullet(c_1, \dots, c_n)$, where the sequence $c_1, \dots, c_n$ is omitted:
    \begin{align*}
        H_q^{(n)} \coloneqq H_q(K_\bullet(c_1, \dots, c_n)).
    \end{align*}
\end{definition}
From the definition, on $K_\bullet(c_1)$, we have:
\begin{align*}
    H_0^{(1)} &=  R/c_1R, \\
    H_1^{(1)} & = (0:_R c_1).
\end{align*}
\begin{lemma} \label{lemma:dim_h_i}
    Let $R$ be a commutative ring with unit, and let $c_1, \dots, c_n \in R$. If $K_\bullet(c_1, \dots, c_n)$ denotes the standard Koszul complex on the sequence $c_1, \dots, c_n$, then for every $i \ge 0$:
    \begin{align*}
        K_i(c_1, \dots, c_{n}) \simeq R^{\binom{n}{i}}.
    \end{align*}
\end{lemma}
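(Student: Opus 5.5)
The plan is to argue by induction on $n$, using the recursive definition of $K_\bullet(c_1,\dots,c_n)$ as an iterated mapping cone (\Cref{def:koszul}), together with the explicit description of the cone given earlier: $\cone(f)_i = D_{i-1} \oplus C_i$. We also use the convention that $K_i = 0$ for $i<0$ and for $i>n$, which is consistent with $\binom{n}{i}=0$ in those ranges.

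\emph{Base case $n=1$.} The complex $K_\bullet(c_1)$ is $0 \to R \xrightarrow{c_1} R \to 0$, concentrated in degrees $1$ and $0$. Hence $K_0 \simeq R = R^{\binom{1}{0}}$, $K_1 \simeq R = R^{\binom{1}{1}}$, and $K_i = 0 = R^{\binom{1}{i}}$ for every other $i$.

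\emph{Inductive step.} Suppose the statement holds for $K_\bullet(c_1,\dots,c_{n-1})$. The complex $K_\bullet(c_1,\dots,c_n)$ is $\cone(\mu_{c_n})$, where $\mu_{c_n}$ is a chain map from $K_\bullet(c_1,\dots,c_{n-1})$ to itself; by \Cref{lemma:scalar_multiplication} it is a valid chain map, so the cone is a well-defined complex. Taking $D_\bullet = C_\bullet = K_\bullet(c_1,\dots,c_{n-1})$ in the cone formula gives
\begin{align*}
K_i(c_1,\dots,c_n) = K_{i-1}(c_1,\dots,c_{n-1}) \oplus K_i(c_1,\dots,c_{n-1}) \simeq R^{\binom{n-1}{i-1}} \oplus R^{\binom{n-1}{i}}.
\end{align*}
Pascal's rule, $\binom{n-1}{i-1}+\binom{n-1}{i}=\binom{n}{i}$, then yields $K_i(c_1,\dots,c_n) \simeq R^{\binom{n}{i}}$. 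The boundary cases are included automatically: at $i=0$ and $i=n$ one of the two summands vanishes.

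The only point requiring care is \emph{indexing}. One must check that the inductive definition places $K_\bullet(c_1)$ in degrees $0$ and $1$, and that the cone shift $D_{i-1}$ matches the paper's $D_\bullet[-1]$ convention. Once that is confirmed, the rest is bookkeeping. The underlying modules are free of finite rank, so the direct sums are again free of the stated rank.
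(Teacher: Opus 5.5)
Your proof is correct, but it takes a different route from the paper's. The paper's proof is a one-liner. It identifies the standard Koszul complex with the exterior-algebra complex, $K_i(c_1,\dots,c_n)=\bigwedge^i R^n$, and quotes the fact that $\bigwedge^i R^n$ is free of rank $\binom{n}{i}$.

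You instead work directly from the paper's recursive cone definition (\Cref{def:koszul}). You read off $K_i(c_1,\dots,c_n)=K_{i-1}(c_1,\dots,c_{n-1})\oplus K_i(c_1,\dots,c_{n-1})$ from $\cone(f)_i=D_{i-1}\oplus C_i$ and close with Pascal's rule. The indexing you flag is consistent with the paper: it records $H_0^{(1)}=R/c_1R$ and $H_1^{(1)}=(0:_R c_1)$, and its explicit $K_\bullet(a,b)$ and $K_\bullet(a,b,c)$ have ranks $1,2,1$ and $1,3,3,1$.

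Your version is self-contained. The paper's argument tacitly relies on identifying its recursive cone definition with the exterior-algebra complex, a standard fact it never states. Yours needs only the definition actually given. Your recursion also exhibits a basis of $K_i$ indexed by $i$-subsets of $\{1,\dots,n\}$, according to whether $n$ is included.

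The exterior-algebra route gives a canonical, basis-free description that holds verbatim over any commutative ring. The paper writes its cone differential without signs because it assumes characteristic two. For the general $R$ in the lemma's statement, the recursive definition would need a sign to give a complex. This does not affect your count, which only involves the underlying modules, but it deserves a one-line remark.
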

\begin{proof}
    The claim follows directly from the standard definition of Koszul complex as the exterior algebra complex:
    \begin{align*}
        K_i(c_1, \dots, c_n) = \bigwedge^i R^n,
    \end{align*}
    together with the fact that $\bigwedge^i R^n$ is free of rank $\binom{n}{i}$, see~\cite{eisenbud_commutative_2013}.
\end{proof}
\begin{proposition}[Koszul homology] \label{prop:homology_koszul}
    Let $K_\bullet(c_1, \dots, c_{n-1})$ be a Koszul complex, and consider the chain map
    $\mu_{c_n}$ given at each degree by multiplication by $c_n \in R$.
    
    For any $q \in \mathbb{Z}$, there is a short exact sequence:
    \begin{align} \label{eq:ses_koszul}
        0 \longrightarrow \nicefrac{H_q^{(n-1)}}{c_n H_q^{(n-1)}}
        \longrightarrow H_q^{(n)} \longrightarrow \left(0:_{H_{q-1}^{(n-1)}} c_n\right) \longrightarrow 0,
    \end{align}
    where:
    \begin{align*}
        \left(0:_{H_{q-1}^{(n-1)}} c_n\right) = \left\{x \in H_{q-1}^{(n-1)} \mid c_n x = 0 \right\}.
    \end{align*}
\end{proposition}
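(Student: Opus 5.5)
The approach is the standard one for a mapping cone. By \Cref{def:koszul}, $K_\bullet(c_1,\dots,c_n)=\cone(\mu_{c_n})$, where $\mu_{c_n}$ acts on $K_\bullet(c_1,\dots,c_{n-1})$. The cone short exact sequence \cref{eq:cone-ses}, applied with $D_\bullet = C_\bullet = K_\bullet(c_1,\dots,c_{n-1})$ and $f=\mu_{c_n}$, then reads
\begin{align*}
    0 \rightarrow K_\bullet(c_1,\dots,c_{n-1}) \rightarrow K_\bullet(c_1,\dots,c_n) \rightarrow K_\bullet(c_1,\dots,c_{n-1})[-1] \rightarrow 0 .
\end{align*}
This is a short exact sequence of $R$-complexes, and it is split in each degree, since $\cone(f)_q = D_{q-1}\oplus C_q$. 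I first check that the inclusion $x\mapsto(0,x)$ and the projection $(y,x)\mapsto y$ are chain maps. Both facts follow directly from the block lower-triangular form of $\partial^{\cone}$.

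Next I pass to the associated long exact sequence in homology. Using $H_q(D_\bullet[-1]) = H_{q-1}(D_\bullet)$, it has the form
\begin{align*}
    \cdots \rightarrow H_{q}^{(n-1)} \xrightarrow{\delta_q} H_q^{(n-1)} \rightarrow H_q^{(n)} \rightarrow H_{q-1}^{(n-1)} \xrightarrow{\delta_{q-1}} H_{q-1}^{(n-1)} \rightarrow \cdots .
\end{align*}
The key step is to identify the connecting homomorphism $\delta$ with the map induced by $f$ on homology. Lemma~\ref{lemma:scalar_multiplication} then tells us that this induced map is multiplication by $c_n$. To see this, take a cycle $y\in D_{q-1}$ and lift it to $(y,0)$ in the cone. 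Its boundary is $(\partial y, f(y)) = (0, c_n y)$, which lies in the image of $C_{q-1}$, so $\delta[y]=[c_n y]$. Since we are in characteristic two, no sign ambiguity arises. This is the step I would treat most carefully, because of the degree bookkeeping: one must confirm that the map landing in $H_q^{(n-1)}$ comes from $H_q$ of the shifted complex $D[-1]$ at degree $q+1$, i.e.\ from $H_q^{(n-1)}$.

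Finally I extract the short exact sequence from the long one. Exactness at $H_q^{(n)}$ gives
\begin{align*}
    0 \rightarrow \operatorname{coker}\bigl(c_n : H_q^{(n-1)}\to H_q^{(n-1)}\bigr) \rightarrow H_q^{(n)} \rightarrow \ker\bigl(c_n : H_{q-1}^{(n-1)}\to H_{q-1}^{(n-1)}\bigr) \rightarrow 0 .
\end{align*}
The cokernel is $H_q^{(n-1)}/c_nH_q^{(n-1)}$, and the kernel is $(0:_{H_{q-1}^{(n-1)}} c_n)$, which is exactly \cref{eq:ses_koszul}. All maps in the construction are $R$-linear, because $\mu_{c_n}$ commutes with the $R$-action ($R$ is commutative). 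Hence the sequence holds as $R$-modules, and it holds for every $q\in\mathbb{Z}$, with the convention that the homology groups vanish outside the range where the complex is supported.
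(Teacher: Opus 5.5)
Your proposal is correct and follows the paper's own route. You use the cone short exact sequence and its long exact sequence, identify the connecting map with multiplication by $c_n$, and extract the short exact sequence from exactness at $H_q^{(n)}$. Your explicit check that the lift $(y,0)$ has boundary $(0, c_n y)$ supplies the identification of the connecting map, which the paper only asserts.
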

\begin{proof}
    See Ref.~\cite{satherwagstaff_homological_nodate} and Ref.~\cite{ide_faulttolerant_2025}.
    Let $f_\bullet = \mu_{c_n}^{(c_1, \dots, c_{n-1})}$ so that by \Cref{def:koszul},
    \begin{align*}
        K_{\bullet}(c_1, \dots, c_n) = \cone(f_\bullet)
    \end{align*}
    and therefore
    \begin{align}
        H_{q}^{(n-1)}(\cone(f_\bullet)) = H_{q}^{(n)}\, .
            \label{eq:cone_deg_up}
    \end{align}
    The short exact sequence of the mapping cone~\cite{ide_faulttolerant_2025,cowtan_fast_2025,baspin_fast_2025},
    \begin{align*}
        0 \rightarrow K_\bullet(c_1, \dots, c_{n-1})\xrightarrow[]{\iota} \cone(f_\bullet) \xrightarrow[]{\pi}K_\bullet(c_1, \dots, c_{n-1})[-1] \rightarrow 0
    \end{align*}
    induces,  using the equality \cref{eq:cone_deg_up} at $q+1$ and $q$, the long exact sequence in  homology 
    \begin{align*}
        \cdots \xrightarrow{\iota_*} H_{q+1}^{(n)}\xrightarrow{\pi_*}H_q^{(n-1)} \xrightarrow[]{f_*}H_{q}^{(n-1)}\xrightarrow[]{\iota_*} H_{q}^{(n)} \xrightarrow{\pi_*} \cdots.
    \end{align*}
    The connecting map is:
    \begin{align*}
        f_*([x]) = [c_n x] = c_n [x]
    \end{align*}
    and therefore $f_*$ is multiplication by $c_n$ on the homology.
    
    Combining this with exactness, the long exact sequence yields
    \begin{align*}
    \ker \iota_* &= c_n H_{q}^{(n-1)}\\
    \im \iota_* &= \ker \pi_*\\
    \im \pi_* &= \left( 0:_{H_{q-1}^{(n-1)}} c_n\right)
    \end{align*}
    and therefore~\cref{eq:ses_koszul} follows.
\end{proof}
Note that~\Cref{prop:homology_koszul} says that $H_{q}^{(n)}$ contains $H_{q}^{(n-1)}/c_n H_{q}^{(n-1)}$ as a submodule, and the quotient of $H_{q}^{(n)}$ by this submodule is isomorphic to $(0:_{H_{q-1}^{(n-1)}} c_n)$.
In general the sequence does not split, nevertheless the homology of a Koszul complex has a precise structure in terms of its defining sequence $c_1, \dots, c_n$.
\begin{theorem}[Block structure of Koszul homology] \label{thm:block_koszul_homology}
    Let $(R_i, \mm_i)$ be the local blocks of the commutative group algebra $R$ and $\pi_i :R \rightarrow R_i$ the natural projections.
    We explicitly indicate by $K_{\bullet}(c_1, \dots, c_n; R)$ the Koszul complex  $K_{\bullet}(c_1, \dots, c_n; R)$ over $R$, when specifying the ambient space, here $R$, is necessary.
    The complex $K_{\bullet}(c_1, \dots, c_n; R)$ splits blockwise 
    \begin{align*}
        K_\bullet(c_1, \dots, c_n; R) \simeq \prod_i K_{\bullet}(c_1^{(i)}, \dots, c_n^{(i)}; R_i ), \quad c_j^{(i)} = \pi_i(c_j),
    \end{align*}
    and, for any $q \in \mathbb{Z}$, its homology splits as well
    \begin{align*}
        H_{q}^{(n)} \simeq \prod_i H_q(c_1^{(i)}, \dots, c_n^{(i)}; R_i).
    \end{align*}
    Moreover, given the ideal $J =(c_1, \dots, c_n) \subseteq R$, for every $0 \le q \le n$, the $q$-th homology group is non-trivial on the block $R_i$ if and only if the $\pi_i$ projection of $J$ is contained in the maximal ideal $\mm_i$:
    \begin{align*}
        H_q(c_1^{(i)}, \dots, c_n^{(i)}; R_i) \neq 0 \quad\Longleftrightarrow \quad \pi_i(J) \subseteq \mm_i .
    \end{align*}
    In particular the homology on a block either vanishes at every degree, or is nonzero at every degree.
\end{theorem}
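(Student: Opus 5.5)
We split the argument into three steps: the blockwise splitting of the complex, the passage to homology, and the support criterion. For the splitting, we use that the Koszul complex is built from free modules $K_q = R^{\binom{n}{q}}$ (\Cref{lemma:dim_h_i}). Its differentials are matrices whose entries are $0$ or $\pm c_j$, and $\pm c_j = c_j$ in characteristic two. Applying the Peirce decomposition to each $K_q$ and each differential, $e_i K_\bullet(c_1,\dots,c_n;R)$ is the complex of free $R_i$-modules $R_i^{\binom{n}{q}}$ with the same matrices reduced by $\pi_i$. This is exactly $K_\bullet(c_1^{(i)},\dots,c_n^{(i)};R_i)$.

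A cleaner way to see this, compatible with \Cref{def:koszul}, is by induction on $n$. The base case is $e_i(R\xrightarrow{c_1}R) = (R_i \xrightarrow{c_1^{(i)}} R_i)$. For the inductive step, note that $e_i$ commutes with taking mapping cones, and that $e_i\mu_{c_n} = \mu_{c_n^{(i)}}$ by \Cref{lemma:scalar_multiplication}. The homology splitting then follows immediately from \Cref{cor:decomposition_homology}.

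For the support criterion, we fix a block and work over the local ring $(R_i,\mm_i)$, writing $c_j$ for $c_j^{(i)}$.

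The ($\Leftarrow$) direction is the main point. We want $\pi_i(J)\subseteq\mm_i$ to force $H_q\neq 0$ for every $0\le q\le n$, and we argue by induction on $n$ using \Cref{prop:homology_koszul}.

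\emph{Base case $n=1$.} Here $H_0 = R_i/c_1R_i \neq 0$ because $c_1\in\mm_i$ is a non-unit. For $H_1 = (0:_{R_i}c_1)$, recall that $c_1$ is nilpotent, as in any local Artinian ring. Choose $t$ minimal with $c_1^t=0$; then $c_1^{t-1}\neq 0$ annihilates $c_1$, so $H_1 \neq 0$.

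\emph{Inductive step.} Assume $H_q^{(n-1)}\neq 0$ for all $0\le q\le n-1$. The short exact sequence \cref{eq:ses_koszul} yields two sufficient conditions:
\begin{align*}
    H_q^{(n-1)}/c_nH_q^{(n-1)} \neq 0 \quad\text{or}\quad (0:_{H_{q-1}^{(n-1)}}c_n)\neq 0 \quad\Longrightarrow\quad H_q^{(n)}\neq 0 .
\end{align*}
For $0\le q\le n-1$ we use the first condition. The module $H_q^{(n-1)}$ is finitely generated and nonzero, and $c_n\in\mm_i$, so Nakayama's lemma (\Cref{lemma:nakayama}) gives $c_nH_q^{(n-1)}\subseteq \mm_i H_q^{(n-1)}\subsetneq H_q^{(n-1)}$. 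For $q=n$ we use the second condition with $M=H_{n-1}^{(n-1)}\neq 0$. Since $c_n$ is nilpotent, choose $t$ minimal with $c_n^tM=0$; then any nonzero element of $c_n^{t-1}M$ is killed by $c_n$.

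An alternative for the top degree avoids induction entirely. We have $H_n = (0:_{R_i}J)$, which is nonzero because $J\subseteq\mm_i$ and the socle of an Artinian local ring is nonzero. The inductive argument above already covers this case, however.

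For the ($\Rightarrow$) direction, suppose some $c_j$ is a unit in $R_i$. Koszul complexes are invariant under permuting the sequence, since the exterior algebra description is symmetric. So we may reorder so that $c_n=c_j$, and this does not change homology. Now apply \Cref{prop:homology_koszul}. Multiplication by the unit $c_n$ is an isomorphism on each $H^{(n-1)}_q$, so both outer terms of \cref{eq:ses_koszul} vanish and $H_q^{(n)}=0$ for all $q$.

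I expect this reordering to be the main obstacle, because \Cref{def:koszul} is stated inductively in a fixed order. I would justify the permutation invariance through the identification $K_\bullet \simeq \bigwedge^\bullet R^n$ from \Cref{lemma:dim_h_i}. Alternatively, I would cite the standard fact that $J$ annihilates Koszul homology: if $J$ contains a unit of $R_i$, then $1$ annihilates $H_q$, so $H_q=0$.

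Finally, combining the two directions gives the dichotomy: on each block the homology is either nonzero in every degree $0\le q\le n$ or zero in every degree.
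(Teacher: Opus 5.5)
Your proof is correct. The blockwise splitting and the ($\Leftarrow$) direction follow the paper's argument. The paper simply invokes \Cref{cor:decomposition_homology} for the splitting. It proves ($\Leftarrow$) by the same induction on $n$ through \cref{eq:ses_koszul}: the first outer term handles $0\le q\le n-1$ and the annihilator term handles $q=n$. Your Nakayama and minimal-exponent arguments make explicit the claim, stated without detail in the paper, that $M/c_jM\neq 0$ and $(0:_M c_j)\neq 0$ for nilpotent $c_j$ and nonzero $M$.

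The real difference is in ($\Rightarrow$), and the reordering you flagged as the main obstacle is not needed. The paper inducts on $n$ with two cases. If $c_n$ is a unit, both outer terms of \cref{eq:ses_koszul} vanish, as you say. If the unit is among $c_1,\dots,c_{n-1}$, the inductive hypothesis gives $H_q^{(n-1)}=0$ for all $q$, so both outer terms vanish whatever $c_n$ is. This stays entirely within the inductive cone definition \Cref{def:koszul}.

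Your route instead imports one of two standard external facts. The first is permutation invariance, which requires identifying the inductive cone construction with the exterior-algebra model; the paper only asserts this in \Cref{lemma:dim_h_i}. The second is that $J$ annihilates Koszul homology, which the paper proves only for $H_1$ with $n=2$ in \Cref{lemma:j_zeroes_h1}. The annihilation argument is the more conceptual option and gives the vanishing in one line. The paper's case split is self-contained and cheaper.
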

\begin{proof}
    By~\Cref{cor:decomposition_homology}, the homology splits blockwise.
    We here prove the structure of the homology on each block. 
    To ease the notation, we drop the index $i$ and let $(S, \mm)$ be a local Artinian ring, $c_1, \dots, c_n \in S$ and $J = (c_1, \dots, c_n) \subseteq S$, so that we consider the Koszul complex $K_\bullet(c_1, \dots, c_n; S)$.
    We prove that
    \begin{align*}
         H_{q}^{(n)} \neq 0 \Longleftrightarrow J \subseteq \mm. 
    \end{align*}
    As $S$ is local Artinian, every element either is nilpotent and belongs to $\mm$ or is a unit and belongs to $S\setminus \mm$, see Ref.~\cite{atiyah_introduction_2018}.
    
    $(\Longrightarrow)$. If $J \not \subseteq \mm$, then $H_q^{(n)} = 0$ for all $q$.
    
    If $J \not \subseteq \mm$, then at least one $c_j$ is a unit.
    
    We proceed via induction on $n$, the length of the Koszul sequence $c_1, \dots, c_n$.
    
    $[n = 1]$. Multiplication $\mu_{c_1}$ is bijective if $c_1$ is a unit. Hence, by~\Cref{def:koszul}:
    \begin{align*}
        H_0^{(1)} = S/c_1 S = 0\\
        H_1^{(1)} = (0:_S c_1) = 0
    \end{align*}
    as desired.
    
    $[n\ge 2]$. Assume that $c_n$ is a unit, and consider the short exact sequence~\cref{eq:ses_koszul}.
    Similarly to the $n=1$ case, the outer terms are zero:
    \begin{align*}
        H_q^{(n-1)}/c_n H_q^{(n-1)} = 0 \quad \text{and} \quad \left(0:_{H_{q-1}^{(n-1)}} c_n\right) = 0
    \end{align*}
    and therefore $H_q^{(n)} = 0$ for all $q$.
    Assume instead that $c_n \in \mm$ but at least one of $c_1, \dots, c_{n-1}$ is unit.
    By the induction hypothesis
    \begin{align*}
        H_{q}^{(n-1)} = 0
    \end{align*}
    for all $q$ and again both outer terms of~\cref{eq:ses_koszul} are zero for all $q$.
    
    $(\Longleftarrow)$.
    If $J \subseteq \mm$, then $H_{q}^{(n)} \neq 0$ for all $q$.
    
    If $J \subseteq \mm$, then every $c_i$ is nilpotent and therefore, for every non-zero $S$-module $M$:
    \begin{align*}
        M/c_iM \neq 0 \quad \text{and} \quad (0:_M c_i) \neq 0.
    \end{align*}
    Induction step on $n$.
    
    $[n=1]$.
    By \Cref{def:koszul}, both $H_0^{(1)}\neq 0$ and $H_{1}^{(1)} \neq 0$.
    
    $[n\ge 2]$.
    By the induction hypothesis $H_{q}^{(n-1)} \neq 0$ for all $q$.
    In particular, the term $H_{q}^{(n-1)}/c_nH_{q}^{(n-1)} \neq 0$ if $c_n$ is nilpotent and via~\cref{eq:ses_koszul}, $H_{q}^{(n)} \neq 0$ for all $0 \le q \le n-1$.
    For $q = n$, we have $\left( 0:_{H_{n-1}^{(n-1)}} c_n\right) \neq 0$ and again via~\cref{eq:ses_koszul}, $H_{n}^{(n)} \neq 0$.
\end{proof}
In the notation of the dictionary in \Cref{sec:2BGACs-block-decomposition}, the support condition of \Cref{thm:block_koszul_homology}, reads
\begin{align*}
    H_{q, (i)} \neq 0 \Longleftrightarrow J \subseteq P_i \Longleftrightarrow Q_i =B/(T_i + J) \neq 0.
    \label{eq:j_condition}
\end{align*}
We use the last equivalent condition in \Cref{alg:koszul_poly}, where we compute $\dim_{\Ft} Q_i$ directly from a Gr\"obner basis of $T_i + J$.

\subsection{Scalar surgery on Koszul complexes}\label{sec:scalar-surgery-koszul}
Length-$2$ and length-$3$ Koszul complexes are of particular interest for quantum coding theory and surgery applications, so we here focus on these cases.

Explicitly, for $n=2$, elements $a, b \in R$ define the commutative diagram
\begin{center}
    \begin{tikzcd}[sep=2.25em]
        0 & R & R & 0 \\
        0 & R & R & 0
        \arrow[from=1-1, to=1-2]
        \arrow["a", from=1-2, to=1-3]
        \arrow[from=1-3, to=1-4]
        \arrow[from=2-1, to=2-2]
        \arrow["a"', from=2-2, to=2-3]
        \arrow[from=2-3, to=2-4]
        \arrow["b"', from=1-2, to=2-2]
        \arrow["b", from=1-3, to=2-3]
    \end{tikzcd}
\end{center}
which yields the cone complex:
\begin{align*}
      K_{\bullet}(a, b) : R\xrightarrow[]{\begin{pmatrix}
          a \\ b
      \end{pmatrix}}R^2 \xrightarrow[]{\begin{pmatrix}
          b & a
      \end{pmatrix}} R
\end{align*}
as expected. Given $c \in R$, scalar surgery over $K_\bullet(a,b)$ reads
\begin{equation} \label{eq:3d_koszul}
    \begin{tikzcd}[column sep=3.5em, row sep=2em]
        R \arrow[r, "{\binom{a}{b}}"]
          \arrow[ddr, "c" near start, dashed]
        & R^2 \arrow[r, "{(b,\,a)}"]
          \arrow[ddr, "{c \cdot \idty}" near start, dashed]
        & R \arrow[ddr, "c" near start, dashed]
        & {} \\
        \\
        {} & R \arrow[r, "{\binom{a}{b}}"']
        & R^2 \arrow[r, "{(b,\,a)}"']
        & R
    \end{tikzcd}
\end{equation}
where the base code sits on the second row, the auxiliary code sits on the first one and the chain map $c$ is shown by dashed arrows.
By~\Cref{def:koszul}, the mapping cone of $\mu_{c}^{(a, b)}$ is the length-$3$ Koszul complex $K_\bullet(a, b, c)$.
Thus, the merged code produced by surgery is canonically a Koszul-defined code on the sequence $(a, b, c)$: a code of the same family as the base code, where the defining sequence is extended by one entry.
In particular the family of Koszul-defined CSS codes is closed under scalar surgery, with each round extending the defining sequence by one element. 

Working out the cone matrices explicitly, the merged complex $K_\bullet(a, b, c)$ reads:
\begin{align*} \label{eq:3d_koszul_explicit}
    R \xrightarrow{\;\begin{pmatrix} a \\ b \\ c \end{pmatrix}\;}
    R^3 \xrightarrow{\;\begin{pmatrix} b & a & 0 \\ c & 0 & a \\ 0 & c & b \end{pmatrix}\;}
    R^3 \xrightarrow{\;\begin{pmatrix} c & b & a \end{pmatrix}\;}
    R.
\end{align*}
Alternatively, we can unroll the complex in~\cref{eq:3d_koszul} and exhibit its symmetric structure of $K_\bullet(a, b, c)$ with respect to the ring elements $a, b$ and $c$,
\begin{equation*} \label{eq:3d_1}
    \begin{tikzcd}[column sep=1.5em, row sep=1.6em]
    	& \textcolor{black}{R} && \textcolor{black}{R}  \\
    	\textcolor{black}{R}  && \textcolor{black}{R} \\
    	\\
    	& R && R \\
    	R && R
    	\arrow["a", from=1-2, to=1-4]
    	\arrow["b"', from=1-2, to=2-1]
    	\arrow[dotted, from=1-2, to=4-2]
    	\arrow[from=1-4, to=2-3]
    	\arrow["c",dotted, from=1-4, to=4-4]
    	\arrow[from=2-1, to=2-3]
    	\arrow[dotted, from=2-1, to=5-1]
    	\arrow[dotted, from=2-3, to=5-3]
    	\arrow[from=4-2, to=4-4]
    	\arrow[from=4-2, to=5-1]
    	\arrow[from=4-4, to=5-3]
    	\arrow[from=5-1, to=5-3]
    \end{tikzcd}
\end{equation*}
where parallel maps have the same label. 
By \Cref{prop:homology_koszul}, the first homology group of the complex in~\cref{eq:3d_koszul} has the short exact sequence:
\begin{align*}
    0 \longrightarrow H_{1}(a, b)/c H_{1}(a, b) \longrightarrow H_{1}(a, b, c) \longrightarrow (0:_{H_0(a, b)} c) \longrightarrow 0\, .
\end{align*}
After forgetting the $R$-module structure, this short exact sequence splits as a sequence of $\Ft$-vector spaces.
Hence, after adjoining $c$, the new first homology group after adjoining $c$, is built from two terms over $\Ft$:
\begin{enumerate}[(i)]
    \item a quotient of the old homology $H_1(a,b)$, namely the part that survives modulo multiplication by $c$; 
    \item a new contribution coming from elements in the $0$-th homology level of the auxiliary code, i.e., elements in $H_{0}(a,b)$, annihilated by $c$.
\end{enumerate}
More formally, for $K_\bullet(a, b)$, we have the following.
\begin{proposition}[Dimension of length-2 Koszul codes] \label{prop:dim_koszul_two}
    Let $R = \Ft[G]$ for a finite Abelian group $G$, and $a, b \in R$, $J = (a, b) \subseteq R$. 
    For the Koszul complex $K_\bullet(a, b)$ we have:
    \begin{align*}
        H_0 \simeq R/J, \quad H_2 = (0:_R J), \quad \dim_{\Ft} H_1 = 2 \dim_{\Ft} R/J.
    \end{align*}
    And in particular if $\mathfrak{b}_J$ is any standard monomial basis for $(a, b, x_1^{\ell_1}-1, \dots, x_t^{\ell_t}-1)$, then it is also a basis for $H_0 = R/J$. 
\end{proposition}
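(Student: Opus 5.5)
We compute the homology of $K_\bullet(a,b)\colon R \xrightarrow{\binom{a}{b}} R^2 \xrightarrow{(b,\,a)} R$ degree by degree. The ends are direct; the middle degree needs an Euler-characteristic count plus a duality input for $R$.

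\emph{The end degrees.} For $H_0$, the last differential $(b,\,a)$ has image $bR + aR = J$, so $H_0 = R/J$. For $H_2$, the first differential sends $r \mapsto (ar, br)$, so its kernel is
\begin{align*}
    \{r \mid ar = br = 0\} = (0:_R a) \cap (0:_R b) = (0:_R J).
\end{align*}

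\emph{Counting dimensions.} Since all modules are finite-dimensional over $\Ft$, the Euler characteristic of the complex equals that of its homology. The chain groups have dimensions $|G|, 2|G|, |G|$, so
\begin{align*}
    \dim H_0 - \dim H_1 + \dim H_2 = |G| - 2|G| + |G| = 0,
\end{align*}
where all dimensions are over $\Ft$. Hence
\begin{align*}
    \dim_{\Ft} H_1 = \dim_{\Ft} R/J + \dim_{\Ft}(0:_R J).
\end{align*}
The claim $\dim_{\Ft} H_1 = 2\dim_{\Ft} R/J$ therefore reduces to showing $\dim_{\Ft}(0:_R J) = \dim_{\Ft} R/J$.

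\emph{The duality step (main obstacle).} This equality is where the structure of $R = \Ft[G]$ enters. I would use that $R$ is a symmetric Frobenius algebra, the same fact invoked in \Cref{prop:extend_endo_module}. Concretely, take the nondegenerate $\Ft$-bilinear form $\langle r, s \rangle = $ the coefficient of the identity in $rs$. This form is associative: $\langle rt, s\rangle = \langle r, ts\rangle$. For an ideal $J$, associativity gives $J^\perp = (0:_R J)$:
\begin{itemize}
    \item[] If $s \in J^\perp$, then for each $j \in J$ the element $js$ has vanishing pairing with every $r \in R$, because $\langle r, js\rangle = \langle rj, s\rangle = 0$ as $rj \in J$. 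Nondegeneracy then forces $js = 0$.
    \item[] The reverse inclusion $(0:_R J) \subseteq J^\perp$ is immediate.
\end{itemize}
Nondegeneracy gives $\dim J^\perp = |G| - \dim J = \dim R/J$. This completes the dimension count.

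\emph{Monomial basis.} Write $R/J = B/(I + (a,b))$ with $I = (x_1^{\ell_1}-1, \dots, x_t^{\ell_t}-1)$. By the standard theory of Gr\"obner bases (Macaulay's theorem; see Ref.~\cite{cox_ideals_2007}), the standard monomials of any Gr\"obner basis of $(a, b, x_1^{\ell_1}-1, \dots, x_t^{\ell_t}-1)$ form an $\Ft$-basis of this quotient. This gives the final assertion.
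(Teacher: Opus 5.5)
Your proof is correct and follows the paper's outline: read off $H_0 = R/J$ and $H_2 = (0:_R J)$ directly, obtain $\dim_{\Ft} H_1 = \dim_{\Ft} H_0 + \dim_{\Ft} H_2$, use the symmetric Frobenius structure of $R$ to get $\dim_{\Ft}(0:_R J) = \dim_{\Ft} R/J$, and cite Gr\"obner theory for the standard-monomial basis. The differences are minor. Your Euler-characteristic count replaces the paper's appeal to the sequence of \Cref{prop:homology_koszul}, whose outer terms match $H_2$ and $H_0$ only after an implicit kernel/cokernel dimension swap for multiplication by $b$. Your explicit trace form $\langle r,s\rangle$ makes concrete the duality that the paper phrases as $(0:_R J) \simeq \operatorname{Hom}_R(R/J,R)$.
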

\begin{proof}
    By definition,
    \begin{align*}
        H_0 &= R/\im \partial_1=R/(a, b)\\
        H_2 &= \ker \partial_2 = (0:_R a) \cap (0:_R b) = (0:_R J).
    \end{align*}
    The split exact sequence~\cref{eq:ses_koszul} over $\Ft$ yields
    \begin{align*}
        \dim_{\Ft} H_1 = \dim_{\Ft} H_0 + \dim_{\Ft} H_2.
    \end{align*}
    We observe that the group algebra $R$ is a symmetric Frobenius algebra~\cite{benson_representations_1998} and therefore $(0:_R J) = \operatorname{Hom}_R(R/J, R)$ so that $\dim_{\Ft} H_2 = \dim_{\Ft} R/J$ and in conclusion, $\dim_{\Ft} H_1 = 2 \dim_{\Ft} R/J$.
    
    Last, it follows from the definition of standard monomials that they are a basis of $R/J$, see~\cite{eisenbud_commutative_2013}.
\end{proof}
In particular, \Cref{prop:dim_koszul_two} gives a formula for the dimension of GB and BB codes, as well as general two-block group algebra codes~\cite{lin_quantum_2024}.
\begin{remark}[Module versus $\Ft$-vector space structure of $H_1$]\Cref{prop:dim_koszul_two} states that $\dim_{\Ft} H_1 = 2 \dim_{\Ft} R/J$, therefore as $\Ft$-vector spaces,
\begin{align*}
    H_1 \simeq_{\Ft} R/J \oplus R/J.
\end{align*}
The isomorphism over $\Ft$ is non canonical, and consequence of the dimension formula. In general this is not an isomorphism of $R/J$-modules, and $H_1$ is not necessarily a free $R/J$-module of rank $2$. 
\label{remark:f2_vs_Q}
\end{remark}
\begin{lemma} \label{lemma:j_zeroes_h1}
    For the Koszul complex $K_\bullet(a, b)$ over $R = \Ft[G]$ with $J = (a, b) \subseteq R$ holds
    \begin{align*}
        J \subseteq (0:_R H_1)
    \end{align*}
    so the action of $c \in R$ on $H_1$ only depends on $[c] \in R/J$.
    In other words, if $c, c' \in R$ generate the same ideal in $R/J$, then the scalar surgeries $\mu_c$ and $\mu_{c'}$ are measurement-equivalent.
\end{lemma}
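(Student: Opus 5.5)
The plan is to prove the inclusion $J \subseteq (0:_R H_1)$ by showing that $a$ and $b$ each act on $H_1$ as zero, via an explicit null-homotopy. Since $(0:_R H_1)$ is an ideal, it then contains $J = (a,b)$. By \Cref{lemma:scalar_multiplication}, the action of $c$ on $H_1$ is multiplication by $c$. Hence $c$ and $c + j$ with $j \in J$ induce the same map on $H_1$, so the action factors through $[c] \in R/J$; this is an instance of the annihilator shift of \Cref{lemma:meas_equivalence}(i).

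The key step is to exhibit, for each of $a$ and $b$, an $R$-linear homotopy between $\mu_a$ (resp.\ $\mu_b$) and $0$ on $K_\bullet(a,b)$. Write the complex as
\begin{align*}
R \xrightarrow{\binom{a}{b}} R^2 \xrightarrow{(b,\,a)} R .
\end{align*}
For $\mu_b$, take $h_0 : R \to R^2$, $h_0(r) = (r,0)^T$, and $h_1 : R^2 \to R$, $h_1(x,y) = x$.

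The verification is routine; recall that signs are irrelevant in characteristic two.
\begin{itemize}
\item In degree $0$: $\partial_1 h_0 (r) = br$.
\item In degree $1$: $\partial_2 h_1(x,y) + h_0 \partial_1(x,y) = (ax, bx)^T + (bx + ay, 0)^T = (ay, bx)^T$. This is not $b\cdot\idty$, so this first choice fails.
\end{itemize}
Instead, I would use the standard Koszul fact that multiplication by $c_i$ is null-homotopic via contraction with the $i$-th basis vector (the identity $d\iota + \iota d = c_i$). Concretely, with $\partial_2 = \binom{a}{b}$ and $\partial_1 = (b,a)$, the generator of $C_1$ paired with $b$ in $\partial_1$ is $e_1$. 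So set $h_0(r) = (r,0)^T$ and choose $h_1(x,y) = y$. Then:
\begin{itemize}
\item In degree $1$: $\partial_2 h_1 + h_0\partial_1$ gives $(ay,by)^T + (bx+ay,0)^T = (bx, by)^T = b\cdot(x,y)^T$.
\item In degree $2$: $h_1\partial_2(r) = br$.
\item In degree $0$: $\partial_1 h_0 = b$.
\end{itemize}
So $\mu_b \simeq 0$. By the symmetry $a \leftrightarrow b$, the choices $h_0(r) = (0,r)^T$ and $h_1(x,y) = x$ give $\mu_a \simeq 0$.

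By \Cref{prop:cone_homotopy}, homotopic maps induce the same map on homology, so $a H_1 = b H_1 = 0$, and therefore $J \subseteq (0:_R H_1)$. For the final claim, suppose $[c]$ and $[c']$ generate the same ideal in $R/J$. Then $cR + J = c'R + J$, so $c' = uc + j$ with $u \in R$ and $j \in J$, and similarly in the other direction. It follows that $c' H_1 = u c H_1 \subseteq c H_1$, and symmetrically $c H_1 \subseteq c' H_1$. Hence $cH_1 = c'H_1$, which is measurement equivalence in the sense of \Cref{def:meas_equiv}. The only delicate step is choosing the homotopy components correctly, as the failed first attempt above shows; everything else is bookkeeping.
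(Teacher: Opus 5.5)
Your proof is correct, but it argues at the chain level, whereas the paper argues directly on cycles.

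The paper takes a cycle $(x,y)$ and uses $bx = ay$ to rewrite $a(x,y)^T = (ax, bx)^T = \partial_2(x)$. This gives $aH_1 = 0$, and likewise $bH_1 = 0$. It then settles the measurement-equivalence claim by appealing to \Cref{def:meas_equiv}.

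You instead construct $R$-linear null-homotopies of $\mu_a$ and $\mu_b$ (the standard Koszul homotopies built from the basis vectors of $R^2$) and invoke \Cref{prop:cone_homotopy}. Your checks in all three degrees are correct in characteristic two.

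The two arguments share a core. On a cycle $\partial_1(x,y)^T = 0$, so your degree-one identity $a(x,y)^T = \partial_2 h_1(x,y) + h_0\partial_1(x,y)$, with $h_1(x,y) = x$, reduces exactly to the paper's $a(x,y)^T = \partial_2(x)$.

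The paper's route is shorter and needs nothing in degrees $0$ and $2$. Yours yields more:
\begin{itemize}
\item $J$ annihilates every Koszul homology group, not just $H_1$.
\item The argument extends verbatim to $K_\bullet(c_1,\dots,c_n)$.
\item Since $R$-linear homotopies can be scaled and added, $\mu_{c+j}$ is homotopic to $\mu_c$ for every $j \in J$. \Cref{prop:cone_homotopy} then shows that $\cone(\mu_c)$ and $\cone(\mu_{c+j})$ are isomorphic over $R$, which is stronger than measurement equivalence.
\end{itemize}

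Your treatment of the final sentence is also more explicit than the paper's one-line justification: you write $c' = uc + j$ with $u$ not necessarily a unit and prove both inclusions. In a final write-up, drop the false start with $h_1(x,y) = x$ for $\mu_b$.
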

\begin{proof}
    An element $[(x, y)] \in H_1$  has a representative of the form $h = (x, y) + m(a, b)$ for $(x, y) \in \ker \partial_1$ and $m \in R$. For the generator $a$ of $J$, holds:
    \begin{align*}
        a \cdot h &= (ax, ay) + am(a, b)\\
        &=(ax, bx) + am(a, b), \quad (x, y) \in \ker \partial_1 \Rightarrow bx + ay = 0 \Rightarrow  ay = bx\\
        &= (x + am) (a, b)  \quad \in \im \partial_2\\
        & = 0, \quad \text{ in } H_1.
    \end{align*}
    Since the same holds for the second generator $b$ of $J$,
    \begin{align*}
       J \subseteq (0:_R H_1).
    \end{align*}
    The second statement follows directly from \Cref{def:meas_equiv} of measurement equivalence. 
\end{proof}
What said so far globally for the quotient $Q = R/J$, can be adapted to the local blocks $Q_i = e_i Q$. Let $J_i = e_iJ$, then
\begin{align*}
    Q_i = e_i (R/J) \simeq R_i /J_i.
\end{align*}
\Cref{prop:dim_koszul_two} holds on the local block $Q_i$:
\begin{align*}
    H_{0, (i)} \simeq Q_i, \quad H_{1, (i)} = (0:_{R_i} J_i), \quad \dim_{\Ft} H_{1, (i)} = 2 \dim_{\Ft} Q_i.   
\end{align*}
In particular, we have the following. 
\begin{remark}[Two-block homology is never cyclic] \label{rem:non_cyclic_two_block}
    For a two-block code $K_\bullet(a,b)$, every non-zero homology block $H_{1,(i)}$ is noncyclic as an $R_i$-module.
    Indeed, suppose by contradiction that $H_{1, (i)}$ is cyclic.
    Since $J$ annihilates $H_1$ by \Cref{lemma:j_zeroes_h1}, the module $H_{1, (i)}$ is a quotient of the cyclic module $Q_i = e_i (R/J)$ and so
    \begin{align*}
        \dim_{\Ft} H_{1, (i)} \le \dim_{\Ft} Q_i.
    \end{align*}
    On the other hand, \Cref{prop:dim_koszul_two} gives:
    \begin{align*}
        \dim_{\Ft} H_{1, (i)} = 2 \dim_{\Ft} Q_i.
    \end{align*}
    which is impossible whenever $Q_i \neq 0$, as desired.
    
    Hence \Cref{prop:cyclic_scalar} does not apply directly to GB or BB codes.
    The classification of \Cref{sec:constructions} survives nonetheless: since the scalar action factors through $Q_i$, scalar surgeries are indexed, up to measurement-equivalence, by the cyclic ideals of $Q_i$ — acting identically on both sectors.
\end{remark}

\subsubsection{Classification}
We now show how to efficiently classify scalar surgeries for codes associated to Koszul complexes.
Throughout, we write $J = (a, b)$, call $Q = R/J$ the \emph{coefficient ring} of $K_{\bullet}(a,b)$, and $Q_i = e_i Q \simeq R_i/J_i$ its \emph{local factors}, where $R_i = e_i R$, $J_i = e_i J$.
For $c \in R$ we write $c_i = e_i c$ for its block components and $\bar c_i \equiv c_i \mod J_i$ for their images in the local factors. 

The first homology group of $K_\bullet(a, b)$ is non-cyclic and admits a natural decomposition into two sectors, see~\Cref{rem:non_cyclic_two_block}.
Scalar surgery, by construction, cannot distinguish between these two sectors and necessarily acts on both in the same way. 
Nevertheless, scalar surgery remains a powerful tool in this setting, and can be classified efficiently (\Cref{alg:koszul_poly}) due to the following observations:
\begin{enumerate}[(i)]
    \item The homology of $K_\bullet(a,b)$ is non-zero on a local block $(R_i, \mm_i)$ of $R$ if and only if $\pi_i(J) \subseteq \mm_i$, where $\pi_i$ is the projection $\pi_i : R \rightarrow R_i$, see \Cref{thm:block_koszul_homology}. 
    \item The coefficient ring $Q = R/J$ coincides with the $0$th homology , $Q = H_0$. 
    Each local factor $Q_i= e_i Q$ is a cyclic $R_i$-module, generated by the idempotent $e_i$ modulo the ideal $J_i$.
    The standard monomials of a Gr\"obner basis of $J$ yield an $\Ft$-basis of $Q=R/J$, see \Cref{prop:dim_koszul_two}.
    \item The measured module $c_i H_{1, (i)}$ depends only on
    \begin{align*}
        \bar c_i \equiv c_i \mod J_i.
    \end{align*}
    Thus the scalar surgeries on the $i$th block are parametrised by the cyclic ideals of $Q_i = e_i(R/J)$. 
    Different cyclic ideals may induce the same surgery, but enumerating the cyclic ideals of $Q_i$ always yields a complete set of scalar surgery representatives, possibly with redundancies.
    When $Q_i$ is a chain ring, these are the layers of the radical filtration
    \begin{align*}
        Q_i \supseteq \mm_i Q_i \supseteq \mm_i^2 Q_i \supseteq \dots \supseteq 0.
    \end{align*}
    See \Cref{lemma:j_zeroes_h1}.
    \item Since the idempotents are orthogonal and sum to one, every element is the sum of its components:
    \begin{align*}
        c = \sum_i c_i
    \end{align*}
    and $c_i c_j = 0$ for $i \neq j$. This is why at line \ref{line:sum_c} of \Cref{alg:koszul_poly} the scalar $c$ is a sum: choosing a generator $\bar c_i \in Q_i$ per supported block and lifting each to $c_i \in R$, then $c = \sum_i e_ic_i$ is the desired element. 
    Scalar surgery with $\bar c_i$ measures $\dim_{\Ft}{\bar c_i H_{1, (i)}}$ logical operators. This is equal to $2 \dim_{\Ft}\bar c_i Q_i$ whenever $H_1 \simeq_Q Q \oplus Q$ as $Q$-module, see \Cref{remark:f2_vs_Q}.
\end{enumerate}
\begin{algorithm}[t] \label{alg:koszul_poly}
    \caption{Scalar surgery classification for two-block group algebra codes}
    \begin{algorithmic}[1]
        \Require $B = \Ft[x_1, \dots, x_t]$,
        \Statex $I = (x_1^{\ell_1}{-}1, \dots, x_t^{\ell_t}{-}1)$
        \Statex $R = B/I$,
        \Statex $a, b \in B$, $J = (a, b)$
        
        \Ensure at least one scalar surgery per measurement class
        \State compute the minimal primary decomposition of $I$
        \Statex $I = T_1 \cap \dots \cap T_\nu$, \Statex with associated primes $P_i = \sqrt{T_i}$
        \State compute a Gr\"obner basis of $J + I$ and its standard monomials $\mathfrak{b}_J$ 
        \Statex \Comment{$k = 2|\mathfrak{b}_J|$}
        \State $\mathcal{S} \gets \{i \mid J \subseteq P_i\}$
            \Statex \Comment{reduce $a, b$ modulo $P_i$}
        \State $\mathcal{A} \gets \emptyset$
        \For{each block $i \in \mathcal{S}$}
            \State form $Q_i = B/(T_i + J)$
             \State{ $\mathcal{A} \gets \mathcal{A} \cup \{\text{cyclic ideals of } Q_i\}$ \label{line:cyclic}}
        \EndFor
        \State choose a target class and a local generator $\bar c_i \in Q_i$
        \State lift each $\bar c_i \in Q_i$ to a polynomial $c_i \in B$
        \State $c \gets \sum_{i \in \mathcal{S}} e_ic_i$ \label{line:sum_c}
        \State $\mu_{c'} \gets$ optimise $\mu_c$ via \Cref{alg:optimise}
        \State \Return $\mu_{c'}$
    \end{algorithmic}
\end{algorithm}
At line~\ref{line:cyclic} of~\Cref{alg:koszul_poly} we compute, for each local factor, its cyclic ideals.
We can do that via $\Ft$-linear algebra as follows.
\begin{enumerate}[(i)]
    \item (Find a nice basis of $Q_i$). The standard monomials $\mathfrak{b}_J$ form an $\Ft$-basis of $Q = B/(I+J)$ \cite{cox_ideals_2007}.
    The $\Ft$ representation of a polynomial $p$ is the unique remainder $p_{\mathcal{G}}$ under division of $p$ by the Gr\"obner basis $\mathcal{G}$,
    a linear combination of $\mathfrak{b}_J$. 
    A basis $\mathfrak{b}_i$ of $Q_i$ is obtained by row-reducing the vectors
    $\{(e_i\beta)_\mathcal{G} \mid \beta \in \mathfrak{b}_J\}$.    
    \item (Generate and store cyclic ideals). The cyclic ideal generated by $\bar c \in Q_i$ is
    \begin{align*}
        \bar c\, Q_i =\operatorname{span}_{\Ft}\{(\bar c \,b)_G \mid b \in \mathfrak{b}_i\}.
    \end{align*}
    We store each ideal as a matrix in reduced row echelon form, so equality of ideals is equality of matrices.
    \item (Chain-ring shortcut).
    First compute the chain of subspaces
    \begin{align*}
        Q_i \supseteq \mm_iQ_i \supseteq \mm_i^2Q_i \supseteq \dots
    \supseteq 0 \, ,
    \end{align*}
    where $\mm_i = P_i/T_i$ and $P_i = \sqrt{T_i}$.
    
    The local factor $Q_i$ is a chain ring  if and only if:
    \begin{align*} \dim_{\Ft}\big(\mm_iQ_i/\mm_i^2Q_i\big) \le \dim_{\Ft}\big(Q_i/\mm_iQ_i\big).
    \end{align*}
    In such a case, the subspaces $\mm_i^t Q_i$ are \emph{all} of its ideals and a generator of $\mm_i^tQ_i$ is any of its elements outside in $\mm_i^{t+1}Q_i$.
\end{enumerate}
GB and BB codes as discussed in~\Cref{sec:ls-constructions-gb} and~\Cref{sec:ls-constructions-bb} are, respectively, the univariate and bivariate special cases of two-block group algebra codes.

Thus, in the following subsections we specialise the techniques above (in particular \Cref{alg:koszul_poly}) to a broad variety of code constructions and work through a dedicated example code instance for each.
For an overview, see~\Cref{tab:surgery_summary}.

\subsection{On the choice of the modulus $x^\ell - 1$ of the quotient ring $\Ft[x]/(x^\ell-1)$} \label{app:choice_of_ell}
The roots of $x^\ell-1$ are the $\ell$-th roots of unity over extension fields of $\Ft$.
We assume $\ell$ odd, for $\ell' = 2^s \ell$ with $\ell$ odd, $x^{\ell'} -1 = (x^{\ell}-1)^{2^s}$ and the following argument applies to the odd part.
Over $\Ft[x]$, if $\vartheta$ is a root of $p(x)$, so is $\vartheta^{2}$, and the degree of the minimal polynomial of $\vartheta$ is exactly the size of the orbit of $\vartheta$ under the Frobenius endomorphism $\vartheta \mapsto \vartheta^2$.
As all the $\ell$th root of unity have order that divides $\ell$, to understand how the polynomial $x^{\ell} - 1$ factors into irreducible (i.e., minimal polynomial of its roots on some extension field), we need to look at integers divisors $\lambda$ of $\ell$.

For each divisor $\lambda$ of $\ell$ there are $\phi(\lambda)$ roots of order exactly $\lambda$, where $\phi$ is Euler's totient function.
Each root has minimal polynomial of degree $\mathrm{ord}_\lambda(2)$, the multiplicative order of $2$ modulo $\lambda$.
Hence $x^\ell - 1$ has, for each $\lambda \mid \ell$, exactly $\phi(\lambda)/\mathrm{ord}_\lambda(2)$ irreducible factors of degree $\mathrm{ord}_\lambda(2)$. 
For instance, $\lambda = 1$ corresponds to the root $1$, with minimal polynomial $x-1$, which always divides $x^\ell - 1$; the primitive roots ($\lambda = \ell$) contribute $\phi(\ell)/\mathrm{ord}_\ell(2)$ factors of degree $\mathrm{ord}_\ell(2)$.
 
Picking $\ell = 2^r-1$ as done in Refs.~\cite{webster_explicit_2025, webster_pinnacle_2026} yields a very specific factorisation of the modulus, as any irreducible factor of $x^{2^{r}-1}-1$ has degree that divides $r$.
In fact, $2^r \equiv 1 \mod \ell$, and so for each $\lambda$ divisor of $\ell$, $\mathrm{ord}_\lambda(2)$ divides $r$.
More explicitly, we can use the fact that each element in $\mathbb{F}_{2^r}$ satisfies the equation $x^{2^{r}} = x$, and that
\begin{align*}
    x^{2^{r}}- x = \prod_{\deg(p) \mid r, \, p \text{ irreducible}} p.
\end{align*} 

\section{Deferred proofs} \label{app:main_proofs}
\subsection{Proofs of \Cref{sec:ls-theory}}
Each lifted surgery $f \in \mathrm{LS}_R(C_\bullet)$, induces a $R$-linear map on homology
\begin{align*}
    f^1_*: H_1(C_\bullet) \rightarrow H_1(C_\bullet)\, ,
\end{align*}
and the set of logical operators measured is $\im f_*^1$; 
as $f_*^1$ is $R$-linear, its image is an $R$-submodule of $H_1(C_\bullet)$.
In particular, the addressability of lifted surgery is, in first instance, governed by the structure of the $R$-submodules of the homology module. 

In the following, we fix a block $R_i$.
We write
\begin{align*}
    Z_i = \ker \partial_1^{(i)}, \quad B_i = \im \partial_2^{(i)}, H_{1, (i)} = Z_i/B_i,
\end{align*}
and write $\pi : Z_i \twoheadrightarrow H_{1, (i)}$ for the quotient map.
\begin{proposition}[Compatibility condition for measurable maps] \label{lemma:lifted_endo_fix_boundaries}
       An endomorphism $g \in \operatorname{End}_{R_i}(H_{1, (i)})$ is measurable if and only if there exists an $R_i$-linear map $\phi: Z_i \rightarrow Z_i$ on the cycles that 
       satisfies the compatibility condition
       \begin{align*}
           \phi(B_i) \subseteq B_i, \quad g\pi= \pi\phi.
       \end{align*}
\end{proposition}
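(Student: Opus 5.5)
The plan is to prove the two implications separately, working throughout on the block subcomplex $C_{\bullet,(i)} = e_iC_\bullet$. Its terms are $e_iC_n \simeq e_iR^{m_n} = R_i^{m_n}$, which are finite free $R_i$-modules. By \Cref{thm:ls_block_decomposition}, a measurable map on the $i$th block is exactly the map $f_*$ induced on $H_{1,(i)}$ by some $f \in \ls_{R_i}(C_{\bullet,(i)})$.

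($\Rightarrow$) Let $g = f_*$ for a chain map $f = (f_2,f_1,f_0)$ on the block, and set $\phi = f_1|_{Z_i}$.
\begin{itemize}
\item The chain map condition $\partial_1 f_1 = f_0\partial_1$ shows that $f_1$ sends cycles to cycles, so $\phi: Z_i \to Z_i$ is a well-defined $R_i$-linear map.
\item The condition $f_1\partial_2 = \partial_2 f_2$ shows that $\phi(B_i) \subseteq B_i$.
\item The induced map on homology is $[z]\mapsto [f_1 z]$ by definition, which is precisely the identity $g\pi = \pi\phi$.
\end{itemize}

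($\Leftarrow$) Given $\phi$ with $\phi(B_i)\subseteq B_i$ and $g\pi = \pi\phi$, I build a chain map $(f_2,f_1,f_0)$ one degree at a time.
\begin{itemize}
\item \emph{Degree 1.} Regard $\phi$ as a map $Z_i \to R_i^{m_1}$ defined on the submodule $Z_i \subseteq R_i^{m_1}$. By \Cref{prop:extend_endo_module} (self-injectivity of $R_i$), it extends to an $R_i$-linear $f_1: C_{1,(i)}\to C_{1,(i)}$ with $f_1|_{Z_i} = \phi$.
\item \emph{Degree 0.} Define $f_0$ on the submodule $\im\partial_1^{(i)} \subseteq R_i^{m_0}$ by $f_0(\partial_1 x) := \partial_1 f_1(x)$. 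This is well defined: if $\partial_1 x = \partial_1 x'$, then $x - x' \in Z_i$, so $f_1(x-x') = \phi(x-x') \in Z_i$ and $\partial_1 f_1(x - x') = 0$. The map is $R_i$-linear. Applying \Cref{prop:extend_endo_module} again extends it to all of $C_{0,(i)}$, and $f_0\partial_1 = \partial_1 f_1$ holds by construction.
\item \emph{Degree 2.} Since $C_{2,(i)}$ is free, it suffices to specify $f_2$ on a basis $\{u_j\}$. We have $f_1\partial_2 u_j = \phi(\partial_2 u_j) \in \phi(B_i) \subseteq B_i = \im\partial_2$, so we may choose a preimage $w_j$ with $\partial_2 w_j = f_1\partial_2 u_j$ and set $f_2(u_j) := w_j$. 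Extending linearly gives $\partial_2 f_2 = f_1\partial_2$.
\end{itemize}
Hence $f = (f_2,f_1,f_0)$ lies in $\ls_{R_i}(C_{\bullet,(i)})$, and its induced map sends $[z]\mapsto[f_1 z] = [\phi z] = g[z]$. So $g$ is measurable.

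The main obstacle is degree 0, and to a lesser extent degree 1: over a non-field local block, maps defined only on a submodule need not extend, and this is where the injectivity of the free modules $R_i^{m}$ from \Cref{prop:extend_endo_module} is essential. Degree 2 needs only freeness, and this is the one place where the hypothesis $\phi(B_i)\subseteq B_i$ is used.
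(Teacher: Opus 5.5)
Your proposal is correct and follows the paper's proof essentially step for step. For the forward direction you restrict $f_1$ to $Z_i$; for the converse you extend $\phi$ to $C_{1,(i)}$ by self-injectivity, define $f_0$ on $\im\partial_1^{(i)}$ and extend it injectively a second time, then lift $f_1\partial_2$ through $\partial_2$ using freeness of $C_{2,(i)}$, and your basis-wise choice of preimages is exactly the projectivity argument the paper invokes.
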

\begin{proof}       
    Suppose $g \in \im(\star_i)$ and let $g_\bullet = (g_2, g_1, g_0)$ be its preimage in $\ls_{R_i}(C_{\bullet, (i)})$.
    First, $g_1$ maps cycles to cycles, because $\partial_1 g_1 = g_0 \partial_1$, so it restricts to $Z_i$.
    Second, $g_1$ maps boundaries to boundaries because $g_1\partial_2 = \partial_2 g_2$.
    Thus, letting $\phi = g_1|_{Z_i}$, $\phi(B_i) \subseteq B_i$ and $\pi \phi = g \pi$ since $g = g_*^1$ is by definition the map induced by $f_1$ on the quotient $H_{1, (i)} = Z_i/B_i$.
    Conversely, let $\phi \in \operatorname{End}_{R_i}(Z_i)$ be such that $\phi(B_i) \subseteq B_i$ and $\pi \phi = g \pi$.
    By \Cref{prop:extend_endo_module}, $\phi$ extends to the finite free $R_i$-module $C_{1}^{(i)}$ as $f_1 : C_1^{(i)} \rightarrow C_{1}^{(i)}$.
    By construction
    \begin{align*}
       f_1(Z_i) \subseteq Z_i, \quad f_1(B_i) \subseteq B_i.
    \end{align*}
    Second, we define $f'_0$ as
    \begin{align*}
       f'_0 : \im \partial_1^{(i)} &\rightarrow  \im \partial_1^{(i)}\\
       \partial_1^{(i)}(v) &\longmapsto \partial_1^{(i)}f_1(v) \, .
    \end{align*}
    The map $f_0'$ is well defined: if $\partial_1^{(i)}(v) = \partial_1^{(i)}(w)$ then $v + w \in Z_i$, hence $f_1(v+w) \in Z_i$ and $\partial_1^{(i)}f_1(v) = \partial_1^{(i)}f_1(w)$.
   
    Again using \Cref{prop:extend_endo_module}, we extend $f_0'$ to $f_0 \in \operatorname{End}_{R_i}(C_0^{(i)})$. Last, since $C_2^{(i)}$ is free, it is projective. Hence, the map $f_1\partial_2^{(i)}:C_2^{(i)} \rightarrow B_i$, lifts through the surjection  $\partial_2^{(i)} :C_2^{(i)} \twoheadrightarrow B_i$ . Thus, there exists $f_2$ such that the following diagram commutes:
    \begin{equation*}
    \begin{tikzcd}
        & C_2^{(i)} \\
        C_2^{(i)} & B_i
        \arrow["\partial_2^{(i)}" ,two heads, from=1-2, to=2-2]	
        \arrow["f_1\partial_2^{(i)}"', from=2-1, to=2-2]
        \arrow[dashed, "\exists f_2", from=2-1, to=1-2]
    \end{tikzcd}
    \end{equation*}
   In conclusion $f_\bullet = (f_2, f_1, f_0)$ is a lifted surgery and its image via the homology-action map is $g$, as desired.
\end{proof}
The following proposition characterises the submodules that appear as image of an endomorphism, and so it serves to describe the addressable submodules when we restrict to measurable maps. 
\begin{proposition}[Submodules as quotients] \label{prop:quotient}
    Let $M$ be an $R$-module and $W \subseteq M$ a submodule. There exists $g \in \operatorname{End}_R(M)$ such that $\im g = W$ if and only if $W$ is isomorphic to a quotient of $M$:
    \begin{align*}
        M/N \simeq W
    \end{align*}
    for some submodule $N \subseteq M$.
\end{proposition}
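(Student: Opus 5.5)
The proof is a direct application of the first isomorphism theorem in one direction and an explicit composite construction in the other. Neither direction uses the block structure or anything specific to group algebras, so I will argue for an arbitrary $R$-module $M$.

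\emph{($\Rightarrow$)} Suppose $g \in \operatorname{End}_R(M)$ satisfies $\im g = W$. Set $N = \ker g$, which is an $R$-submodule of $M$ because $g$ is $R$-linear. The first isomorphism theorem for $R$-modules says that $g$ induces an $R$-linear isomorphism
\begin{align*}
    \bar g : M/N \longrightarrow \im g = W, \qquad m + N \longmapsto g(m).
\end{align*}
Hence $W$ is isomorphic to a quotient of $M$.

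\emph{($\Leftarrow$)} Suppose $N \subseteq M$ is a submodule and $\psi : M/N \to W$ is an $R$-linear isomorphism. Let $q : M \twoheadrightarrow M/N$ be the quotient map and $\iota : W \hookrightarrow M$ the inclusion. Define
\begin{align*}
    g = \iota \circ \psi \circ q \in \operatorname{End}_R(M).
\end{align*}
This $g$ is $R$-linear, being a composite of $R$-linear maps. Its image is $\iota(\psi(q(M))) = \iota(\psi(M/N)) = \iota(W) = W$, because $q$ and $\psi$ are both surjective.

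I do not expect a genuine obstacle here. The one point to be careful about is that the isomorphism $M/N \simeq W$ in the statement must be understood as an isomorphism of $R$-modules, not merely of $\Ft$-vector spaces. If only an $\Ft$-linear isomorphism were available, the composite $g$ would be $\Ft$-linear but not necessarily $R$-linear, and the argument would fail. I will therefore state explicitly that all maps are $R$-linear.

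It is also worth recording, for use with \Cref{lemma:lifted_endo_fix_boundaries}, that this proposition only describes images of arbitrary endomorphisms of $H_{1,(i)}$. It does not by itself ensure that the constructed $g$ is measurable; that still requires the compatibility condition $\phi(B_i) \subseteq B_i$ for some lift $\phi$ of $g$.
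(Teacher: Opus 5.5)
Your proof is correct and follows the paper's argument essentially verbatim. You use the first isomorphism theorem with $N = \ker g$ for one direction, and the composite $\iota \circ \psi \circ q$ of the quotient map, the isomorphism and the inclusion for the other; your remark that the isomorphism $M/N \simeq W$ must be $R$-linear rather than merely $\Ft$-linear makes explicit a point the paper leaves implicit.
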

\begin{proof}
    Let $W = \im g$. Then $g: M \twoheadrightarrow W$ is surjective and, by the First Isomorphism Theorem, $W \simeq M/ \ker g$. Vice versa, let $\phi : M/N \xrightarrow[\sim]{} W$ be an isomorphism and $\pi: M \twoheadrightarrow M/N$ the natural projection.
    The map $g = \iota \circ \phi \circ \pi $, where $\iota : W \hookrightarrow M$ is the inclusion, is an endomorphism of $M$ with image $W$.
    In particular:
    \begin{align*}
        \mu_{R}(W)\leq \mu_{R}(M),
    \end{align*}
    where $\mu_R(M)$ is the minimal number of generators of $M$ as $R$ module (see \Cref{def:min_number_of_gen} in \Cref{app:group_algebras}).
\end{proof}

\Cref{lemma:lifted_endo_fix_boundaries} describes which endomorphisms of $H_{1, (i)}$ are induced by lifted surgeries.
\Cref{prop:quotient} describes which submodules, in general, can occur as the image of some endomorphism.
Combining these two propositions, $W$ is addressable if and only if:
\begin{enumerate}
    \item there exists $g\in\operatorname{End}_{R_i}(H_{1,(i)})$
    such that $\im g=W$;
    \item for at least one such $g$, there exists
    $\phi\in\operatorname{End}_{R_i}(Z_i)$ satisfying:
    \begin{align*}
        \phi(B_i)\subseteq B_i,
        \qquad
        \pi\phi=g\pi.
    \end{align*}
\end{enumerate}
As the following theorem shows, these two conditions are also sufficient. 
\addressablemodule*
\phantomsection
\label{proof:addressable_module}
\begin{proof}
    $(1) \Leftrightarrow (2)$. By \Cref{lemma:lifted_endo_fix_boundaries}, the measurable maps are exactly the quotient maps induced by endomorphisms of cycles that preserve boundaries.
    In other words, $g\in \mathcal{M}_i$ if and only if there exists $\phi \in \operatorname{End}_{R_i}(Z_i)$ such that
    \begin{align*}
      \phi(B_i) \subseteq B_i, \quad g\pi |_{Z_i} = \pi  \phi |_{Z_i}.
    \end{align*}
    Hence, $W \in \mathcal{A}_i$ if and only if $W= \im g$ for such a quotient map $g$.
    Since
    \begin{align*}
      \im g = g(\pi(Z_i)) = \pi (\phi(Z_i)),
    \end{align*}
    this proves the equivalence of $(1)$ and $(2)$.
    
    $(2) \Rightarrow (3)$.
    Take $\phi \in \operatorname{End}_{R_i}(Z_i)$ and define
    \begin{align*}
      g: H_{1, (i)} &\longrightarrow H_{1, (i)} \\
      \pi(z) &\longmapsto \pi(\phi(z))
    \end{align*}
    which is well defined because $\pi : Z_i \twoheadrightarrow H_{1, (i)}$ is surjective, and $\phi(B_i) \subseteq B_i$, so $\phi$ passes to the quotient.
    By construction, $\im g =W$, and hence $(3)$ is verified.
    
    $(3) \Rightarrow (2)$.
    Let $b \in B_i$:
    \begin{align*}
        \pi(\phi(b)) = g(\pi(b)) = g(0) = 0
    \end{align*}
    since $\pi(b) = 0$ for any $b \in B_i$ by definition of the quotient map $\pi$.
    In particular, $\phi(b) \in \ker \pi = B_i$, $\forall b \in B_i$, hence $\phi(B_i) \subseteq B_i$. Last, $W = \im g = \pi(\phi(Z_i))$.
    
    $(2) \Rightarrow (4)$.
    Given $\phi \in \operatorname{End}_{R_i}(Z_i)$, let $g$ be the associated endomorphism of $H_{1, (i)}$.
    Set
    \begin{align*}
        N = \ker g.
    \end{align*}
    For $z \in Z_i$,
    \begin{align*}    
        z\in \phi^{-1}(B_i) &\Leftrightarrow\phi(z) \in B_i \\
        &\Leftrightarrow \pi(\phi(z)) = 0 \\
        &\Leftrightarrow g(\pi(z)) = 0 \\
        &\Leftrightarrow \pi(z) \in N \\
        &\Leftrightarrow z \in \pi^{-1}(N).
    \end{align*}
    Hence, $\phi^{-1}(B_i) = \pi^{-1}(N)$ and $(4)$ holds.
    
    $(4) \Rightarrow (2)$.
    Since $0 \in N$,
    \begin{align*}
        B_i = \pi^{-1}(0) \subseteq \pi^{-1}(N) = \phi^{-1}(B_i)
    \end{align*}
    therefore
    \begin{align*}
        \phi(B_i) \subseteq B_i
    \end{align*}
    and together with $W = \pi(\phi(Z_i))$, $(2)$ holds. 
\end{proof}
\Cref{thm:addressability} characterises the scope of lifted surgery and yields an important corollary: whenever the blocks of $R$ are fields, every $R$-linear endomorphisms is measurable and every $R$-submodule addressable.
\surgeryfield*
\phantomsection
\label{proof:surgery_field}
\begin{proof}
    Since $R_i = \kk_i$ is a field, then the short exact sequence
    \begin{align*}
        0 \rightarrow B_i \rightarrow Z_i \rightarrow H_{1, (i)} \rightarrow 0
    \end{align*}
    splits as vector spaces, so
    \begin{align*}
        Z_i \simeq B_i \oplus H_{1, (i)}.
    \end{align*}
    Since on a field a linear subspace always has a complement:
    \begin{align*}
        C_{1}^{(i)} \simeq Z_i \oplus T_i ,
    \end{align*}
    and combining these two splittings, with some minor misuse of notation, we can write:
    \begin{align*}
        C_{1}^{(i)} \simeq B_i \oplus H_{1, (i)} \oplus T_i.
    \end{align*}
    Now let $g \in \operatorname{End}_{\kk_i}(H_{1, (i)})$.
    Define $f_1 \in \operatorname{End}_{\kk_i}(C_1^{(i)})$:
    \begin{align*}
        f_1: B_i \oplus H_{1, (i)} \oplus T_i &\longrightarrow C_{1}^{(i)}\\
        (b, h, t) &\longmapsto g(h)
    \end{align*}
    and set $f_\bullet = (0, f_1, 0)$. It is easy to verify that $f_\bullet$ is a chain map. Moreover, $f_*^1 = g$ by construction and therefore every endomorphism of $H_{1, (i)}$ is measurable.
    \begin{align*}
        \im(\star_i)= \operatorname{End}_{\kk_i}(H_{1, (i)}),
    \end{align*}
    Last, if $W \subseteq H_{1, (i)}$ is a $\kk_i$-linear subspace, by taking its complement, we can write 
    \begin{align*}
        H_{1, (i)} \simeq W \oplus U
    \end{align*}
    so that $W \simeq \im \pi_W$ where $\pi_W \in \operatorname{End}_{\kk_i}(H_{1, (i)})$ is the natural projection onto $W$.
    Since $\pi_W$ is measurable, $W$ is addressable, as desired.
\end{proof}
\surgerycyclic*
\phantomsection
\label{proof:cyclic_scalar}
\begin{proof}
    Since $H_{1, (i)}$ is cyclic, there exist a generator $h \in H_{1, (i)}$ such that $H_{1, (i)} = \{rh \mid r \in R_i\}$.
    The map
    \begin{align*}
       \phi: R_i &\rightarrow H_{1, (i)} \\
       r &\longmapsto rh
    \end{align*}
    is surjective with $\ker \phi = (0:_{R_i} h) = (0:_{R_i} H_{1, (i)})$.
    By the First Isomorphism Theorem
    \begin{align*}
       H_{1, (i)} \simeq R_i/(0:_{R_i} h).
    \end{align*}
    Every $R_i$-linear endomorphism of the cyclic module $H_{1, (i)}$ is uniquely determined by the image of the generator $h$ and hence, by linearity, coincides with multiplication by some $r \in R_i$.
    In other words, $\operatorname{End}_{R_i}(H_{1, (i)}) \simeq R_i/(0:_{R_i} h)$ and each endomorphism is scalar multiplication.
    By \Cref{lemma:scalar_multiplication} every scalar multiplication is measurable, hence $\mathcal{M}_i = \operatorname{End}_{R_i}(H_{1, (i)})$.
    For the second claim, first observe that a finite chain ring is principal, $\mm_i  = (m)$, and every ideal is a power $(m^t) = \mm_i^t$ for some integer $t$.
    In particular
    \begin{align*}
        (0:_{R_i} h) = \mm_i^{\ell}
    \end{align*}
    where $\ell = \ell(H_{1, (i)})$ is the Loewy length of $H_{1, (i)}$ (see \cref{app:group_algebras}).
    The submodules of $H_{1, (i)}\simeq R_i/(0:_{R_i} h) = R_i/\mm_i^{\ell}$ are the ideals of $R_i$ that contain $\mm_i^{\ell}$, hence they are each image of the scalar surgery $\mu_{m^t}$:
    \begin{align*}
       \mm_i^t H_{1, (i)} = m^t H_{1, (i)} = \im (\mu_{m^t})_*, \quad 0 \le t \le \ell.
    \end{align*}
    In particular, each submodule is addressable via the scalar surgery $\mu_{m^t}$ and the chain
    \begin{align} \label{eq:strict_chain}
       \mm_i^{t}H_{1, (i)} \supseteq \mm_i^{t+1} H_{1, (i)}
    \end{align}
    is strict, as by Nakayama's lemma $\mm_i^{t}H_{1, (i)} = \mm_i^{t+1} H_{1, (i)}$ implies $\mm_i^{t}H_{1, (i)} = 0$.
    
    Last, suppose $\mu_c$ and $\mu_{c'}$ are measurement-equivalent i.e.\, $cH_{1, (i)} = c' H_{1, (i)}$.
    If $c H_{1, (i)} = 0$, then
    \begin{align*}
       c+ c' \in (0:_{R_i} H_{1, (i)})
    \end{align*}
    we have an annihilator shift.
    Otherwise, write $c = u \cdot m^{\rho(c)}$ and $c' = u' \cdot m^{\rho(c')}$, where $u, u'$ are units.
    Then
    \begin{align*}
       c H_{1, (i)} = m^{\rho(c)} H_{1, (i)} = c' H_{1, (i)} = m^{\rho(c')} H_{1, (i)}
    \end{align*}
    as the chain \cref{eq:strict_chain} is strict, $\rho(c) = \rho(c')$ and therefore $c' = u' u^{-1} c$ is a unit rescaling. 
\end{proof}
\homotopicequivalence*
\begin{proof}
    Let $h$ be an homotopy for $f$ and $f'$.
    If $\partial z = 0$, then $(f-f')z = \partial h(z) + h \partial(z) = \partial(hz) \in \im \partial$ is a boundary and so $[fz] = [f'z]$ and therefore homotopic maps induces the same map on homology. 
    
    Second, we exhibit a chain isomorphism between $\cone(f)$ and $\cone(f')$.
    Consider the maps $\psi_{h, n}:  \cone(f) \longrightarrow \cone(f')$
    \begin{align*}
        \psi_{h, n}: D_{n-1}\oplus C_n &\longrightarrow D_{n-1} \oplus C_n \\
        (x, y) &\longmapsto (x, h_{n-1}(x)+y)
    \end{align*}
    To show that $\psi_{h}$ is a chain isomorphism, we need to verify that the squares commute:
    \begin{align*}
        \psi_{h,n-1}\partial_n^{\cone(f)} = \partial_n^{\cone(f')}\psi_{h, n}
    \end{align*}
    which holds if and only if:
    \begin{align*}
        \begin{pmatrix}
            \partial_{n-1} & 0\\
            h_{n-2} \partial_{n-1} + f_{n-1} & \partial_n
        \end{pmatrix} =  \begin{pmatrix}
            \partial_{n-1} & 0\\
            \partial_{n-1}  h_{n-1} + f'_{n-1} & \partial_n
        \end{pmatrix}
    \end{align*}
    which holds whenever $f\simeq f'$ via the homotopy $h$.
    Thus, $\psi_h$ is a well-defined chain map from $\cone(f)$ to $\cone(f')$.
    To show that it is actually an isomorphism, we observe that it is invertible:
    \begin{align*}
        \psi_h^{-1}(x, y) = (x, y+h(x)) = \psi_h(x, y).
    \end{align*}
    In conclusion:
    \begin{align*}
        \psi_h \partial^{\cone(f)} = \partial^{\cone(f')}\psi_h,
    \end{align*}
    and  $\cone(f) \simeq \cone(f')$ are isomorphic as chain complexes.
\end{proof}
\subsection{Proofs of~\Cref{sec:bridging}}\label{sec:app-proofs-bridging}
\propOneBridging*
\begin{proof}
This claim essentially follows from the SES and rank-nullity, see also~\cref{eq:cone-ses}, and~\cite{cowtan_fast_2025,ide_faulttolerant_2025}.

Recall that from the SES $0 \to C \oplus C' \xrightarrow{\iota} M_\bullet \xrightarrow{\pi}  D[-1] \to 0$ with $\iota(x) = (x,0), \pi(x,y) = y$, we obtain a LES in homology which around deg 1 reads 
\begin{align*}
    H_1(D_\bullet) \xrightarrow{H_1(m)} H_1(C\oplus C') \xrightarrow{\iota_*} H_1(M_\bullet) \\ \xrightarrow{\pi_*} H_0(D_\bullet) \xrightarrow{H_0(m)} H_0(C \oplus C').
\end{align*}
We check the kernels and images of the maps in the LES: 
Exactness at $H_1(C \oplus C')$ implies $\ker \iota_* = \im m_*$.
Moreover, we have $\ker \pi_* = \im \iota_*, \im \pi_* = \ker H_0(m_\bullet)$ and applying rank-nullity to $\iota_*, \pi_*,$ and $H_0(m_\bullet)$ we obtain
\begin{align*}
    & \dim_{\Ft} H_1(M_\bullet)=  \dim_{\Ft} \ker \pi_* + \dim_{\Ft} \im \pi_* = \\ \nonumber
    &  \dim_{\Ft} \im \iota_* + \dim_{\Ft} \ker H_0(m_\bullet) = \\ \nonumber
    & (\dim_{\Ft} H_1(C \oplus C') - \rank_{\Ft} H_1(m_\bullet)) + \\\nonumber
    & (\dim_{\Ft} H_0(D_\bullet) - \rank_{\Ft} (H_0 (m_\bullet)) \, . \qedhere
\end{align*} 
\end{proof}
\propTwoBridging*
\begin{proof}
By \Cref{prop:bridging_measured_set}, the measured logical space is
\begin{align*}
   \im m_* =
    \left\{ 
        \bigl(f_*[d],g_*[d]\bigr) \;\middle|\; [d]\in H_1(D_\bullet) 
    \right\}. 
\end{align*}
Suppose first that the kernels differ \(K_f\neq K_g\). After interchanging \(f_*\) and \(g_*\), if necessary, there exists
\begin{equation*}
    [d]\in K_g\setminus K_f.
\end{equation*}
It follows that $m_*[d]=\bigl(f_*[d],0\bigr)\neq 0$, so the measured space contains a nonzero logical operator supported
entirely on the first base code.

Conversely, suppose that \(K_f=K_g=:K\). 
If $m_*[d]=\bigl(f_*[d],0\bigr)$, then \(g_*[d]=0\), so \([d]\in K_g=K_f\) and hence $f_*[d]=0$ and \(m_*[d]=0\). 
Interchanging \(f_*\) and \(g_*\) gives the same conclusion for a measured operator supported entirely on the second base code. 
Thus every nonzero measured logical operator is measured jointly.

Furthermore,
\begin{align*}
    \ker m_* &= \left\{[d]\mid f_*[d]=0,\ g_*[d]=0\right\}\\
             &= K_f\cap K_g =K.
\end{align*}
The first isomorphism theorem therefore gives $\im m_*\simeq H_1(D_\bullet)/K$, and rank--nullity over $\Ft$ gives
\begin{align*}
    \rank_{\Ft}m_* = \dim_{\Ft}H_1(D_\bullet)-\dim_{\Ft}K = r \, .  
\end{align*}

Finally, because the classes
\begin{align*}
    [d_1]+K,\dots,[d_r]+K    
\end{align*}
form a basis of \(H_1(D_\bullet)/K\), their images under the induced isomorphism form a basis of $\im m_*$. 
Since \(f_\bullet\) and \(g_\bullet\) are chain maps,
\begin{align*}
    \partial_1^C f_1(d_j)=f_0\partial_1^D(d_j)=0,
        \quad
    \partial_1^{C'}g_1(d_j)=g_0\partial_1^D(d_j)=0 \, ,    
\end{align*}
so \(f_1(d_j)\) and \(g_1(d_j)\) are cycles representing the stated logical operators. \qedhere
\end{proof}

\section{Description of measurable maps via $\operatorname{Ext}$} \label{app:ext}
Fix a block $R_i$.
By definition of chain complex, we have the short exact sequence of $R_i$-modules:
\begin{align*}
    \xi_i : 0 \rightarrow B_i \rightarrow Z_i \rightarrow H_{1, (i)} \rightarrow 0,
\end{align*}
and we denote by the same symbol its extension class $\xi_i$
\begin{align*}
    \xi_i \in \operatorname{Ext}^1_{R_i}(H_{1, (i)}, B_i).
\end{align*}
$\operatorname{Ext}^1$ is a bifunctor: an endomorphism $\beta$ of $B_i$ acts by pushout $\xi_i \mapsto \beta_* \xi_i$ and an endomorphism $g$ of $H_{1, (i)}$ acts by pullback $\xi_i \mapsto g^* \xi_i$.
Then the image of the homology-action map is:
\begin{align*}
    \im (\star_i)= \left\{
    g \in \operatorname{End}_{R_i}(H_{1, (i)})
    \;\middle|\;
    \begin{aligned}
        &\beta_* \xi_i = g^*\xi_i,\\
        &\text{for some } \beta \in \operatorname{End}_{R_i}(B_i)
    \end{aligned}
    \right\}.
\end{align*}
Thus:
\begin{enumerate}
    \item On a field block $\kk_i$, $\operatorname{Ext}^1_{R_i}(H_{1, (i)}, B_i)=0$ so all $\kk_i$-linear endomorphisms are measurable
    \begin{align*}
        \mathcal{M}_i = \operatorname{End}_{\kk_i}(H_{1, (i)}).
    \end{align*}
    \item On a local non-field block $(R_i, \mm_i)$ the class $\xi_i$ is an obstruction. However if the short exact sequence $\xi_i$ splits
    \begin{align*}
        \mathcal{M}_i = \operatorname{End}_{R_i}(H_{1, (i)})
    \end{align*}
    even if $R_i$ is not a field.
    Furthermore, scalar maps are always measurable as they always satisfy the condition. 
\end{enumerate}

\section{Additional codes}

\subsection{$\llbracket 882,48,16 \rrbracket$ code}\label{sec:ls-constructions-ghp}
Consider the $\llbracket 882, 48, 16\rrbracket$ code taken from Example B2 of Ref.~\cite{panteleev_degenerate_2021}.
There the code is defined as generalised hypergraph product code over 
\begin{align*}
    R = \Ft[x]/(x^{63}-1)
\end{align*}
by the $7 \times 7$ circulant matrix
\begin{align*}
    A = \begin{pmatrix}
        x^{27} &0 &0 & 1 & x^{18} & x^{27} & 1 \\
        1& x^{27} & 0&0 & 1 & x^{18} & x^{27} \\
        x^{27} & 1& x^{27} & 0&0 & 1 & x^{18} \\
        x^{18} & x^{27} & 1& x^{27} & 0& 0& 1 \\
        1& x^{18} & x^{27} & 1& x^{27} &0 &0 \\
        0&1& x^{18} & x^{27} & 1& x^{27} & 0 \\
        0&0&1& x^{18} & x^{27} & 1& x^{27}  \\
    \end{pmatrix},
\end{align*}
and the polynomial $b = (1 + x + x^6)$. 
Encoding the circulant shift matrix into the variable $y$, and adding the relation $y^7 - 1$, the circulant matrix $A$ can be written as:
\begin{align*}
    a(x, y)= x^{27} + y^{3} + x^{18}y^4 + x^{27}y^5  + y^6,
\end{align*}
and the code is described by the Koszul complex $K_\bullet(a, b)$ over 
\begin{align*}
    R=B/I, \quad B = \Ft[x, y], \quad I = (x^{63}-1, y^7-1)\, .
\end{align*}
As we have already seen several examples of scalar surgery analysis in the main text, we leave the details to the interested reader.

\subsection{$\llbracket 198, 8, 16 \rrbracket$ radial code}\label{sec:app-r311-code}
A quantum radial code~\cite{scruby_highthreshold_2026} defined over $R=\Ft[x]/(x^{11}-1)$ with protographs
\begin{align*}
    A =
    \begin{pmatrix}
    x^6 & x^4 & x^9 \\
    x^2 & x^5 & x^3 \\
    x^2 & x^9 & x^9
    \end{pmatrix},
    \qquad
    B =
    \begin{pmatrix}
    x^5    & 1   & x^4 \\
    1      & x^3 & x^9 \\
    x^{10} & x^4 & x^7
    \end{pmatrix}\, .    
\end{align*}
Like the $\llbracket 90, 8, 10 \rrbracket$ radial code, the homology is entirely supported on one completely fine block.

\end{document}